\newcommand{\Fcal}{\mathcal{F}}
\newcommand{\Acal}{\mathcal{A}}
\newcommand{\Scal}{\mathcal{S}}
\newcommand{\T}{\mathscr{T}}
\newcommand{\Vcal}{\mathcal{V}}
\newcommand{\Ccal}{\mathcal{C}}
\newcommand{\Lcal}{\mathcal{L}}
\newcommand{\Ib}{\mathbf{I}}
\newcommand{\bDelta}{\boldsymbol{\Delta}}
\newcommand{\bXi}{\boldsymbol{\Xi}}
\newcommand{\bUpsilon}{\boldsymbol{\Upsilon}}
\newcommand{\be}{\begin{equation}}
\newcommand{\ee}{\end{equation}}
\theoremstyle{definition}
\newtheorem{remark}{Remark}
\theoremstyle{definition}
\newtheorem{lemma}{Lemma}
\newtheorem{corollary}{Corollary}
\newtheorem{definition}{Definition}
\newtheorem{theorem}{Theorem}
\newtheorem{assumption}{Assumption}
\newtheorem{proposition}{Proposition}
\def\Pre{{\rm Pre}}
\def\Suc{{\rm Suc}}
\def\pre{{\rm pre}}
\def\suc{{\rm suc}}
\DeclareMathOperator*{\argmin}{arg\,min}
\providecommand{\keywords}[1]{\textbf{\textit{Index terms---}} \textbf{#1}}
\begin{document}
\newcounter{mytempeqncnt}
\title{Convergence Results on Pulse Coupled Oscillator Protocols in Locally Connected Networks}
\author{Lorenzo Ferrari, Anna Scaglione, Reinhard Gentz and Yao-Win Peter Hong
\IEEEcompsocitemizethanks{\IEEEcompsocthanksitem L. Ferrari, A. Scaglione and R. Gentz are with the School of Electrical, Computer and Energy Engineering, Arizona State University, Tempe,
AZ, 85281.\protect\\
\IEEEcompsocthanksitem Y.-W. P. Hong is with the Institute of Communications Engineering, National Tsing Hua University, Hsinchu, Taiwan, 30013\protect\\
}
\thanks{}}

\maketitle

\begin{abstract}
This work provides new insights on the convergence of a locally connected network of pulse coupled oscillator (PCOs) (i.e., a bio-inspired model for communication networks)
to synchronous and desynchronous states, and their implication in terms of the decentralized synchronization and scheduling in communication networks.
Bio-inspired techniques have been advocated by many as fault-tolerant and scalable alternatives to
produce self-organization in communication networks. The PCO dynamics in particular have been the source of inspiration for many network synchronization and scheduling protocols. However, their convergence properties, especially in locally connected networks, have not been fully understood, prohibiting
the migration
into mainstream standards. This work provides further results on the convergence of PCOs in locally connected networks and the achievable convergence accuracy under propagation delays. For synchronization, almost sure convergence is proved for $3$ nodes and accuracy results are obtained for general locally connected networks whereas, for scheduling (or desynchronization), results are derived for locally connected networks with mild conditions on the overlapping set of maximal cliques.
These issues have not been fully addressed before in the literature.
\end{abstract}

\keywords{
pulse coupled oscillator, locally connected networks, synchronization, desynchronization, scheduling.}

\IEEEpeerreviewmaketitle

\section{Introduction}

In 1975 Charles Peskin introduced the pulse coupled oscillator (PCO) model to explain the synchronization of pacemaker cells in heart tissues \cite{peskin}. Prior to that,  swarm synchronization among pulsing agents, such as pacemaker cells, was observed frequently in nature \cite{Buck_firefly} but could not be well explained mathematically.
Fifteen years later Mirollo and Strogatz in \cite{strogatz}  proved that fully connected networks of PCOs with excitatory coupling and convex dynamics always converge to fire at unison, except for a measure-zero set of initial conditions.  They also exhamined the case of inhibitory coupling, in which the oscillators emergent behavior turns into a uniformly spaced daisy-chain of pulsing activities among the agents, which can be viewed as a conflict-free schedule of the pulsing activities \cite{nagpal}.
In the early 2000s several groups recognized the applicability of PCO models for network synchronization  \cite{hong2005scalable,Campbell,Mathar,Timme,Motter,Izhikevich,Frigui,lee2008synchronizing,Barbarossa_Celano,Torikai, Nakano} as well as scheduling \cite{nagpal,round_robin_pagliari}, albeit less directly for the latter.
Typically, protocols using these models have a fairly simple signaling mechanism that couple the dynamics of the nodes transmission activities. In turn, these protocols help integrate
the physical and the medium access control layers with network synchronization (typically application layer) activities. Unfortunately, to this day, the impact of network connectivity on the convergence and on the accuracy of the convergent state compared to the desired synchronous state or schedule remains not fully understood.

While the convergence of PCOs to the synchronous state has been studied extensively in the literature, e.g., \cite{strogatz,Kuramoto199115,PhysRevLett.74.1570,PhysRevE.54.5522,2015arXiv150104115O}, little is known for the convergence in locally connected networks\cite{sync_cycle_Nunez,sparsely_connected_sync_Rothkegel,arxiv_review_17}, especially when propagation delays come into play.
The problem of establishing almost sure convergence for locally connected networks remains open, and has only been partially addressed in recent works by imposing additional assumptions on the update dynamics and the initial conditions of the oscillators' phases (see e.g \cite{arxiv_paper_review} which extends the analysis in \cite{arxiv_review_17,arxiv_review_32} for Phase Response Curves (PRC) maps of the {\it delay-advanced type} \cite{arxiv_review_31} and references therein).
In \cite{PCO_accuracy_delays}, a claim on the convergence of the synchronization for a line network was provided, but was only verified through numerical simulations.
Other works, such as \cite{lucarelli,werner2005firefly,degesys2008towards,nagpal,nagpal2}, looked at the asymptotic behavior considering very small coupling between oscillators and focused on the effect of different functions modeling the dynamics of the oscillators, which can be approximated by a continuous-time Kuramoto model \cite{Kuramoto199115}.
These models do not apply for scheduling algorithms \cite{nagpal,round_robin_pagliari} that in all regimes are known to not produce the desired emergent behavior, unless the network is fully connected.
In this work, we revisit the analysis of PCO based synchronization and scheduling over locally connected networks. Compared to the aforementioned literature, we want to highlight that:
(a) the dynamics we chose to analyze simplify the implementation both through digital as well as analog circuits \cite{apsel1,apsel2,apsel3}; (b) we do not need to assume that nodes can separate the signals fired by different nodes when the firing occurs at unison, which may happen when the nodes are close to synchrony (since the absorption property treats multiple interfering pulses as one) (c) our results  hold irrespective of the initial conditions.
Furthermore, the absorption property allows perfect synchronization to occur after some time $t$ when no delay exists; and, when there are delays,  it allows (thanks to the refractory period) the nodes phase difference to remain fixed and converge to a value that is bounded by the maximum sum of the propagation delays over any path in the network after some time $t$. In most of the competing models mentioned above, convergence occurs only asymptotically, as time goes to infinity. 
 Our main contributions are as follows: 1) we show that, for any positive coupling strength, the synchronous state is the unique fixed point for our model (see Section \ref{sec:PCO_conv}) and for a $3$ nodes locally-connected network, convergence to the synchronous state occurs almost surely; 2) we study the effect of propagation delays, and extrapolate the synchronization accuracy expected for more complex topologies and random delays (confirmed by numerical simulations in Section \ref{simulation_results}) by characterizing the set of fixed points for our model in the presence of delays; 3) we adapt the PCO-based scheduling scheme introduced in \cite{round_robin_pagliari} to locally connected networks and analyze its convergence in a general class of these networks, which includes both star and line networks (see Section \ref{sec:PFS_conv}).
The simulations in Section \ref{simulation_results} corroborate our claims. 
This work constitutes the analytical foundation for the design and implementation of the PCO-based distributed scheduling protocol presented in \cite{PulseSS_description} for multi-hop networks, where a comparison with other solutions is also presented. The key-strength of the protocol is that it offers a decentralized solution for two major problems in network communications: a) clock network distribution and b) channel resource allocation. It is then natural to study the attainable clock distribution accuracy when propagation delays come into play (see Section \ref{sec:PCO_delays}) and the equivalent capacity available to each node in the network for our scheduling mechanism. In Remark \ref{remark:coloring} we also discuss the connection between our algorithm's achievable schedule and the solution of the minimum coloring graph problem.
\subsection{Notation}
Unless specified otherwise, we represent matrices and vectors with capital bold-face letters (i.e., $\mathbf{B}$), we use calligraphic letters to indicate sets of nodes (i.e., $\mathcal{A}$), and the greek letters $\Phi$ and $\Psi$ for the internal clocks of the nodes. The letter $\pi$ is used to refer to an index after a permutation and the lowercase letters $\alpha,\beta,\delta$ are positive parameters of the coupling equations. When not used as a node index, $j=\sqrt{-1}$. The notation $t^+$ refers to an instant   after time $t$ when an event triggered a clock update.
We use suffixes to refer to a specific node, pair of nodes or a specific clique in Section \ref{sec:PFS_conv}.
Note also that in our treatment the time $t$ and the propagation delays are normalized by the PCO period ($T_{PCO}$) in Section \ref{sec:PCO_conv} and by the frame duration ($\T$) in Section \ref{sec:PFS_conv}.
\section{Pulse Coupled Synchronization Convergence}\label{sec:PCO_conv}
\begin{figure}[t]
\centering
\includegraphics[scale=0.4]{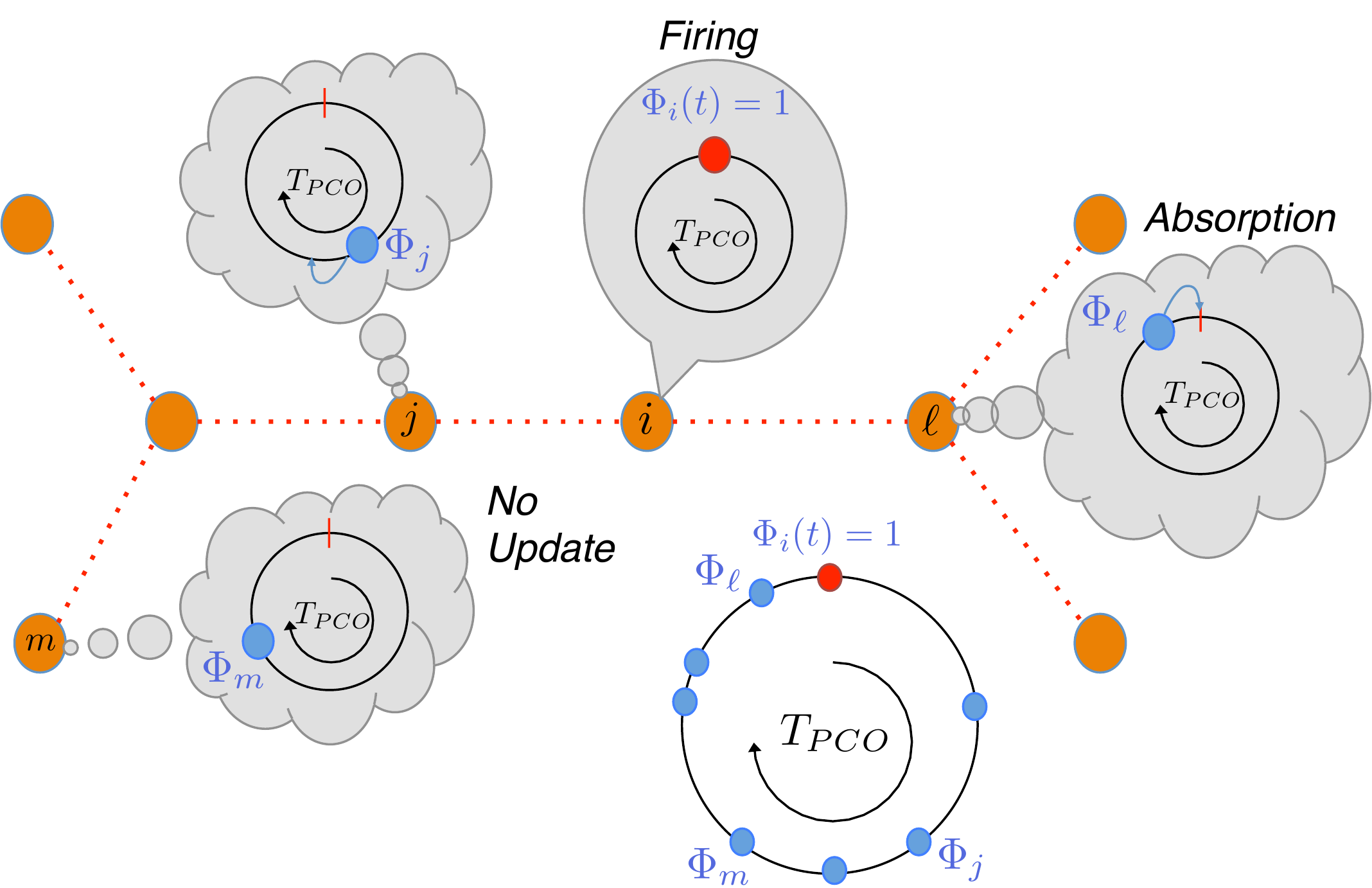}
\caption{Example of a locally connected network of PCOs
at the time node $i$ fires, triggering the phase update of nodes $\ell$ and $j$.
The location of the ball relative to the red mark on the top of each circle indicates the phase of each node, and a ball reaching the red mark indicates the occurrence of a firing.}\label{fig:PCO_initial_state}
\end{figure}
All PCO based algorithms rely on two common features: 1) the emission of beacon signals (or pulses) by each agent in the network and, 2) on the agents updates of their local timers (i.e., PCO clocks) upon reception of beacon signals from their neighbors. The emission of a beacon signal is referred to as the event of firing. An agent fires each time its local timer expires and, in this way, triggers its neighbors to adjust their local PCO clocks ahead, reducing the time until their next firing. 
The preamble signals commonly defined at the physical layer of communication systems can be used as the firing signals of PCO based algorithms, without additional overhead at any layer to support these protocols.

The timer at node $i$ can be modeled by the phase variable:
\begin{equation}
\Phi_i(t)= \frac{t}{T_{PCO}}+\phi_i \mod 1,
\end{equation}
where $\phi_i$ is the initial phase of the timer. When in isolation, each timer increases from $0$ to $1$ repeatedly with period $T_{PCO}$, which is normalized to $1$ without loss of generality.
When placed within the transmission range of each other, the firing of each node will trigger a phase update at any node that receives the firing.
In Peskin's leaky integrate-and-fire model, upon hearing node $i$'s firing at time $t_i$, node $j$ updates its local auxiliary state variable $X_j(t_i)=g(\Phi_j(t_i))$ by the amount $\epsilon$. The inverse mapping of the updated state variable then leads to a jump in the phase of the timer as follows:
\begin{equation}\label{eq:pco_general}
\Phi_j(t_i^+)=\min\left\{g^{-1}(X_j(t_i)+\epsilon),1\right\},
\end{equation}
where the constant $\epsilon$ is called the {\it coupling strength} and $t_i^+$ represents the time immediately following $t_i$.
The function $g$ is called the {\it PCO dynamic} and governs the behavior of the PCO network. It has been shown in \cite{strogatz} that, if $g$ is smooth, monotonically increasing, and concave down, then synchronization in a fully connected network of PCOs is guaranteed to occur, except for a set of initial conditions with measure zero.  More specifically, by choosing $g$ such that $g(x)=\log x$ and $\epsilon=\log(1+\alpha)$, \eqref{eq:pco_general}  equals \cite{Buck_firefly}:
\begin{equation}\label{eq:pco}
\Phi_j(t_i^+) = \min\left\{(1+\alpha)\Phi_j(t_i),1\right\}
\end{equation}
where $\alpha>0$ is the {\it excitatory} coupling factor.
Such choice for $g(x)$ is motivated by the convenience in the implementation of \eqref{eq:pco}, while convergence for the fully connected network is guaranteed in \cite{strogatz}. 
If the phase of node $j$ falls between $\frac{1}{1+\alpha}$ and $1$ at the time of firing by node $i$ (i.e., if $\Phi_j(t_i)\in(\frac{1}{1+\alpha},1]$), then the phase of node $j$ will become $1$ upon detection of the firing event of node $i$ (i.e., $\Phi_j(t_i^+)=1$) and will be triggered to fire immediately as well. The event is called the {\it absorption} of node $j$ by node $i$. In a fully connected network, the absorption between two nodes remains permanent and will continue to occur progressively between clusters of nodes until synchrony is attained.
In the next subsection we analyze the PCO synchronization with local connectivity of the network neglecting the propagation delays that will be included in \ref{sec:PCO_delays}.
\subsection{PCO Synchronization with Local Connectivity}\label{sec:PCO_connectivity}
Let $\cal G=(\cal V,\cal E)$ be an undirected graph that represents the network topology, not necessarily fully connected, and let $e_{ij}=1$, if $ij \in \cal E$, and $e_{ij}=0$ otherwise. We set $e_{ii}=0$ $\forall i \in \cal V$.
The update equation in \eqref{eq:pco} can be modified as
\begin{equation}\label{eq:update_local_connectivity}
\Phi_j(t_i^+) = \min\left\{(1+\alpha e_{ij})\Phi_j(t_i),1\right\}
\end{equation}
Let us define the vector $\boldsymbol{\Delta}(t)$ with entries:
\begin{equation}\label{eq:delta_1_definition}
\Delta_{ij}(t)=\min\{(\Phi_i(t)-\Phi_j(t))\!\!\!\!\!\!\mod{1},(\Phi_j(t)-\Phi_i(t))\!\!\!\!\!\!\mod{1}\}
\end{equation} 
for all $ij$ such that $e_{ij}=1$ (by definition we have $\Delta_{ij}(t)=\Delta_{ji}(t),~\forall t$). 
We then introduce the following: 
\begin{definition}\label{def:sync}
{\it 
A network $\cal G=(\cal V,\cal E)$ of PCOs is said to reach a fixed point at time $t^*$ if $\forall t>t^*$ we have $\bDelta(t)=\bDelta(t^*)$. 
If, in addition, we have that $\bDelta(t^*)=\mathbf{0}$ we say the network is synchronized (or has reached the synchronous state).
}
\end{definition}
We can now show: 
\begin{proposition}\label{PCO_no_delays_fixed_point}
{\it For a locally connected network of PCOs that follow the dynamics in \eqref{eq:update_local_connectivity} with $\alpha>0$, the synchronous state is the unique fixed point (as per Definition \ref{def:sync}) i.e.,  
\be \forall t>t^*, \bDelta(t)=\bDelta(t^*)\Leftrightarrow \bDelta(t^*)=\mathbf{0}\ee.}
\end{proposition}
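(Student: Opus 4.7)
The statement is an equivalence, so I would treat the two implications separately. The direction $\bDelta(t^*)=\mathbf{0}\Rightarrow \bDelta(t)=\bDelta(t^*)$ for all $t>t^*$ is essentially an invariance check: when all $\Phi_i(t^*)$ coincide, every oscillator reaches the firing threshold simultaneously, so whenever the update \eqref{eq:update_local_connectivity} is applied the updated node is itself at phase $1$; the update then returns $\min\{(1+\alpha)\cdot 1,1\}=1$, and the common reset to $0$ preserves $\bDelta=\mathbf{0}$.

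For the forward direction I would argue by contraposition: assume $\bDelta(t^*)\neq \mathbf{0}$ and exhibit a time $t>t^*$ at which $\bDelta(t)\neq\bDelta(t^*)$. Fix an edge $ij\in{\cal E}$ with $\Delta_{ij}(t^*)>0$. Because $\alpha>0$ only advances phases, every node continues to fire after $t^*$, and in particular $i$ fires at some $t_i>t^*$ at which $\Phi_j(t_i)=x\in(0,1)$. Equation \eqref{eq:update_local_connectivity} moves $\Phi_j$ to $\min\{(1+\alpha)x,1\}$; a case analysis on whether absorption occurs shows that $\Delta_{ij}(t_i^+)=\Delta_{ij}(t_i)$ can hold only at the isolated value $x=1/(2+\alpha)$ in the no-absorption branch, while absorption would collapse $\Delta_{ij}$ to $0$ and hence contradict $\Delta_{ij}(t^*)>0$. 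A symmetric analysis applied at the next firing of $j$ imposes an analogous constraint on $\Phi_i(t_j)$, and combining the two constraints with the free evolution of phases between firings yields $\Delta_{ij}(t^*)=0$, the desired contradiction.

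The principal obstacle I anticipate is that, between the consecutive firings of $i$ and $j$, both $\Phi_i$ and $\Phi_j$ are jolted by firings from their remaining neighbors, so the two single-edge invariance conditions must be composed through the intervening jumps rather than read off directly. To handle this I would first argue that a fixed point in the sense of Definition~\ref{def:sync} forces the entire firing schedule to be strictly periodic with a common period $T$ in which each node fires exactly once; integrating $\Phi_i$ over one cycle yields an identity of the form $T=1-\alpha \sum_{j}\Phi_j(t_i)$, independent of $i$, where the sum is over the neighbors of $i$. I would then propagate the edge-level constraint along a spanning tree of the (connected) graph, applying the per-edge argument at each tree edge, and conclude $\Delta_{ij}(t^*)=0$ for every edge of $\cal G$. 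The delicate step is the bookkeeping of these cross-firings that couples the per-edge constraints across the network, which is where I expect the real work of the proof to concentrate.
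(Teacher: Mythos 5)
Your backward direction matches the paper's and is fine. The issue is the forward direction. The paper's own proof is a single-firing argument: reduce to an adjacent pair with $\Delta_{ij}(t^*)>0$ by walking along a path, let $i$ fire while $\Phi_j=x\in(0,1)$, and claim that the update \eqref{eq:update_local_connectivity} strictly changes $\Delta_{ij}$. You have correctly noticed that this claim fails at exactly one value, $x=\frac{1}{2+\alpha}$: there $\Delta_{ij}$ equals $x$ before the jump and $1-(1+\alpha)x=x$ after it, so a single firing leaves it unchanged. That observation is actually sharper than the paper's own treatment, which asserts ``increase or decrease'' without exception.

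However, your proposed repair does not close the gap, and the key step as you state it is false. You claim that imposing the analogous constraint at the next firing of $j$ and ``combining the two constraints\dots yields $\Delta_{ij}(t^*)=0$.'' Compute the two-firing return map on an isolated edge: if $\Phi_j=\phi$ when $i$ fires, then $\Phi_i=1-(1+\alpha)\phi$ when $j$ next fires, and after $i$'s update and the free run one gets $R(\phi)=1-(1+\alpha)(1-(1+\alpha)\phi)$, whose fixed point is precisely $\phi^*=\frac{1}{2+\alpha}$. Along that orbit $\Delta_{ij}(t)\equiv\frac{1}{2+\alpha}$ for all $t$, including between firings, so the two per-edge constraints are simultaneously satisfiable with $\Delta_{ij}\neq 0$ --- this is the unstable antiphase orbit already present in Mirollo--Strogatz --- and no contradiction follows from the edge $(i,j)$ alone. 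The rest of your plan (strict periodicity of the firing schedule, the identity for $T$, propagation along a spanning tree) is only a plan: the ``bookkeeping of cross-firings'' you defer is exactly where all the content lies, since only interference from third nodes could destroy the $\frac{1}{2+\alpha}$ configuration, and that part is not carried out. As written, the forward implication is not established.
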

\vspace{-.2cm}
The proof is in Appendix \ref{proof_PCO_no_delays_fixed_point}. 
We also have the following proposition, proven in Appendix \ref{proof_PCO_conv_3_nodes}:
\begin{proposition}\label{PCO_conv_3_nodes}
{\it On any connected network of $|\mathcal{V}|=3$ PCOs following the dynamics in \eqref{eq:pco} with $\alpha>0$, convergence to the synchronous state ($\bDelta=\mathbf{0}$) occurs almost surely from any initial condition. In addition, if we attach a node with a random initial phase to a synchronized network with an arbitrary topology, the overall network will convergence to the synchronous state almost surely.}
\end{proposition}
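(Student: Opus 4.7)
The plan is to split the first claim into the two possible topologies of a connected 3-node graph, the triangle $K_3$ and the path $P_3$ with node~2 as the middle node, and then to reduce the second claim to a two-PCO interaction. For $K_3$ the network is fully connected, so the classical Mirollo--Strogatz result \cite{strogatz} applies directly and yields almost sure convergence from the initial phase distribution. The work therefore concentrates on $P_3$, which is the genuinely new content.

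For $P_3$ I would build a Poincar\'e return map $F$ at consecutive firings of the central node~2, parameterizing the state just after a firing by $(a,b):=(\Phi_1,\Phi_3)\in[0,1)^2$. This map is piecewise smooth, with the pieces indexed by (i)~whether $a$ and/or $b$ lies in the absorption interval $[1/(1+\alpha),1)$ and (ii)~the relative firing order of nodes~1 and~3 between two consecutive firings of node~2. On the absorbing pieces the number of distinct phases decreases, and the system either becomes fully synchronous in one step or reduces to a configuration in which two nodes temporarily co-fire. On the non-absorbing piece, Proposition~\ref{PCO_no_delays_fixed_point} already ensures that the only fixed point of $F$ is $(0,0)$ (full synchrony); to turn this uniqueness into almost sure convergence I would prove that $F$ is \emph{expansive} in a suitable metric on the non-absorbing piece, e.g.\ that its Jacobian has spectral radius strictly greater than one away from $(0,0)$. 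Expansion on a bounded domain forces the orbit of $F$ to exit the non-absorbing piece in finite time, so from almost every initial state the trajectory reaches an absorbing piece, and iterating the same argument on the resulting reduced configuration drives the system to full synchrony almost surely.

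For the second statement, let $\mathcal{N}$ be the already synchronized subnetwork and $A$ the attached node with random initial phase. Since all nodes of $\mathcal{N}$ fire simultaneously and the absorption property lumps simultaneous pulses into a single effective one, the interaction reduces to a two-PCO coupling between $A$ and $\mathcal{N}$ as long as $\mathcal{N}$ remains synchronized. The only way $\mathcal{N}$ can desynchronize is through a firing of $A$ whose kick on its neighbors in $\mathcal{N}$ falls short of the absorption threshold; in that event the neighbors of $A$ fire just before the rest of $\mathcal{N}$, and, by invoking Proposition~\ref{PCO_no_delays_fixed_point} together with a contraction argument of the same flavor as in the $P_3$ analysis, $\mathcal{N}$ returns to a synchronous block in finite time. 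With $\mathcal{N}$ re-synchronized, the two-PCO picture resumes and the Mirollo--Strogatz argument on the reduced two-node system produces absorption of $A$ almost surely.

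The main obstacle is establishing the expansiveness of $F$ on the non-absorbing piece in the $P_3$ analysis: this requires an explicit case split according to the firing order of nodes~1 and~3 within one return cycle, writing down the local form of $F$ on each sub-piece, computing its Jacobian and verifying that the spectral radius strictly exceeds one uniformly. A closely related technical ingredient is needed to control the transient desynchronization of $\mathcal{N}$ in the attached-node part, where the arbitrariness of the topology forbids a direct appeal to Mirollo--Strogatz on $\mathcal{N}$ itself.
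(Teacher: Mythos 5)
Your overall architecture mirrors the paper's: dispose of $K_3$ via Mirollo--Strogatz, isolate the path graph as the new content, argue that expansion on the non-absorbing region forces an absorption, handle temporary splitting/re-merging of co-firing nodes, and finish by reducing to an equivalent two-node fully connected system. However, the step you yourself flag as the main obstacle is where the plan has a genuine logical gap, not just unfinished computation. Pointwise spectral radius of the Jacobian exceeding one does \emph{not} imply that orbits of a piecewise map leave a bounded invariant piece: the expanding eigendirection can vary from piece to piece, and without a uniformly expanding cone or a single scalar that provably grows along the orbit, no escape conclusion follows. The paper avoids this entirely by identifying the right one-dimensional quantity: with the center node labeled $1$ and $\Xi_{ij}=(\Phi_i-\Phi_j)\bmod 1$, the gap $\Xi_{13}-\Xi_{12}$ between the two leaves is multiplied by exactly $(1+\alpha)$ at every non-absorbing firing of the center and is left unchanged by the leaves' firings, so it grows by a uniform factor per round while being bounded by $1$ --- absorption must occur. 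Your Jacobian computation would have to rediscover precisely this invariant direction and track the component of the state along it; merely checking the spectral radius on each piece does not close the argument.

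A second gap is in your treatment of the firing-order cases. In the paper, the overtaking case (the center node jumping past a leaf after the other leaf fires) is not resolved by expansion at all: one shows by a direct inequality that an overtake can only happen when $\Xi_{13}[k]+\alpha\Xi_{12}[k]\geq 1$, which forces the overtaken leaf to land within $\frac{\alpha}{1+\alpha}$ of the center and hence to be absorbed at the center's very next firing. This is a separate combinatorial/algebraic argument that would not fall out of a Jacobian estimate restricted to the ``non-absorbing piece,'' because overtaking is exactly a transition between pieces. Similarly, for the attached-node claim, you invoke ``a contraction argument'' to restore synchrony of the subnetwork $\mathcal{N}$, but the paper's mechanism is a one-round cascade: the new node's firing advances only its neighbors, and before any further extraneous event those neighbors fire and re-absorb (directly or in a cascade) the rest of $\mathcal{N}$, restoring the synchronous block within a single firing round rather than asymptotically. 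Your plan is salvageable, but it needs the explicit scalar expansion quantity, the explicit overtaking inequality, and the one-round re-absorption argument rather than generic expansiveness/contraction appeals.
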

The second part of our proposition provides the practical insight that if protocol that allows nodes to join one by one during system setup starting from an arbitrary group of three nodes, then almost sure convergence is guaranteed.
Note that we define almost sure convergence for the case $|{\cal V}|=3$ as done in \cite{strogatz}, where we have convergence to synchronization (i.e. to the fixed point $\bDelta=0$) except for a measure zero set of initial conditions. The proof focuses on the case not covered by \cite{strogatz}, of the line network with nodes $\{1,2,3\}$ and edges $\{(1,2),(1,3)\}$.
\subsection{PCO Synchronization with Local Connectivity and Delays}\label{sec:PCO_delays}
To account for the propagation delays, which include the signal duration, travel time, processing time etc., we define:
\begin{equation}
r_{ij}=t_i+\tau_{ij}
\end{equation}
where $t_i$ is the time node $i$ fires and $\tau_{ij}$ is the delay (expressed in time units equal to the $T_{PCO}=1$). If node $j$ is not in node $i$'s neighbourhood (i.e. $e_{ij}=0$), $\tau_{ij}=0$ and $r_{ij}=t_i$, otherwise $r_{ij}$ represents the time node $j$ is aware of node $i$'s firing.
We assume that all propagation delays are shorter than the PCO period, i.e. $\forall ij \in \mathcal{E}, \tau_{ij}<1$ and that the delays are symmetric, i.e. $\tau_{ij}=\tau_{ji}$. 
In the presence of these propagation delays, PCO protocols cannot converge unless they include a {\it refractory period} \cite{peskin}, i.e. a portion of the cycle, right after their firing event, during which the node does not update its phase.
Notice that, in the absence of the refractory period, the firing of a node may trigger the neighboring node to fire right after the propagation delay, causing the node that originally fired to update after a roundtrip propagation delay from its initial firing event. Let $\rho$ be the duration of the refractory period; this is the so-called {\it echo} effect which can be avoided if: 
\begin{equation}\label{rho}
\rho>2\max \tau_{ij}.
\end{equation}
The update equation in the presence of delays is:
\begin{equation}\label{delay}
\Phi_j(r_{ij}^+)\!
=\!\!\left\{\!\!\!
\begin{array}{l l}
\min\{(1+\alpha e_{ij})\Phi_j(r_{ij}),1\},& \rho
<\!\Phi_j(r_{ij})\!\!\!\!\mod{1} \\
\Phi_j(r_{ij})\!\!\!\!\mod{1},& \mbox{else}.
\end{array}
\right.
\end{equation}
Note that, if $e_{ij}=1$:
\begin{equation}\label{eq:delayed_phase}
\Phi_j(r_{ij})=\Phi_j(t_{i})+\tau_{ij}\pmod{1}
\end{equation}
is the value of the clock phase of node $j$ at the time it detects node $i$'s firing. In this case, $\tau_{ij}$ can be viewed as an additive timing error and the update can be written as follows:
\begin{equation}\label{delay2}
\Phi_j(r_{ij}^+)\!=\min\{(1+\alpha\hat{e}_{ij}(t_i))(|\Phi_j(t_i)+\tau_{ij}|\mod{1}),1\}
\end{equation}
where $\hat{e}_{ij}(t_i)$ is defined as:
\begin{equation}\label{indicator}
\hat{e}_{ij}(t_i)=\begin{cases} 1 ~ &\mbox{if}~ e_{ij}=1~\mbox{and}~|\Phi_j(t_i)+\tau_{ij}|\mod 1>\rho\\
0 &\mbox{else}
\end{cases}
\end{equation}
and can be seen as the element of a time varying adjacency matrix.
We then can prove the following:

\begin{proposition}\label{fixed_points_theorem_delays}
{\it
For deterministic $\tau_{ij}<+\infty~\forall i,j$, if we include a refractory period $2\max \tau_{ij}\leq\rho<\frac{1}{2}+\min\tau_{ij}$, we have that for any locally connected network of PCOs following the dynamics in \eqref{eq:update_local_connectivity}: 
\be 
\forall t>t^*, \bDelta(t)=\bDelta(t^*)\Leftrightarrow \bDelta(t^*)\in\Fcal
\ee 
where 
\be 
\Fcal\triangleq\{\bDelta: 0\leq \Delta_{ij}\leq \tau_{ij}~~~\forall i,j~~\mbox{s.t.}~~ e_{ij}=1\}
\ee 
represents the set of possible fixed points for the algorithm.
}
\end{proposition}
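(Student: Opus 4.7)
The plan is to translate the \emph{invariance} of $\bDelta(t)$ for $t>t^*$ into a per-edge non-triggering condition, and then check this condition against the two halves of the hypothesis on $\rho$. Between firings, each phase evolves as $t+\phi_i \mod 1$ with a common unit rate, so the pairwise differences $\Delta_{ij}$ defined in \eqref{eq:delta_1_definition} are invariant under free evolution; they are also invariant at a firing instant when the firing node's phase wraps from $1$ to $0$, because this wrap is absorbed inside the $\mod 1$ in the definition of $\Delta_{ij}$. The only way $\bDelta$ can change is therefore for some node $j$ with $e_{ij}=1$ to receive $i$'s firing with $\Phi_j(r_{ij})>\rho$, triggering the multiplicative branch of \eqref{delay}. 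Consequently, $\bDelta(t^*)$ is a fixed point if and only if, for every edge $ij\in\Ecal$ and for the periodic firings of both endpoints, the received phase lies in $[0,\rho]$.

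I would then do a per-edge computation. Fix an edge $ij$ and, without loss of generality within one period, let $i$ fire at $t_i$ and $j$ fire at $t_j=t_i+\Delta_{ij}$, where $\Delta_{ij}\in[0,1/2]$ by definition. Two regimes arise. \textbf{Regime 1} ($\Delta_{ij}\le\tau_{ij}$): $j$ has already wrapped before $i$'s pulse reaches it, so $\Phi_j(r_{ij})=\tau_{ij}-\Delta_{ij}$; symmetrically $\Phi_i(r_{ji})=\Delta_{ij}+\tau_{ij}$. Both are bounded by $2\tau_{ij}\le 2\max_{k\ell}\tau_{k\ell}\le\rho$, so neither firing triggers an update. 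This yields the inclusion $\Fcal\subseteq\{\text{fixed points}\}$. \textbf{Regime 2} ($\Delta_{ij}>\tau_{ij}$): $j$ has not yet fired when $i$'s pulse arrives, so $\Phi_j(r_{ij})=1-\Delta_{ij}+\tau_{ij}$, while $\Phi_i(r_{ji})=\Delta_{ij}+\tau_{ij}$. Requiring both to lie in $[0,\rho]$ forces the simultaneous bounds
\begin{equation}
1+\tau_{ij}-\rho\le\Delta_{ij}\le\rho-\tau_{ij},
\end{equation}
which is consistent only if $\rho\ge\tfrac{1}{2}+\tau_{ij}$. But by hypothesis $\rho<\tfrac{1}{2}+\min_{k\ell}\tau_{k\ell}\le\tfrac{1}{2}+\tau_{ij}$, a contradiction. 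Hence no fixed point has $\Delta_{ij}>\tau_{ij}$, giving the reverse inclusion.

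The argument is per-edge, but this is enough globally: since phase differences on distinct edges only interact through updates, and the condition $\Delta_{ij}\le\tau_{ij}$ on every edge rules out every update event in the whole network, the two inclusions combine to give the claim. The main obstacle is the careful bookkeeping of who fires first within a period and whether the receiver's phase has wrapped before the pulse arrives, so that the Regime 1/Regime 2 dichotomy at the boundary $\Delta_{ij}=\tau_{ij}$ is handled consistently with the weak inequality in the definition of $\Fcal$. It is also worth observing that each half of the hypothesis on $\rho$ is used in exactly one direction: the lower bound $\rho\ge 2\max\tau_{ij}$ makes Regime 1 refractory (sufficiency), while the upper bound $\rho<\tfrac{1}{2}+\min\tau_{ij}$ kills Regime 2 (necessity), which is why both are needed.
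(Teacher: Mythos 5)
Your proof is correct and follows essentially the same route as the paper's: sufficiency by checking that when $\Delta_{ij}\le\tau_{ij}$ on every edge each received pulse lands inside the refractory window (using $\rho\ge 2\max\tau_{ij}$), and necessity by showing that $\Delta_{ij}>\tau_{ij}$ forces a received pulse to land outside it (using $\rho<\tfrac{1}{2}+\min\tau_{ij}$). The one point the paper treats explicitly that you leave implicit is the possibility of intermediate firing events between $t_i$ and $r_{ij}$ perturbing $\Phi_j$ away from its free-evolution value; this is harmless because node $i$ sits inside its refractory window throughout that interval, so any such forward jump of $j$ would itself change $\Delta_{ij}$ and contradict the fixed-point hypothesis, but it is worth stating.
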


The proof can be found in Appendix \ref{proof_theorem_delays} where the upper-bound for $\rho$ is also discussed. A direct consequence of this proposition is that in order to have a possible choice for the refractory period $\rho$ we need  $\max_{ij}\tau_{ij}<\frac{1}{2}+\min_{ij}\tau_{ij}$
The convergence of the protocol is represented in Fig. \ref{fig:PCO_convergence_state} for the same topology as in Fig. \ref{fig:PCO_initial_state}.
\begin{figure}
\centering
\includegraphics[scale=0.4]{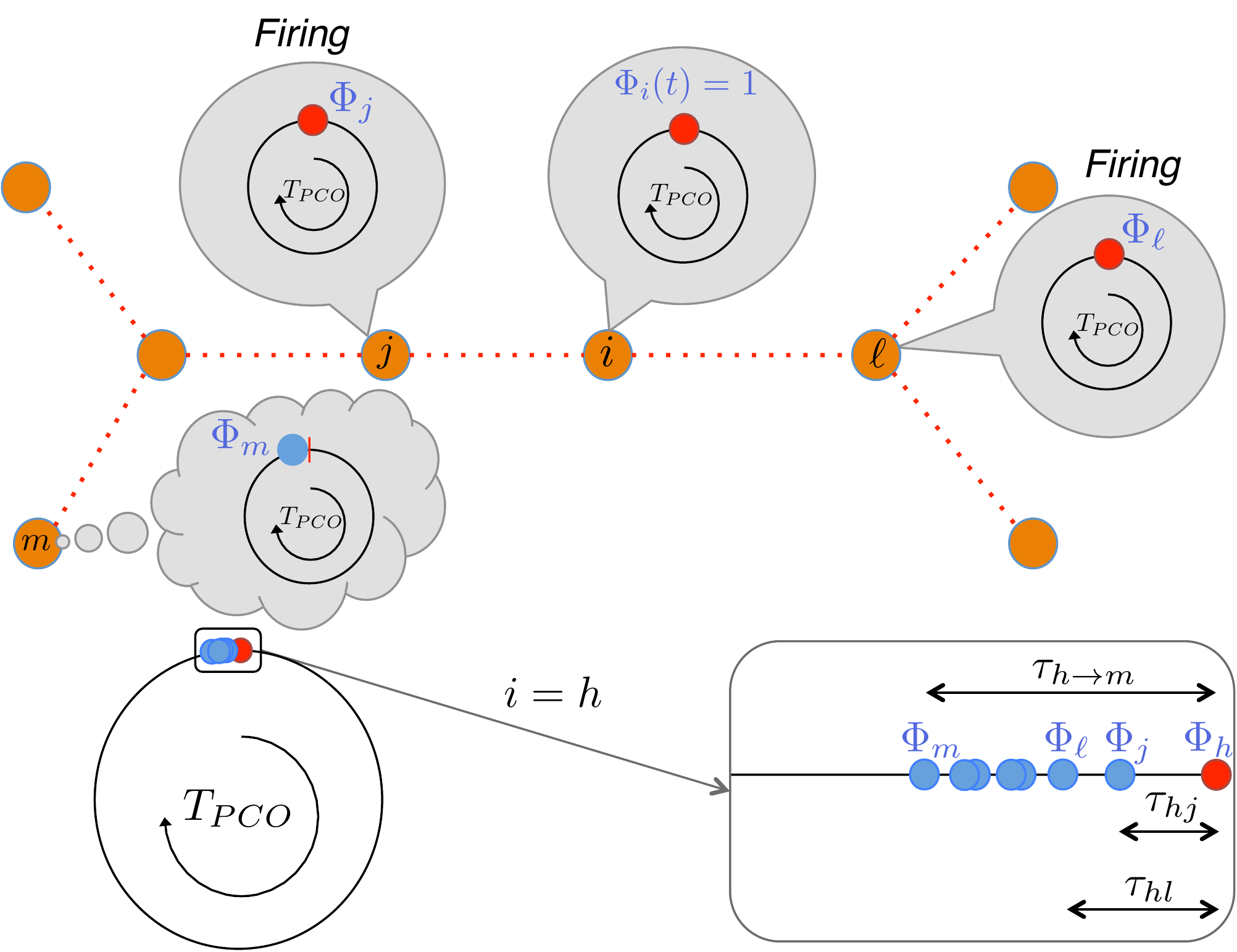}
\caption{Convergence of the PCO protocol as stated in Proposition \ref{fixed_points_theorem_delays} for the network topology of Fig.\ref{fig:PCO_initial_state} with $i$ as the \textit{head} node.}\label{fig:PCO_convergence_state}
\end{figure}
Let $\mathcal{P}_{ij}$ be the set of edges forming the shortest path between node $j$ and node $i$. We can define the accumulated propagation delay on the path from $i$ to $j$:
\begin{equation}
\tau_{i\rightarrow j}=\sum_{\ell m \in \mathcal{P}_{ij}}\tau_{\ell m}
\end{equation}
where $\tau_{\ell m}$ is the time that has elapsed between the actual firing by node $\ell$ and the observation of the firing by node $m$. Clearly $\tau_{i\rightarrow j}=\tau_{ij}$ if $e_{i,j}=1$, i.e., if node $j$ can directly hear the firing of node $i$.
Then we notice that, as long as $\max_{ij}\tau_{i\rightarrow j}<\frac{1}{2}$ for any fixed points in $\Fcal$ it is possible to consider a node $h$ we name the {\it 
head} (not necessarily unique) such that  its minimum distance $\Delta_{hj}$ defined in \eqref{eq:delta_1_definition} $\forall j \in\Vcal$ is:
\be\label{eq:property_head_node}
\Delta_{hj}=(\Phi_h-\Phi_j)\mod{1}
\ee 
since the node is ahead of every other node. We indicate this condition by saying the \textit{head} node ``preceeds all the other nodes".
Since we have $\bDelta\in\Fcal$ it is clear we have the following upper-bound:
\begin{equation}\label{eq:phi_error}
\bDelta_{hj}\leq\tau_{h\rightarrow j},
\end{equation}
Our simulation shows that the bound in \eqref{eq:phi_error} is actually tight, and that is due to the fact that when the initial phases are spread around the PCO cycle, absorptions tend to occur in a cascade and each $j$ node that is absorbed by node $i$ remains at distance $\tau_{ij}$.
The reason why Proposition \ref{fixed_points_theorem_delays} has an inequality instead of an equality is  if the initial conditions are such that two nodes are closer than their propagation delay they can remain at that closer distance relative to their propagation delay, due to the presence of the refractory period that makes all these cases fixed points.
The residual synchronization error can be defined as:
\begin{equation}
\Delta_{\max}=\max_{i,j}\Delta_{ij}.
\end{equation}
At this point, as a direct consequence of the bound \eqref{eq:phi_error} and the property of the head-node in \eqref{eq:property_head_node} we can derive the following expression for the expected residual synchronization error:
\begin{equation}\label{residual_sync_error}
\mathbb{E}\{\Delta_{\max}\}\leq\sum_{h=1}^{N}p_h\left(\max_j{\tau_{h\rightarrow j}}\right)
\end{equation}
where $p_h$ indicates the probability that $h$ is the {\it head} node which depends on the topology and the initial conditions. Although a general characterization of $p_h$ is complex, the expression in \eqref{residual_sync_error} allows us to bound the residual synchronization error considering the best and the worst case scenario for the term $\max_j{\tau_{h\rightarrow j}}$, which is immediately derivable from the topology. A note of caution is that the term {\it best case} indicates the smallest possible bound over all possible choices of the head node, and not the best attainable synchronization error, since in principle the synchronous state $\bDelta=\mathbf{0}$ is a fixed point also for the model with delays.
$\mathbb{E}\{\Delta_{\max}\}$ is the metric we consider in the simulations to evaluate the performances of a line and a star network of fixed length and increasing density of nodes.
The expected value $\mathbb{E}\{\Delta_{\max}\}$ is an interesting metric to characterize the protocol performance and may be more insightful compared to the worst and the best case scenario, directly computable through our analysis in this section, because the latter remain identical over a wide variety of networks while $\mathbb{E}\{\Delta_{\max}\}$ changes. 
\begin{remark}
If the propagation delays are random and bounded by $\tau_{max}$, the result continues to apply as long as $\rho>2\tau_{max}$.
The fixed point is still compatible with the last realization of random delays characterizing all absorptions until the last.
\end{remark}
 The direct consequence of Proposition \ref{fixed_points_theorem_delays} is that in multi-hop networks, propagation delays tend to accumulate worsening the overall synchronization accuracy. This limits the application of PCO as a clock distribution mechanism in very large networks. To overcome this problem, we propose in \cite{PulseSS_description} to couple the synchronization and scheduling with the purpose of separating firing events to give each node the possibility to estimate the propagation delays $\tau_{ij}$ and compensate for them in their updates, thus improving the final synchronization accuracy which will be bounded by the cumulative error in these estimates.
\section{Pulse Coupled Scheduling Convergence}\label{sec:PFS_conv}
In addition to achieving synchronization, the update equation in \eqref{eq:pco} can also be modified to trigger a daisy chain of transmissions from the nodes and attain desynchronization. For this purpose, it is necessary to choose $\alpha<0$, which in this case is called \textit{inhibitory} coupling.
In the case of inhibitory coupling, nodes are propelled to increase their phase differences, so absorptions cannot occur. For a fully connected network, it has been shown in \cite{round_robin_pagliari} that the nodes phases converge to a limit cycle in which the time between adjacent firing events is a constant; this state is named {\it weak desynchronization}, since the firing times continue to shift.
For locally connected networks, however using PCO with inhibitory coupling does not lead to global desynchronization (in a weak or strict sense).
In this part, we are then going to explore the behavior of the scheduling algorithm introduced in \cite{round_robin_pagliari} (different from \eqref{eq:pco}) and discuss the convergence for locally connected topologies.
We will consider a generic connected multiclique graph, where a clique might or not have nodes in common with another clique. Note that any graph can be viewed as a multiclique graph.
We are interested in enforcing desynchronization between nodes in the same clique because this will lead to a TDMA schedule, enabling nodes to transmit in different portion of the time frame avoiding, ideally, collisions and loss of data.
Achieving TDMA scheduling in a decentralized fashion is very complex but also extremely appealing for many emerging applications of wireless sensor networks (the so called the Internet of Things (IoT)), where a dramatic growth in the number of sensors makes centralized TDMA solution impractical. 
Our solution is not only able to assign different portion of the frame to the network nodes', but also leaves white spaces in the frame to enable a different asynchronous system, or new nodes joining the network, to find transmission opportunities. 
This could useful for the coexistence of different protocols. 
\subsection{From PCO desynchronization to Proportional Scheduling}
The protocol considered here (and previously proposed in \cite{round_robin_pagliari}) relies on the evolution of two local continuous timers $(\Phi_i(t),\Psi_i(t))$ for each node $i$, indicating the start and the end of its assigned portion of the frame.
The two timers $\Phi_i(t)$ and $\Psi_i(t)$ can be viewed as the phase of two PCOs that evolve from $0$ to $1$ repeatedly with period equal to the frame duration $\T$. Signal messages (i.e., pulse firings) are emitted by node $i$ at the time instants corresponding to $\Phi_i(t)=1$ and $\Psi_i(t)=1$ to indicate the start and end times of its scheduled duration.
In this section we focus on the algebraic description of the algorithm and ignore issues regarding the acknowledgment of signals and the compensation of propagation delays. The practical treatment of these issues can be found in \cite{PulseSS_description}.

A locally connected network consists of multiple complete subgraphs, called cliques. Let $\cal C$ be the set of maximal cliques, i.e., cliques that cannot be made larger by including any additional node. Moreover, let ${\mathcal V}_c\subset{\mathcal V}$ be the set of nodes contained in clique $c\in\cal C$ and, for each $i\in\mathcal{V}$, we define $\mathcal{C}_i\triangleq \{c\in\mathcal{C}: i\in\mathcal{V}_c\}$ as the set of cliques node $i$ belongs to.
A clique can be viewed in practice as an abstraction of nodes in a cluster that are connected through the relaying of signals by their cluster head (CH) \footnote{In wireless networks, a multiclique graph can be obtained by leveraging the presence of CHs.
The CH is connected with a set of nodes and for that set it acts like a bridge, re-broadcasting the firing signals from the single nodes to all the nodes it has in its communication range: in this way, the CH basically emulates an all-to-all connected network among the nodes it communicates with, generating a clique in the graph topology of our network.}.
We denote the set of {\it shared} (or {\it gateway}) {\it nodes} ${\cal S}_{cc'}$, i.e. nodes that belong to the two cliques $c$ and $c'$ (mathematically ${\cal S}_{cc'}={\cal V}_c \cap {\cal V}_{c'}$) and we indicate with $\Lcal_c$ the set of {\it local} nodes for clique $c$, i.e. nodes that only belong to that clique (formally $\Lcal_c=\Vcal_c\setminus\bigcup_{c'\in\Ccal}\Scal_{cc'}$).

Moreover, we assume that the initial values of the timers $\{(\Phi_i(t_0), \Psi_i(t_0))\}_{i\in{\cal V}}$ are chosen such that the collision avoidance condition is satisfied, that is, for any $c \in \mathcal{C}$ and $i,j \in \mathcal{V}_c$:
\begin{equation}\label{collision_avoidance_condition}
\Phi_i(t_0) - \Psi_i(t_0)\leq \Phi_i(t_0) - \Phi_j(t_0)
\end{equation}
where the above operations are modulo $1$.
In this case, we can denote by $\pi_k^c(t)$ the $k$th index at time $t$ of the permutation of the nodes' indices that sorts the phase variables of the nodes in ${\cal V}_c$ in descending order at time $t$ i.e., in the order such that
$$\Phi_{\pi_1^c(t)}(t)>\Phi_{\pi_2^c(t)}(t)>\cdots>\Phi_{\pi^c_{|\mathcal{V}_{c}|}(t)}(t).$$
In the following, we shall omit the time index $t$ in $\Phi_{\pi_k(t)}(t)$ whenever its dependence on $t$ is clear.
For this algorithm, as it will be clear later, the firing order does not change over time.
To simplify the notation, let us consider two functions $\text{pre, suc}: {\mathcal V}\times{\mathcal C}  \rightarrow{\mathcal V}$, defined by:
\begin{align}
\text{pre}(i,c)&=\pi_{k-1}^c~~\in {\mathcal V}_c\label{eq:pre_i_c}\\
\text{suc}(i,c)&=\pi_{k+1}^c~~\in {\mathcal V}_c\label{eq:suc_i_c}
\end{align}
for all $c\in{\cal C}$ and for all $i\in{\cal V}_c$ such that
$i=\pi_k^c$, where $\pi_{0}^c=\pi_{|{\cal V}_c|}^c$ and $\pi_{|{\cal V}_c|+1}^c=\pi_{1}^c$ (the above quantities are not defined if $i \notin \mathcal{V}_{c}$).
Here, pre($i,c$) and suc($i,c$) represent the nodes in ${\cal V}_c$ that produce a firing event (the expiraton of one of the two timers) immediately before and after the firing of the start and the end timers of node $i$ (see Fig.\ref{fig:PFS_initial_state}-\ref{fig:PFS_convergence_state}-\ref{fig:shared_node_update}). Moreover, since node $i$ may belong to more than one clique, it is necessary to define
\begin{align}
\text{Pre}(i,t)&=\text{pre}\left(i,\argmin_{c'\in\mathcal{C}_i}\{\Psi_{\pre(i,c')}(t) - \Phi_i(t) \}\right)\label{eq:Pre_def}\\
\text{Suc}(i,t)&=\text{suc}\left(i,\argmin_{c'\in\mathcal{C}_i}\{  \Psi_i(t) - \Phi_{\suc(i,c')}(t) \}\right)\label{eq:Suc_def}
\end{align}
as the two nodes (not necessarily in the same clique) which transmit immediately before and after node $i$ among all the ones in cliques node $i$ belongs to.
Here, the phase differences are  modulo $1$.
Notice that we need node $i$ to have no conflicts with all the nodes in the cliques that it belongs to, which can be more than one.
The identification of these two nodes is fundamental for the bio-inspired procedure we are going to introduce in the next section.\footnote{To discriminate the firing times of $\Pre({\pi_k^c})$  from the others, node ${\pi_k^c}$ can prepare an update for any firing it hears and discard the update if a more recent firing event is registered.  $\Suc({\pi_k^c})$ is easy to identify since is the first to fire after the expiration of node ${\pi_k^c}$'s own end clock. The value of $\Psi_{\Pre({\pi_k^c})}(t)$, which is the reference for node ${\pi_k^c}$ to make its update, can be calculated simply measuring the time it elapsed between the firing of  $Pre({\pi_k^c})$  and the local clock. Hence all the information needed to advance the protocol is implicitly available and firing beacons do not need to carry data, but rather can be special preambles that are easy to detect at the PHY layer  \cite{PulseSS_description}.\label{footnote3}} 
\subsection{Scheduling procedure}
Different from the coupling mechanism used for synchronization, in which any firing event triggers updates by all nodes in range, in the scheduling algorithm, an update for every node $i$ happens just once every round (i.e. every expiration of the local timers) and only in response to the firing signals of the two nodes, $\text{Pre}(i,t)$ and $\text{Suc}(i,t)$, whose timers expire just before and after node $i$'s own local start and end timers. Here we denote the frame duration by ${\T}$ which is also the time unit with which we measure time. 
In the rest of the article we will omit the dependence on time of $\Pre(i)$ and $\Suc(i)$ since the time instant is always indicated in the equations where they appear (see also footnote \ref{footnote3} in the previous page).
Note that the time at which the firing of $\text{Pre}(i)$ is received is stored by node $i$, but the actual update takes place after $\text{Suc}(i)$ firing event is also detected. We refer to this time instant as $t_i$ (assuming then $\Phi_{\Suc(i)}(t_i)=0$).
To perform the update, node $\pi_k^c$ first evaluates a target value for its two local timers $(\Phi_{\pi_k^c}^*,\Psi_{\pi_k^c}^*)$ as:
\begin{align}
\Phi_{\pi_k^c}^*(t_{\pi_k^c})&=\frac{D_{\pi_k^c}\!+\!\delta}{D_{\pi_k^c}\!+\!2\delta}\Psi_{\Pre({\pi_k^c})}\!(t_{\pi_k^c})\label{eq.Ntarget_s}\\
\Psi_{\pi_k^c}^*(t_{\pi_k^c})&=\frac{\delta}{D_{\pi_k^c}\!+\!2\delta}\Psi_{\Pre({\pi_k^c})}\!(t_{\pi_k^c})\label{eq.Ntarget_e}
\end{align}
where $D_{\pi_k^c}$ is a parameter chosen to reflect the demand of node $\pi_k^c$ in a frame and $\delta$, equal for every node, represents the relative portion of the frame that should be reserved as guard period between transmission of different nodes.
The new phase of these timers are set to be a convex combination of their respective target values and current phases, i.e.,
\begin{align}
\Phi_{\pi_k^c}(t_{\pi_k^c}^+)&=(1-\beta) \Phi_{\pi_k^c}(t_{\pi_k^c}) + \beta \Phi_{\pi_k^c}^*(t_{\pi_k^c})\label{eq:s_update}\\
\Psi_{\pi_k^c}(t_{\pi_k^c}^+)&=(1-\beta) \Psi_{\pi_k^c}(t_{\pi_k^c}) + \beta \Psi_{\pi_k^c}^*(t_{\pi_k^c})
\label{eq:e_update}
\end{align}
with $\beta\in(0,1)$. In light of equations \eqref{eq:s_update}-\eqref{eq:e_update} and the definition of $\Pre(i,t)$ and $\Suc(i,t)$ in \eqref{eq:Pre_def}-\eqref{eq:Suc_def} there is no overlap of local timers in each clique, i.e. the firing order in each clique remains fixed (see also Fig.\ref{fig:shared_node_update} in Appendix \ref{proof_PFS_convergence}).
The update procedure for node $i$ at the time of firing of node $j=\text{Suc}(i)$ is illustrated in Fig.\ref{fig:PFS_initial_state} for a topology with two cliques.
\begin{figure}[t]
\centering
\includegraphics[scale=0.38]{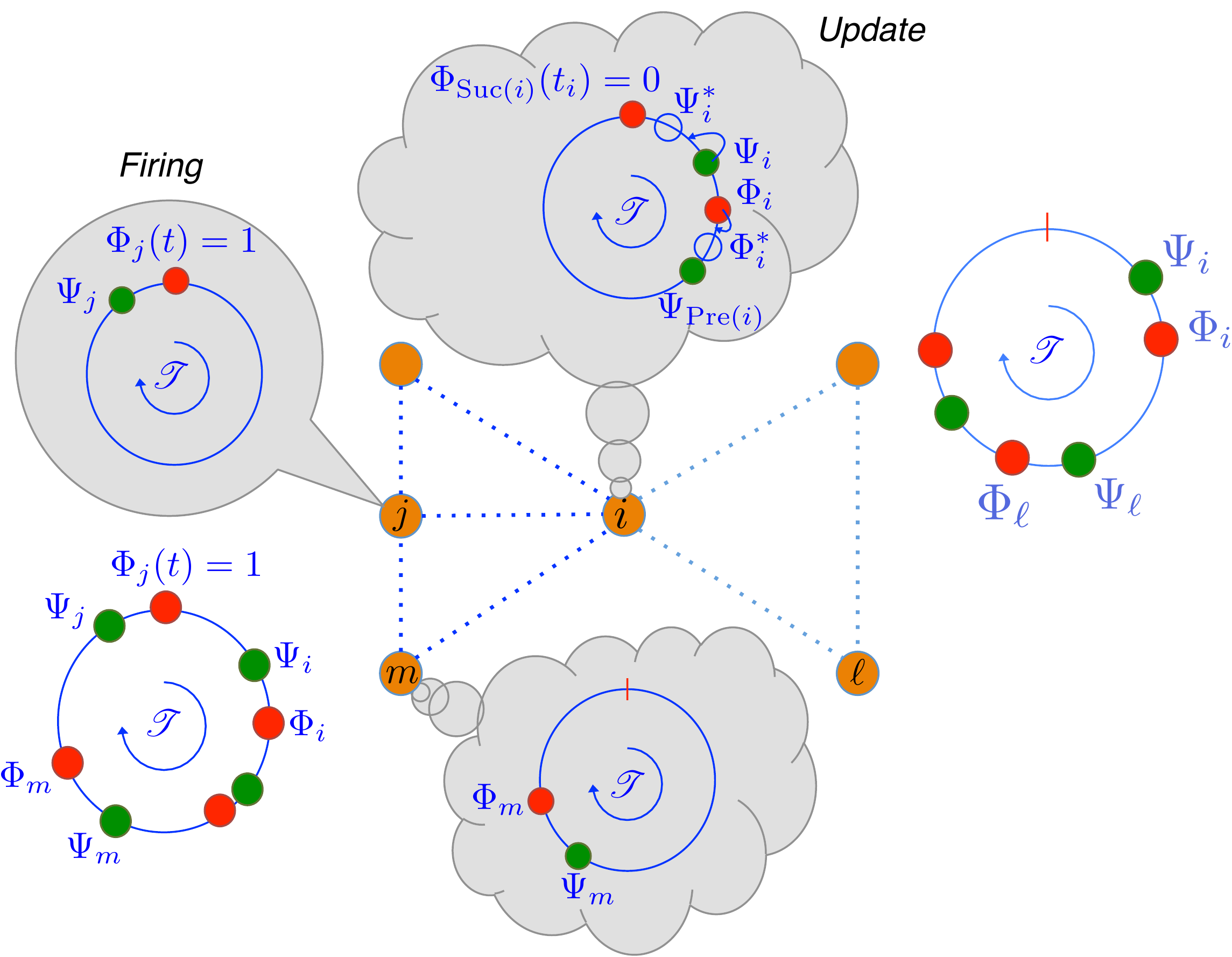}
\caption{Update procedure for the scheduling algorithm when $\Phi_j$ reaches the firing point and node $i$ identifies it as its successor.}\label{fig:PFS_initial_state}
\end{figure}
Introducing:
\begin{align}
\Gamma_{\pi_k^{c}}(t)&=\Phi_{\pi_k^{c}}(t)-\Psi_{\pi_k^{c}}(t)&\pmod{1}\label{Gamma_definition}\\
\Theta_{\pi_k^c}(t)&=\Psi_{\pre(\pi_k^{c},c)}(t)-\Phi_{\pi_k^{c}}(t)&\pmod{1}\label{Theta_definition}
\end{align}
for $k=1,\dots,{|\mathcal{V}_c|}$, it is possible to describe mathematically the evolution of the schedule for every clique $c$ by describing the dynamics of the vector:
\begin{equation}\label{eq:system_state_vector_PFS}
\bUpsilon_{c}(t)\triangleq[\Theta_{\pi_1^c}(t),\Gamma_{\pi_1^c}(t),\dots,\Theta_{\pi_{|\mathcal{V}_c|}^c}(t),\Gamma_{\pi_{|\mathcal{V}_c|}^c}(t)].
\end{equation}
Notice that the entries of this vector are the portions of the frame allocated to each node at time $t$ and the corresponding intermediate guard-spaces. 
Therefore the fixed points of the algorithm represent the final schedule, assuming the demands remain unchanged for a sufficiently long period. It is of interest to understand if the schedule will correspond to an efficient use of the bandwidth and this is the aim of the analysis in the next sections.
\subsection{Convergence of the single clique scheduling algorithm}
In this section we recall the convergence result in \cite{round_robin_pagliari} and analyze the convergence rate of the algorithm.
In the case of a single clique $c$ in our graph, due to the updates, the state vector $\bUpsilon_c(t)$ evolves
linearly with system matrix $\textbf{M}^{c}$ is defined as:
\begin{equation}\label{system_matrix_scheduling_update}
\textbf{M}^{c}=\prod_{k=1}^{|\mathcal{V}_{c}|}\boldsymbol{M}_{\pi_k^c}
\end{equation}
and each $\boldsymbol{M}_{\pi_k^{c}}$ is the matrix for the update of node $\pi_k^{c}$ in the clique $c$.
This matrix has the following form:
\begin{equation}
\boldsymbol{M}_{\pi_k^c}=\mathbf{J}^{(2k-2)}\cdot\left[
\begin{array}{cc}
\textbf{U}_{\pi_k^c} & \mathbf{0}_{3\times (2{|\mathcal{V}_{c}|}-3)}\\
\mathbf{0}_{(2{|\mathcal{V}_{c}|}-3)\times 3} & \Ib_{(2{|\mathcal{V}_{c}|}-3)}\\
\end{array}
\right]\cdot \mathbf{J}^{T^{(2k-2)}}
\end{equation}
where $\mathbf{J}$ represents the circular shift matrix:
\begin{equation}\label{circulant_matrix_J}
\mathbf{J}\triangleq\left[
\begin{array}{ccccc}
0&0&\cdots&0&1\\
1&0&\cdots&0&0\\
0&\ddots&\ddots&\vdots&\vdots\\
\vdots&\ddots&\ddots&0&0\\
0&\cdots&0&1&0\\
\end{array}
\right].
\end{equation}
and
\begin{equation}\label{eq:U_i_def}
\textbf{U}_{i}\triangleq\left[
\begin{array}{ccc}
\vspace{0.2cm}
1-\beta\frac{D_{i}+\delta}{D_{i}+2\delta}&\beta\frac{\delta}{D_{i}+2\delta}&\beta\frac{\delta}{D_{i}+2\delta}\\
\vspace{0.2cm}
\beta\frac{D_{i}}{D_{i}+2\delta}&1-\beta\frac{2\delta}{D_{i}+2\delta}&\beta\frac{D_{i}}{D_{i}+2\delta}\\
\beta\frac{\delta}{D_{i}+2\delta}&\beta\frac{\delta}{D_{i}+2\delta}&1-\beta\frac{D_{i}+\delta}{D_{i}+2\delta}
\end{array}
\right].
\end{equation}
The proof in \cite{round_robin_pagliari} shows that for this configuration there exists a unique fixed point:
\begin{equation}\label{eq:fix_point_sing_clique}
\bUpsilon_c^{\star}=\dfrac{\gamma^{c}}{\boldsymbol{D}^c}(\delta,D_{\pi_1^c},\delta,D_{\pi_2^c},\dots,\delta,D_{\pi_{|\mathcal{V}_{c}|}^c})^T
\end{equation}
where $\boldsymbol{D}^{c}=\sum\limits_{k=1}^{|\mathcal{V}_{c}|}D_{\pi_k^c}$ and $\gamma^{c}=\dfrac{\boldsymbol{D}^{c}}{\boldsymbol{D}^{c}+|\mathcal{V}_{c}|\delta}$.
\vspace{0.1cm}

For the specific case of a single clique $c$ with all nodes having the same demand ($D_{\pi_k^c}=D ~ \forall k=1,2,\dots,|\mathcal{V}_{c}|$) it is actually possible to complement the result with an estimate for the rate of convergence. In fact, when the evolution of the system can be modeled with a linear update, as in average consensus algorithms \cite{gossip_proceedings}, it is well known that the rate of convergence can be estimated via the second largest eigenvalue of the system matrix (i.e., convergence towards a fixed point is guaranteed if the highest eigenvalue is equal to $1$ and the others are strictly smaller).
Assuming equal demand for all nodes, it is possible to rewrite the system matrix in \eqref{system_matrix_scheduling_update} as:
\begin{equation}\label{system_matrix_compact}
\textbf{M}^{c}=\left(\left(
\begin{array}{cc}
\textbf{U} & \mathbf{0}_{3\times (2{|\mathcal{V}_{c}|}-3)}\\
\mathbf{0}_{(2{|\mathcal{V}_{c}|}-3)\times 3} & \Ib_{(2{|\mathcal{V}_{c}|}-3)}\\
\end{array}
\right)\cdot\mathbf{J}^2\right)^{|\mathcal{V}_{c}|}
\end{equation}
where the dependence of the block matrix $\textbf{U}$ on the node $i$ has been lost setting an equal demand $D$ for all nodes.
At this point it is possible to derive the exact $2{|\mathcal{V}_{c}|}$-th degree characteristic equation for the product matrix inside the brackets, find an approximation for the second highest solution and then take the ${|\mathcal{V}_{c}|}$-th power of that value to find the second highest eigenvalue for the system matrix $\textbf{M}^{c}$.
We claim:
\begin{proposition}\label{convergence_rate_lemma}
{\it
The second largest eigenvalue of $\mathbf{M}^{c}$  if all the nodes have the same demand $D$ is:
\begin{equation}\label{eq:second_eigenvalue}
|\lambda_2^{c}|\approx1-\dfrac{2\beta \mu \pi^2}{{|\mathcal{V}_{c}|}^2}
\end{equation}
where $0<\beta<1$ is the coupling factor in the update equation in \eqref{eq:s_update},\eqref{eq:e_update} and $\mu=\dfrac{\delta}{D+2\delta}$.}
\end{proposition}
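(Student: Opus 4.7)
The strategy, as hinted right before the proposition, is threefold: (i) derive the degree-$2|\Vcal_c|$ characteristic polynomial $p(\lambda)$ of the bracketed matrix $\mathbf{N}\triangleq\tilde{\mathbf{U}}\mathbf{J}^2$, where $\tilde{\mathbf{U}}$ has $\mathbf{U}$ in its top-left $3\times 3$ block and the identity on the remaining $2|\Vcal_c|-3$ coordinates; (ii) approximate its second-largest root in modulus by perturbing around $\beta=0$; and (iii) raise the result to the $|\Vcal_c|$-th power, since $\mathbf{M}^c=\mathbf{N}^{|\Vcal_c|}$ implies that the eigenvalues of $\mathbf{M}^c$ are the $|\Vcal_c|$-th powers of those of $\mathbf{N}$.

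The first step is a standard reduction. Writing $\mathbf{N}v=\lambda v$ coordinatewise, every equation except the first three comes from the identity block and collapses into the recursion $v_{k-2}=\lambda v_k$ for $k=4,\dots,2|\Vcal_c|$. Solving it expresses $v_4,\dots,v_{2|\Vcal_c|}$ as monomials in $\lambda^{-1}$ times $v_2$ or $v_3$; substituting back into the three active equations yields $\mathbf{A}(\lambda)(v_1,v_2,v_3)^T=0$ for an explicit $3\times 3$ matrix $\mathbf{A}(\lambda)$, and $\det\mathbf{A}(\lambda)=0$, after clearing the auxiliary $\lambda$-powers introduced during the substitution, is exactly $p(\lambda)$.

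The second step is the perturbation. At $\beta=0$ one has $\mathbf{U}=\mathbf{I}$, $\mathbf{N}=\mathbf{J}^2$ and $p(\lambda)=(\lambda^{|\Vcal_c|}-1)^2$, so the unperturbed spectrum is the $|\Vcal_c|$-th roots of unity $\omega_k=e^{j2\pi k/|\Vcal_c|}$, each of algebraic multiplicity $2$. For $\beta>0$, the row-stochasticity of $\mathbf{U}$ (together with $\mathbf{U}=\mathbf{I}-\beta\mathbf{L}$ with $\mathbf{L}\mathbf{1}=0$) pins one eigenvalue at $\lambda=1$, consistent with the fixed point in \eqref{eq:fix_point_sing_clique}; the other member of each degenerate pair drifts inside the unit disk by $O(\beta)$. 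Parameterising $\lambda=\omega_k(1-\epsilon_k)$ and Taylor-expanding $p$ to first order in both $\beta$ and $\epsilon_k$ reads off $\mathrm{Re}\,\epsilon_k\propto\beta\mu(1-\cos(2\pi k/|\Vcal_c|))$—a discrete-Laplacian-on-a-cycle factor that emerges because the only nontrivial coupling in $\tilde{\mathbf{U}}$ sits on three consecutive rows of the cyclic shift $\mathbf{J}^2$. The correction is minimised at $k=\pm 1$, giving $|\lambda_2(\mathbf{N})|\approx 1-\frac{2\beta\mu\pi^2}{|\Vcal_c|^3}$.

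Raising to the $|\Vcal_c|$-th power finally produces $|\lambda_2^c|\approx 1-\frac{2\beta\mu\pi^2}{|\Vcal_c|^2}$, as claimed in \eqref{eq:second_eigenvalue}. The hardest step will be the splitting analysis near each doubly-degenerate $\omega_k$: one must identify which of the two branches emerging from each pair corresponds to the slowest decaying mode, pin down the absolute constant in $\epsilon_{\pm 1}$ using the specific stochastic and antidiagonal-symmetric structure of $\mathbf{U}$ in \eqref{eq:U_i_def}, and verify that no higher wavenumber $k$ yields a larger modulus once raised to the $|\Vcal_c|$-th power—which reduces to the monotonicity of $(1-a/|\Vcal_c|^2)^{|\Vcal_c|}$ in $a>0$.
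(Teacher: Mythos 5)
Your overall skeleton coincides with the paper's: reduce to the degree-$2n$ characteristic polynomial of the bracketed matrix $\mathbf{N}=\tilde{\mathbf{U}}\mathbf{J}^2$ (with $n=|\Vcal_c|$), locate its second-largest root perturbatively near the $n$-th roots of unity, and raise to the $n$-th power; your predicted dependence $\mathrm{Re}\,\epsilon_k\propto\beta\mu\left(1-\cos\frac{2\pi k}{n}\right)$ and the selection of $k=\pm1$ agree with the paper's linearized solution \eqref{z_k}. The difference — and the gap — is in the perturbation step. You expand around $\beta=0$, where $p_0(\lambda)=(\lambda^n-1)^2$ and every $\omega_k=e^{j2\pi k/n}$ is a \emph{double} root, and you propose to ``Taylor-expand $p$ to first order in both $\beta$ and $\epsilon_k$.'' That expansion is identically zero: writing $p=p_0+\beta p_1+\beta^2 p_2$ from \eqref{lambda_equation}, one has $p_0(\omega_k)=p_0'(\omega_k)=0$ (double root) and also $p_1(\omega_k)=2(\omega_k^n-1)-\mu(\omega_k^{2n-1}+\omega_k^{n+1}-\omega_k^{n-1}-\omega_k)=0$, so the first-order equation carries no information about $\epsilon_k$. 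The splitting is governed by the degenerate (second-order) balance $\tfrac12 p_0''(\omega_k)\eta^2+\beta p_1'(\omega_k)\eta+\beta^2 p_2(\omega_k)=0$ with $\eta=\lambda-\omega_k=O(\beta)$ and $p_2(\omega_k)=2\mu\left(1-\cos\frac{2\pi k}{n}\right)$; solving this quadratic and taking the branch of larger modulus does recover $1-\frac{\beta\mu}{n}\left(1-\cos\frac{2\pi k}{n}\right)$ (the discarded branch sits near modulus $(1-\beta)^{2/n}$), but this second-order analysis is the actual content of the step and is missing from your sketch as written.

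The paper avoids the degeneracy by choosing $\mu$, not $\beta$, as the perturbation parameter: at $\mu=0$ (and $\beta>0$ fixed) the polynomial factors as $(\lambda^n-1)(\lambda^n-(1-\beta)^2)$, so the $n$ roots on the unit circle are simple and an honest first-order expansion in $z=\varepsilon+j\vartheta$ applies directly. Two smaller corrections: $\mathbf{U}_i$ in \eqref{eq:U_i_def} is column-stochastic, not row-stochastic ($\mathbf{1}^T\mathbf{U}_i=\mathbf{1}^T$ but $\mathbf{U}_i\mathbf{1}\neq\mathbf{1}$ in general), so the eigenvalue pinned at $1$ comes from the left eigenvector $\mathbf{1}^T$; and the relevant comparison at the end is between $k=1$ and $k=n-1$ (the paper picks $k=n-1$), not a monotonicity statement about $(1-a/n^2)^n$. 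None of this changes the final formula \eqref{eq:second_eigenvalue}, but the perturbation step as you describe it does not go through without the quadratic splitting analysis.
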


The proof of this proposition is in Appendix \ref{convergence_rate_proof}.
Clearly, the convergence time increases with the number of nodes in the clique and decreases with the values of $\beta$ and $\mu$.
However, augmenting $\mu$ by  increasing the guard time $\delta$ relative to the demand $D$ lowers its efficiency.
Furthermore, in non-ideal conditions in which in the measurement of $\Psi_{\Pre(i)},\Phi_{\Suc(i)}$ are not precise (see footnote \ref{footnote3}) or the local timers are quantized (for more detailed discussion we refer to \cite{PulseSS_description}), aggressively increasing $\beta$ may result in lack of convergence.
While the rate of convergence is indicative of the trends we found in locally connected networks, the study of the fixed points requires appropriate changes, since the presence of shared nodes changes the structure of $\textbf{M}_{\pi_k^c}$, introducing coupling among the sub-cliques. To describe these changes next we need to introduce new quantities, definitions, assumptions and notations, which precede our main convergence result.
Nevertheless, we wish to remark that the result of Proposition \ref{convergence_rate_lemma} has been found, via simulation, to be a good approximation also for the behaviour of multiclique networks, if we consider for $\mathcal{V}_c$ the largest clique of the graph.
\subsection{Convergence of the multi-cliques scheduling algorithm}\label{sec:convergence_multi}
In this section we analyze what are the possible schedules that are fixed points for the algorithm.
\begin{definition}[Partial proportional fairness criterion]\label{def:partial}
{\it
We say a schedule meets a {\it partial proportional fairness criterion} if, once convergence is reached (i.e. $\forall t>t^*$ for some $t^*$), $\forall (i,j)~~~ i\neq j ~~\mbox{if~~} i,j \in \Lcal_c~\mbox{and}~~j=\pre(i,c)$ the following condition is met :
$$
\frac{(\Phi_i(t)-\Psi_i(t))\!\!\!\!\!\mod 1}{D_i}=\frac{(\Phi_j(t)-\Psi_j(t))\!\!\!\!\!\mod 1}{D_j}
$$
}
\end{definition}
\begin{definition}[Global proportional fairness criterion]\label{def:global}
{\it
We say a schedule meets a {\it global proportional fairness criterion} if the two following properties are satisfied once convergence is reached (i.e. $\forall t>t^*$ for some $t^*$):
\begin{enumerate}
\item $\forall (i,j)~~~ i\neq j ~~\mbox{if~~} i,j \in \Lcal_c,$ then:
$$\frac{(\Phi_i(t)-\Psi_i(t))\!\!\mod 1}{D_i}=\frac{(\Phi_j(t)-\Psi_j(t))\!\!\mod 1 }{D_j}
$$
\item $
\forall j  ~~[(\Phi_j(t)-\Psi_j(t))\!\! \mod1]\geq \min\limits_{c: j \in {\mathcal{V}_c}}\frac{D_j}{\sum_{i\in\mathcal{V}_c}(D_i+\delta)}$
\end{enumerate}
}
\end{definition}
The second property indicates that the solution guarantees that every node gets the minimum possible duration among all cliques in its range.
Let us first introduce the following:
\begin{assumption}\label{ass:consecutive}
{\it For every clique ${\cal V}_c$, all the local nodes (i.e. the set ${\cal L}_c$), occupy consecutive portions of the frame.
}
\end{assumption} 
We can then claim the following:
\begin{theorem}\label{2_clique_PFS}
{\it For a network with two cliques, the update rule in \eqref{eq:s_update} and \eqref{eq:e_update} will converge to a unique fixed point $\bUpsilon_c^{\star}$ $\forall c \in {\mathcal C}$ that respects the {\it partial proprtional fairness criterion} in Definition \ref{def:partial} , irrespective of the initial phases of the timers. 
If Assumption \ref{ass:consecutive} holds, the resulting schedule will also respect the {\it global proportional fairness criterion} in Definition \ref{def:global}.
}
\end{theorem}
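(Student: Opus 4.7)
My plan is to extend the single-clique analysis underlying Proposition~\ref{convergence_rate_lemma} to a two-clique graph by (i) showing that the firing order inside each clique is preserved by every update, so that the dynamics reduces to a deterministic linear map on the concatenated state $(\bUpsilon_{c_1}(t),\bUpsilon_{c_2}(t))$; (ii) writing the fixed-point equations and establishing existence and uniqueness of $\bUpsilon_c^\star$; (iii) verifying that this unique fixed point satisfies the partial proportional fairness criterion of Definition~\ref{def:partial}; and (iv) proving convergence via a spectral/contraction argument, then promoting partial to global proportional fairness under Assumption~\ref{ass:consecutive}.

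\textbf{Order preservation and per-node fixed-point identity.}
Step (i) follows because \eqref{eq:s_update}--\eqref{eq:e_update} are convex combinations with $\beta\in(0,1)$ of the current phases and the target phases in \eqref{eq.Ntarget_s}--\eqref{eq.Ntarget_e}; since the targets always satisfy $0<\Psi_i^\star<\Phi_i^\star<\Psi_{\Pre(i)}$ at the update time (as $D_i,\delta>0$), no update can move a timer past a neighbouring firing event, and hence $\Pre(i,t),\Suc(i,t)$ are time-invariant. At any fixed point the convex combinations force $\Phi_i=\Phi_i^\star$ and $\Psi_i=\Psi_i^\star$, which yields the per-node identity
\begin{equation}\label{eq:plan_per_node}
\frac{\Phi_i^\star-\Psi_i^\star}{D_i} \;=\; \frac{\Psi_{\Pre(i)}^\star-\Phi_i^\star}{\delta} \;=\; \frac{\Psi_{\Pre(i)}^\star}{D_i+2\delta}.
\end{equation}

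\textbf{Partial PF, uniqueness, and convergence.}
For two consecutive local nodes $i,j\in\Lcal_c$ with $j=\pre(i,c)=\Pre(i)$, time-invariance of $\bUpsilon$ lets me write the value of $\Psi_j$ at $i$'s update time as the sum of the guard before $i$, the slot of $i$, and the guard after $i$; substituting this expansion into \eqref{eq:plan_per_node} applied to $i$ shows that guard-after-$i$ equals guard-before-$i$, and applying the same argument at $j$ shows guard-before-$j$ also equals guard-before-$i$, whence $(\Phi_i^\star-\Psi_i^\star)/D_i=(\Phi_j^\star-\Psi_j^\star)/D_j$, which is precisely Definition~\ref{def:partial}. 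Uniqueness of $\bUpsilon^\star$ follows because chaining the fixed-point equations around the full firing cycle leaves a single scalar degree of freedom, pinned down by $\sum_i(\text{slot}_i+\text{guard}_i)=\T$. For convergence I would generalize the eigenvalue computation of Proposition~\ref{convergence_rate_lemma}: the two per-clique block matrices couple only through the rows indexed by the shared nodes $\Scal_{c_1c_2}$, and on the complement of the one-dimensional time-translation null space the composed update remains a contraction.

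\textbf{Global PF and principal obstacle.}
Under Assumption~\ref{ass:consecutive} the local nodes of each clique occupy a contiguous interval of the frame, so the per-clique fixed-point chain closes entirely inside $\Vcal_c$ and all guards within that local block coincide. Applying \eqref{eq:plan_per_node} at each shared node $j$ then bounds its slot $\Phi_j^\star-\Psi_j^\star$ below by $D_j/\sum_{i\in\Vcal_c}(D_i+\delta)$ for every clique $c\ni j$, with equality on the clique attaining the minimum; this delivers property~(2) of Definition~\ref{def:global}. The hardest step will be the convergence argument: shared nodes destroy the clean block-circulant structure that made Proposition~\ref{convergence_rate_lemma} tractable, so a direct eigen-analysis becomes awkward, and one likely has to fall back on a Lyapunov-type contraction---for instance by showing that an $\ell^\infty$-type distance to $\bUpsilon^\star$ is strictly contracted by one full round of updates---to cover the two-clique case in full generality.
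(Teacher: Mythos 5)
Your per-node fixed-point identity and the chaining argument for partial proportional fairness are sound and correspond to the paper's characterization of the eigenvectors with eigenvalue one; the normalization-plus-one-degree-of-freedom uniqueness idea is also in the right spirit. However, there is a genuine gap at the foundation of your step (i): you assert that $\Pre(i,t)$ and $\Suc(i,t)$ are time-invariant because the firing order inside each clique is preserved. Order preservation within a clique is true, but for a \emph{shared} node $i\in\Scal_{c_1c_2}$ the operative predecessor and successor are selected by the argmin over cliques in \eqref{eq:Pre_def}--\eqref{eq:Suc_def}, and this selection can switch between the two cliques as their schedules contract at different rates. The paper explicitly identifies this as the central difficulty of the two-clique case. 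Consequently the dynamics is not a single deterministic linear map on the concatenated state but a switched linear system, and your plan to ``generalize the eigenvalue computation of Proposition~\ref{convergence_rate_lemma}'' to one fixed composed matrix does not go through as stated. The paper instead isolates the shared node's update into a separate $5\times 5$ stochastic block acting on the five inter-firing gaps around it (including the gaps to the non-binding predecessor and successor), shows that block contributes exactly one additional unit eigenvector subject to the constraint that the guard before and after the shared node be equal to $(\delta/D_{v'})\Gamma_{v'}$ (equation \eqref{eq:constraint_shared_node}), and then pins down the unique convex combination of the two per-clique unit eigenvectors by forcing the shared node's slot length to agree across cliques; this yields $\lambda=\min_{c}\{\gamma^c/\boldsymbol{D}^c\}D_{v'}$ and, as a byproduct, identifies which clique the binding $\Pre$ and $\Suc$ belong to at steady state (the denser one) --- a fact your argument would need but never establishes.

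Two further shortfalls: first, your convergence step is an admitted placeholder (``fall back on a Lyapunov-type contraction''), whereas the statement requires convergence from arbitrary initial phases; the paper obtains the spectral gap from the left-stochasticity and primitivity of the local-node product matrices via Perron--Frobenius together with the structure of the shared-node block. Second, a two-clique network may have $|\Scal_{c_1c_2}|>1$ shared nodes, possibly non-consecutive in the firing order; in that case each clique's local nodes split into two blocks $\Lcal_{c,1},\Lcal_{c,2}$ with separate proportionality constants, and one must additionally determine which clique's local block sits adjacent to each shared node at the fixed point. Your proposal does not contemplate this case at all, yet it is needed for the theorem as stated.
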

The proof can be found in Appendix \ref{proof_PFS_convergence}.
\begin{figure}[t]
\centering
\includegraphics[scale=0.38]{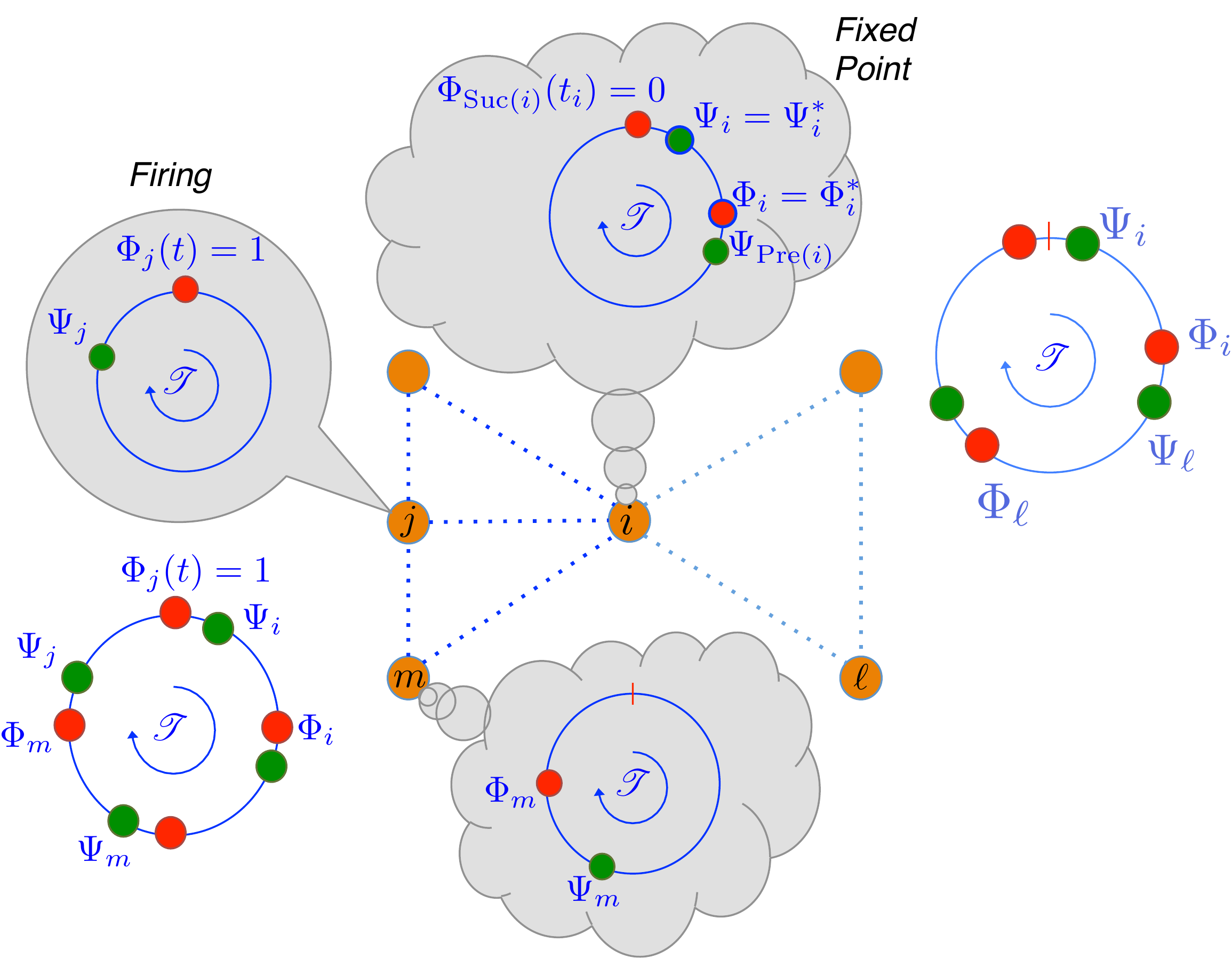}
\caption{Convergence of the scheduling for the topology in Fig.\ref{fig:PFS_initial_state}.}\label{fig:PFS_convergence_state}
\end{figure}

In Fig. \ref{fig:PFS_convergence_state} we can see the convergence of the scheduling algorithm to a fixed point where the target $(\Phi_i^*,\Psi_i^*)=(\Phi_i(t_i),\Psi_i(t_i))$. Therefore, the timers of node $i$ (as well of those of every other node) will no longer change from that update on. 

Following a similar argument as in the proof of Theorem \ref{2_clique_PFS} we can claim  
\begin{proposition}\label{set_multiclique_PFS}
{\it 
For topologies with more than two cliques we have, in general, fixed points for \eqref{eq:s_update}-\eqref{eq:e_update} form sets with measure greater than zero.
All these points respect the partial proportional fairness criterion in Definition \ref{def:partial}.}
\end{proposition}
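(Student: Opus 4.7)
The plan is to mirror the proof of Theorem~\ref{2_clique_PFS} clique-by-clique, and then to exhibit a continuous family of fixed points that emerges once three or more cliques share nodes.

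First, I would derive the fixed-point equations by setting $\Phi_{\pi_k^c}(t_{\pi_k^c}^+)=\Phi_{\pi_k^c}(t_{\pi_k^c})$ and $\Psi_{\pi_k^c}(t_{\pi_k^c}^+)=\Psi_{\pi_k^c}(t_{\pi_k^c})$ in \eqref{eq:s_update}--\eqref{eq:e_update}. The coupling $\beta$ drops out, leaving $\Phi_i=\tfrac{D_i+\delta}{D_i+2\delta}\Psi_{\Pre(i)}$ and $\Psi_i=\tfrac{\delta}{D_i+2\delta}\Psi_{\Pre(i)}$ (all phases evaluated at $t_i$), which is precisely the linear recursion already solved in the two-clique proof.

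Second, for a local node $i\in\Lcal_c$ we have $\Ccal_i=\{c\}$ by definition, so the $\argmin$ in \eqref{eq:Pre_def}--\eqref{eq:Suc_def} is trivial and $\Pre(i,t)=\pre(i,c)$, $\Suc(i,t)=\suc(i,c)$. Along the chain of local nodes inside a single clique, the fixed-point recursion is therefore identical to the one in the proof of Theorem~\ref{2_clique_PFS}, and telescoping it between consecutive $i,j\in\Lcal_c$ with $j=\pre(i,c)$ reproduces $(\Phi_i-\Psi_i)/D_i=(\Phi_j-\Psi_j)/D_j$, which is Definition~\ref{def:partial}. This step is completely insensitive to the total number of cliques, so the partial proportional fairness criterion is inherited by every fixed point, as claimed.

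Third, for the positive-measure statement I would construct an explicit one-parameter family of fixed points. For a shared node $i\in\Scal_{cc'}$, only the clique achieving the $\argmin$ in \eqref{eq:Pre_def}--\eqref{eq:Suc_def} is binding at a fixed point; the other clique merely has to satisfy the collision-avoidance inequality \eqref{collision_avoidance_condition}. With at least three cliques, one can pick a pair of cliques whose non-binding slots admit a small rigid rotation $\xi$ of their schedule relative to the rest of the network. For $\xi$ in a sufficiently small open interval, the $\argmin$ winners in \eqref{eq:Pre_def}--\eqref{eq:Suc_def} do not flip, every collision-avoidance inequality stays strict, and the rotated configuration still satisfies the fixed-point equations. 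This produces a one-dimensional continuum of fixed points, which has positive Lebesgue measure.

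The main obstacle is the continuity argument in the third step: I have to verify that the $\argmin$ selections in \eqref{eq:Pre_def}--\eqref{eq:Suc_def} are locally constant under the perturbation (so the linear fixed-point equations retain the same structure) and that every collision-avoidance inequality remains strict throughout the interval. Both are open conditions at the unperturbed fixed point, which is why Proposition~\ref{set_multiclique_PFS} is stated as holding \emph{in general}; pinning down the mild overlap hypothesis on maximal cliques under which the parametric family is guaranteed to exist, rather than exhibiting it for one explicit topology, is where the bulk of the technical work would lie.
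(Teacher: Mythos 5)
Your first two steps are sound and match the paper's conclusion in substance: the paper phrases the partial-fairness claim as an eigenvector property of the sub-block matrices $\tilde{\mathbf{M}}^{c,j}$ associated with each run of consecutive local nodes, while you derive it from the $\beta$-independent fixed-point equations, but both routes yield $\Theta_p=\Theta_s=\frac{\delta}{D_i}\Gamma_i$ and hence Definition~\ref{def:partial} by telescoping. The gap is in your third step. A rigid rotation of one block of the schedule relative to the rest does \emph{not} in general preserve the fixed-point equations: any local node $i$ whose two binding neighbours $\Pre(i)$ and $\Suc(i)$ end up on opposite sides of the cut has its two guard spaces perturbed asymmetrically, so the condition $\Theta_p=\Theta_s=\frac{\delta}{D_i}\Gamma_i$ fails no matter which block you place $i$ in. This is precisely the situation in the witness the paper constructs (three cliques in a line, $|\Scal_{12}|=|\Scal_{23}|=1$, a single local node in the middle clique): the two shared nodes are pinned by the denser outer cliques, the free parameter $\theta$ is the offset between the two outer schedules, but the middle local node must \emph{re-solve} its proportionality condition inside the changed gap --- its share $\Gamma$ and both of its guard spaces are affine functions of $\theta$, not translates. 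The family is therefore not a rotation orbit but a one-parameter re-solved branch of the fixed-point system, and one must verify (as the paper does over $\theta\in[\tfrac{1}{20},\tfrac{3}{10}]$) that the $\argmin$ selections of the shared nodes remain in the outer cliques throughout the interval.

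Moreover, since the claim is existential (``in general \dots form sets with measure greater than zero''), the decisive and essentially unavoidable step is to exhibit one explicit topology together with an explicit continuum of fixed points; you never do this, and your own closing remark that identifying the right overlap hypothesis ``is where the bulk of the technical work would lie'' concedes the argument is incomplete. (A minor caveat you share with the paper's phrasing: a one-parameter curve has zero Lebesgue measure in the ambient state space, so what is actually established is that the fixed points are non-isolated, i.e.\ form a continuum of positive one-dimensional measure.)
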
 
See Proof in Appendix \ref{app:proof_set_fixed_points_multiclique}.
In a nutshell, for general topologies we do not have enough contraints on the attainable schedule to guarantee a unique fixed point as in the two-clique case. 
Nevertheless, additional definitions and assumptions allow to characterize very peculiar cases in which a unique fixed point is attainable with more than two cliques. 
Let us introduce the partition $\mathcal{A}_c$ of the nodes set ${\cal V}$ as:
\begin{equation}\label{eq:partition_definition}
{\cal A}_c=\left\{i\left|c=\arg\max_{c'\in{\cal C}_i} \sum_{v\in{\cal V}_{c'}} (D_v+\delta)\right.\right\}
\end{equation}
It is clear that $\forall c,~ {\cal A}_c\subseteq{\cal  V}_c$  and, with distinct overall demands for cliques that have shared nodes (i.e., if $\Scal_{cc'}\neq\emptyset$ then $\sum_{v\in{\cal V}_c} (D_v+\delta)\neq \sum_{v\in{\cal V}_{c'}} (D_v+\delta)$), the sets ${\cal A}_c$ form a proper partition of ${\cal V}$.
Then we order these sets in decreasing order of demand size:
\begin{equation}\label{cluster_sorting}
\sum_{v\in {\cal A}_1} (D_v+\delta)\geq\sum_{v\in {\cal A}_2} (D_v+\delta)\geq\ldots
\sum_{v\in {\cal A}_{|{\cal C}|}} (D_v+\delta)
\end{equation}
Let us introduce: 
\begin{assumption}\label{ass:demand}
{\it
All the nodes in a clique are at most in two partitions $\mathcal{A}_c$ as defined in \eqref{eq:partition_definition}. Mathematically, $\forall c\in\Ccal,$ there is only a single $c'\in \Ccal$ such that 
\be 
\Vcal_c\subset \Acal_{c}\cup\Acal_{c'}
\ee
}
\end{assumption}
The following claim is proven in Appendix \ref{app:proof_unique_fixed_point_multiclique}.
\begin{proposition}\label{unique_multiclique_PFS}
{\it If Assumptions \ref{ass:consecutive}-\ref{ass:demand} are met, the scheduling algorithm and its global proportional fairness property 
can be extended to topologies with an arbitrary number of cliques.}
\end{proposition}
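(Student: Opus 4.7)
The plan is to induct on the cliques in the order of decreasing demand sum prescribed by \eqref{cluster_sorting}, using Theorem \ref{2_clique_PFS} both as the base case and as the template for each inductive step. The enabling observation is that under Assumption \ref{ass:demand}, the partition $\{\mathcal{A}_c\}$ imposes a chain-like structure on the cliques: every clique $c_k$ meets, at the partition level, at most one other clique $c'$, and the sorting \eqref{cluster_sorting} combined with the definition \eqref{eq:partition_definition} forces that $c'$ to have been processed strictly earlier in the induction.

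For the base step, the highest-demand clique $c_1$ satisfies $\mathcal{V}_{c_1}\subset\mathcal{A}_{c_1}$, since no other clique containing any of its nodes has a strictly larger demand sum. Hence its dynamics reduce to the single-clique system \eqref{system_matrix_scheduling_update}, whose unique fixed point is \eqref{eq:fix_point_sing_clique}. Every node of $\mathcal{A}_{c_1}$ then receives the share $\gamma^{c_1} D_i/\boldsymbol{D}^{c_1}$, which matches the minimum bound required by clause (2) of Definition \ref{def:global}, because $c_1$ is by construction the largest-demand clique containing those nodes.

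For the inductive step, I would assume $c_1,\dots,c_{k-1}$ have converged and the nodes in $\bigcup_{j<k}\mathcal{A}_{c_j}$ are frozen at their fixed-point values. By Assumption \ref{ass:demand}, the nodes of $c_k$ split into $\mathcal{A}_{c_k}\cap\mathcal{V}_{c_k}$ and $\mathcal{A}_{c'}\cap\mathcal{V}_{c_k}$ for a unique $c'$, which by \eqref{cluster_sorting} lies in $\{c_1,\dots,c_{k-1}\}$. The shared nodes in $\mathcal{V}_{c_k}\cap\mathcal{A}_{c'}$ are therefore frozen, and Assumption \ref{ass:consecutive} guarantees that the remaining active nodes $\mathcal{A}_{c_k}\cap\mathcal{V}_{c_k}$ occupy a single consecutive block of the frame. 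Modelling the frozen shared nodes as a fixed boundary allows the restricted update of clique $c_k$ to be cast exactly in the form analyzed in Theorem \ref{2_clique_PFS}: the first and last local nodes see the frozen block through their $\Pre$ and $\Suc$, while the intermediate local nodes see each other. The resulting restricted system has the same structure as the two-clique system matrix, yielding a unique fixed point that assigns to each node of $\mathcal{A}_{c_k}$ the share $D_i/\sum_{v\in\mathcal{V}_{c_k}}(D_v+\delta)$ (up to the appropriate $\gamma^{c_k}$), again matching the Definition \ref{def:global}(2) bound.

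The main obstacle will be making rigorous the claim that the ``frozen'' shared nodes genuinely stay frozen while $c_k$ evolves. Each shared node $i\in\mathcal{V}_{c_k}\cap\mathcal{V}_{c'}$ recomputes $\Pre(i,t)$ and $\Suc(i,t)$ via \eqref{eq:Pre_def}-\eqref{eq:Suc_def} across \emph{all} cliques it belongs to, so one must verify that once $c'$ has reached its fixed point these argmins stay anchored in $\mathcal{V}_{c'}$ rather than switching to $\mathcal{V}_{c_k}$ during the transient of the latter. Assumptions \ref{ass:consecutive} and \ref{ass:demand} are precisely the ingredients that rule out such switching: the consecutive-blocks structure keeps the local nodes of $c_k$ packed adjacent to the shared block, so from the shared node's viewpoint the tighter neighbour is always in $c'$; and Assumption \ref{ass:demand} prevents a third clique from simultaneously competing for the same $\argmin$. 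Once this invariance is settled, the remainder is bookkeeping and the global proportional fairness conclusion follows from the clique-by-clique construction together with the fact that the assignment inside each $\mathcal{A}_{c_k}$ automatically realises the per-node minimum of Definition \ref{def:global}(2).
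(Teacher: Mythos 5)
Your proposal takes essentially the same route as the paper's proof: both process the partitions $\mathcal{A}_c$ in the decreasing-demand order of \eqref{cluster_sorting} and repeatedly invoke Theorem \ref{2_clique_PFS} on each clique against its unique (by Assumption \ref{ass:demand}) already-scheduled neighbouring partition, with the residual frame $\T_c$ of \eqref{eq:portion_frame_available} playing exactly the role of your ``frozen boundary.'' The frozen-stays-frozen concern you flag (that the $\Pre$/$\Suc$ argmins of shared nodes must not switch cliques during the transient) is genuine but is equally glossed over in the paper, so your version is, if anything, marginally more explicit about where the remaining work lies.
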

\begin{remark}\label{remark:coloring}
{\it In the limit for  $\delta\rightarrow 0$, if the schedule meets Property 2 in Definition \ref{def:global}, then it is also one of the possible solutions of the minimum coloring graph problem for that conflict graph.}
\end{remark}
To meet Assumption \ref{ass:consecutive} the topology of the conflict graph has to allow an assignment which leaves a portion available in any frame for nodes that belong only to one clique.
However, this assumption may be violated in dense networks, as reported in \cite{PulseSS_description} and a version of Assumption \ref{ass:demand} that explains what conflict graphs can possibly meet Assumption \ref{ass:demand} is elusive.
Nonetheless, the presentation in this work should give the reader the necessary tools to analyze the possible attainable schedules on a case by case basis, given that a general treatment remains elusive.
An example where this assumption is violated and the trend is still predictable is discussed in the proof of Proposition \ref{set_multiclique_PFS} and in the simulation results.
In the next section we provide a description of the specific fixed point $\bUpsilon_c^{\star}$ $\forall c \in {\mathcal C}$ for the case where the demand is equal. For the treatment is advantageous to explicitly indicate the dependence on the frame duration ${\T}$ even though ${\T}=1$.
Note that star and line networks are multi-clique graphs with maximum clique size equal to $2$. For them we can state:
\begin{corollary}\label{corollary_line_star}
{\it The line and the star networks have always a unique fixed point for the schedule consistent with the description in subsection \ref{subsec:fixed_points}.}
\end{corollary}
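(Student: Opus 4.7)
The plan is to obtain Corollary \ref{corollary_line_star} as an immediate consequence of Proposition \ref{unique_multiclique_PFS}, by checking that both Assumption \ref{ass:consecutive} and Assumption \ref{ass:demand} become trivial when the graph contains no triangles.

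First I would record the structural observation that both line and star networks are triangle-free, so every maximal clique is simply an edge, i.e., $|\mathcal{V}_c|=2$ for every $c\in\mathcal{C}$. From this I would verify Assumption \ref{ass:consecutive}: since $\mathcal{L}_c=\mathcal{V}_c\setminus\bigcup_{c'\neq c}\mathcal{S}_{cc'}$ is a subset of a two-element set, $|\mathcal{L}_c|\leq 1$; concretely, in a star network the center is shared and each leaf is local, while in a line network every interior node is shared and only the two endpoints are local. In all cases $\mathcal{L}_c$ has at most one element, so the requirement that local nodes occupy a consecutive portion of the frame is vacuous.

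Next I would verify Assumption \ref{ass:demand}. It asks that the nodes of each clique $\mathcal{V}_c$ lie in at most two of the partition blocks $\mathcal{A}_{c'}$ defined in \eqref{eq:partition_definition}. Since $|\mathcal{V}_c|=2$, this is automatic: the two nodes occupy either one or two blocks and nothing else is needed. (For the star network this amounts to noting that the center is placed by \eqref{eq:partition_definition} into the block $\mathcal{A}_{c^\star}$ associated with its heaviest incident edge $c^\star$, while each leaf of a clique $c$ is in $\mathcal{A}_c$; for the line network the two endpoints of any edge-clique similarly land in at most two blocks.)

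With both assumptions satisfied, Proposition \ref{unique_multiclique_PFS} yields the existence of a unique fixed point $\bUpsilon_c^\star$ for every clique $c$ that respects the global proportional fairness criterion, and the explicit description from subsection \ref{subsec:fixed_points} applies. The only mildly delicate step I would pay attention to is the well-definedness of the partition $\{\mathcal{A}_c\}$ when neighboring cliques happen to share the same total demand $\sum_{v\in\mathcal{V}_c}(D_v+\delta)$; this case requires a deterministic tie-breaking rule in \eqref{eq:partition_definition}, but I would argue that the resulting timer schedule, and hence $\bUpsilon_c^\star$, is invariant under the choice of tie-breaker because the update equations \eqref{eq:s_update}--\eqref{eq:e_update} depend only on $\Pre(i,t)$ and $\Suc(i,t)$, which are determined by the phase order and not by the labeling of $\mathcal{A}_c$. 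I do not expect any real obstacle; the entire argument reduces to observing that size-two cliques make both assumptions of Proposition \ref{unique_multiclique_PFS} degenerate.
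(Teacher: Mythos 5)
Your reduction to Proposition \ref{unique_multiclique_PFS} works for the star but has a genuine gap for the line. The issue is your claim that Assumption \ref{ass:demand} is ``automatic'' when $|\Vcal_c|=2$. You are reading only the informal sentence (``all the nodes in a clique are at most in two partitions''), which is indeed vacuous for two-node cliques; but the operative mathematical condition is $\Vcal_c\subset\Acal_c\cup\Acal_{c'}$ for a single $c'$, i.e.\ one of the (at most two) covering blocks must be $\Acal_c$ itself, and the recursion in subsection \ref{subsec:fixed_points} relies on this: equation \eqref{eq:portion_frame_available} presumes that $\Vcal_c\setminus\Lcal_c$ sits inside one single other block $\Acal_{c'}$. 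For a line this can fail. Take the path $1\!-\!2\!-\!3\!-\!4$ with demands making the edge-cliques $\{1,2\}$ and $\{3,4\}$ the two heaviest: then by \eqref{eq:partition_definition} node $2$ is assigned to $\Acal_{\{1,2\}}$ and node $3$ to $\Acal_{\{3,4\}}$, so $\Acal_{\{2,3\}}=\emptyset$ and the two nodes of the middle clique land in two distinct blocks, neither of which is $\Acal_{\{2,3\}}$. Assumption \ref{ass:demand} is violated and Proposition \ref{unique_multiclique_PFS} cannot be invoked as a black box. This is precisely why the paper's own proof states that for the line network ``Assumption \ref{ass:demand} is not needed'' and instead argues directly: it applies Theorem \ref{2_clique_PFS} first to the pair of cliques containing the highest-demand node and then propagates sequentially outward along the line (each new clique meeting exactly one already-fixed neighbor), which is the step your argument is missing.

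Your treatment of the star coincides with the paper's: the center goes into the block of its heaviest incident edge and each leaf into its own edge's block, so Assumption \ref{ass:demand} genuinely holds and Proposition \ref{unique_multiclique_PFS} applies; likewise your observation that Assumption \ref{ass:consecutive} is vacuous because $|\Lcal_c|\le 1$ is correct. To repair the proposal, replace the ``automatic'' claim for the line by the sequential two-clique argument above (or prove a line-specific version of Proposition \ref{unique_multiclique_PFS} that does not use Assumption \ref{ass:demand}).
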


The proof is in Appendix \ref{corollary_PFS}.
If all the nodes have equal demand $D$, under both of these topologies, $\forall i$:
\begin{equation}
(\Phi_i-\Psi_i)\!\!\!\!\mod{1} \cdot {\T}=\dfrac{D}{D+\delta}\dfrac{{\T}}{2}
\end{equation}
which is about half of the resources as expected.
From Proposition \ref{unique_multiclique_PFS} we can also infer the following:
\begin{corollary}
{\it In the tree network, if parents have always higher demand than the children, there is a unique fixed point consistent with the description in Subsection \ref{subsec:fixed_points}.}
\end{corollary}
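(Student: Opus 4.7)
The plan is to derive the corollary as a direct application of Proposition \ref{unique_multiclique_PFS} by verifying that both Assumption \ref{ass:consecutive} and Assumption \ref{ass:demand} hold for any rooted tree in which every parent has strictly larger demand than its children. Since a tree contains no triangle, every maximal clique is simply an edge, so each $\Vcal_c$ has exactly two elements. For an edge $c=(i,j)$ with $j$ a leaf, $\Lcal_c=\{j\}$; for an edge between two internal nodes, $\Lcal_c=\emptyset$; and for an edge between two leaves (only possible in a two node tree, already covered by the line corollary) $\Lcal_c=\Vcal_c$. In all cases the local nodes occupy at most one contiguous slot, so Assumption \ref{ass:consecutive} is immediate.

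Next I would verify Assumption \ref{ass:demand} using the partition \eqref{eq:partition_definition}. For a non root node $i$ with parent $p$ and children $c_1,\ldots,c_k$, the cliques in $\Ccal_i$ are the parent edge $(i,p)$, with total demand $D_i+D_p+2\delta$, and each child edge $(i,c_m)$, with total demand $D_i+D_{c_m}+2\delta$. The hypothesis $D_p>D_i>D_{c_m}$ forces the argmax in \eqref{eq:partition_definition} to pick the parent edge, so $i\in\Acal_{(i,p)}$. For the root $r$ the same argmax selects the edge $(r,c^\star)$ where $c^\star$ is the child of largest demand, giving $r\in\Acal_{(r,c^\star)}$. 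Hence for any edge $c=(i,j)$ with $j$ the parent of $i$, the child endpoint lies in $\Acal_c$ while the parent endpoint lies in the single partition $\Acal_{(j,p_j)}$ (or $\Acal_{(r,c^\star)}$ if $j$ is the root), giving $\Vcal_c\subset\Acal_c\cup\Acal_{c'}$ for a single $c'$, as required.

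The only borderline case is the distinguished root edge $c^\star=(r,c^\star)$, where both endpoints happen to live in the same partition $\Acal_{c^\star}$; then $\Vcal_{c^\star}\subset\Acal_{c^\star}$, which is a strengthening, not a violation, of Assumption \ref{ass:demand} (one may pick any other $c'$ to play the formal role in the inclusion). With both assumptions verified, Proposition \ref{unique_multiclique_PFS} applies and yields a unique fixed point meeting the global proportional fairness property, which is precisely the claim. The main obstacle I anticipate is this root edge degeneracy, together with ensuring that ties in demand (if several children of the root share the maximum) do not break the partition into more than two components per clique; the strict inequality parent-child together with a mild tie breaking rule among sibling children is enough to close this loophole.
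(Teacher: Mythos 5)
Your proposal is correct and follows essentially the same route as the paper, which justifies the corollary in a single remark by observing that the parent--child demand ordering is just a restatement of Assumption \ref{ass:demand} for tree topologies, so that Proposition \ref{unique_multiclique_PFS} applies. Your write-up is in fact more careful than the paper's: you explicitly verify Assumption \ref{ass:consecutive} (trivial since each maximal clique of a tree is an edge with at most one local node) and you flag the root-edge degeneracy and the need for strict inequalities or a tie-breaking rule to keep the $\Acal_c$ a proper partition, points the paper glosses over.
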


This is due to the fact that the condition in this corollary can be seen as an alternative way to state Assumption \ref{ass:demand} for this topology, where we recall that the assignment to every partition ${\cal A}_c$ is based considering the overall demand of nodes (see \eqref{eq:partition_definition}).
This is also a pleasing result, because if the tree is used for data aggregation, it would be natural to have higher demand at the higher level of the tree.
\subsubsection{Fixed points}\label{subsec:fixed_points}
Next we describe what are the unique fixed points attainable under Proposition \ref{unique_multiclique_PFS} (for brevity we assume a single demand value $D_i=D, \forall i\in{\mathcal V}$).
Let us call $\T_c$ the portion of the frame available for the  nodes in $\mathcal{A}_c$. We have: 
\begin{equation}\label{eq:portion_frame_available}
\T_c=\begin{cases}\T&\mbox{if}~~\Acal_c={\cal V}_c\\\T-|{\cal S}_{cc'}|(T_{c'}+\delta_{c'})+\delta_{c'}\hspace{-0.2cm}&\mbox{if}~~\Acal_{c}\!\subset\!{\cal V}_c, \Acal_{c'}\!\supset\!\Vcal_c\setminus\Lcal_c
\end{cases}
\end{equation} 
and Assumption \ref{ass:demand} guarantees the existence of such unique cluster $c'$.
In \eqref{eq:portion_frame_available}, $T_{c'}$ and $\delta_c'$ represent respectively the time slot and the guard space before and after every node $v\in{\cal A}_{c'}$.
They can be computed recursively following the decreasing order in \eqref{cluster_sorting} as follows:
\begin{align}
T_c&=\begin{cases}\dfrac{D}{D+\delta}\dfrac{\T_c}{|{\cal A}_c|}~~&\mbox{if}~~\T_c=\T\label{eq:T_c_fixed_point}\\
\dfrac{D}{D+\delta}\dfrac{\T_c}{|{\cal A}_c|+\frac{\delta}{D+\delta}}~~&\mbox{if}~~\T_c<\T
\end{cases}\\
\delta_c&=\dfrac{\delta}{D}T_c
\end{align}
At the fixed point, $\forall i \in {\cal A}_c$, we will have
\begin{equation}
(\Phi_i-\Psi_i)\!\!\!\!\!\mod 1 \cdot \T=T_c
\end{equation}
To prove this represents a fixed point we will evaluate $\Phi^*_i$ and $\Psi_i^*$ at time $t_i$ (the time when node $i$ makes its update) with \eqref{eq.Ntarget_s},\eqref{eq.Ntarget_e}
(see also Fig.\ref{fig:shared_node_update}):
\begin{align*}
\Phi^*_i(t_i){\T}&=\dfrac{D+\delta}{D+2\delta}\dfrac{D+2\delta}{D}T_c=\dfrac{\delta}{D}T_c+T_c=\Phi_i(t_i){\T}\\
\Psi_i^*(t_i){\T}&=\dfrac{\delta}{D+2\delta}\dfrac{D+2\delta}{D}T_c=\dfrac{\delta}{D}T_c=\Psi_i(t_i){\T}
\end{align*}
So the update procedure will be:
\begin{align*}
\Phi_i(t_i^+)&=(1-\beta)\Phi_i(t_i)+\beta \Phi^*_i(t_i)=\Phi_i(t_i)\\
\Psi_i(t_i^+)&=(1-\beta)\Psi_i(t_i)+\beta \Psi_i^*(t_i)=\Psi_i(t_i)\end{align*}
which shows that the timers will keep their positions unchanged. 
Here we have assumed that $i\in{\cal A}_c$ where ${\T}_c<{\T}$ but the derivation would be exactly the same if we remove the additional term $\frac{\delta}{D+\delta}$ in \eqref{eq:T_c_fixed_point}.
The interest in the derivation of the fixed points is given by their natural connection with the portion of the frame made available to each node.
Numerical results, supporting our theoretical analysis, are next.
\section{Simulation Results}\label{simulation_results}
\subsection{PCO synchronization}
We first show an example in Fig. \ref{fig:pco1} of convergence to the fixed point with and without propagation delays for the PCOs network in Fig. \ref{fig:topology}, where we plot only the components of $\bDelta$ relative to an arbitrary node in $\mathcal{L}_1$. We considered all the distances between the nodes to be equal such that $\tau_{ij}=\tau, \forall i\neq j$.
\begin{figure}
\centering
\includegraphics[scale=0.4]{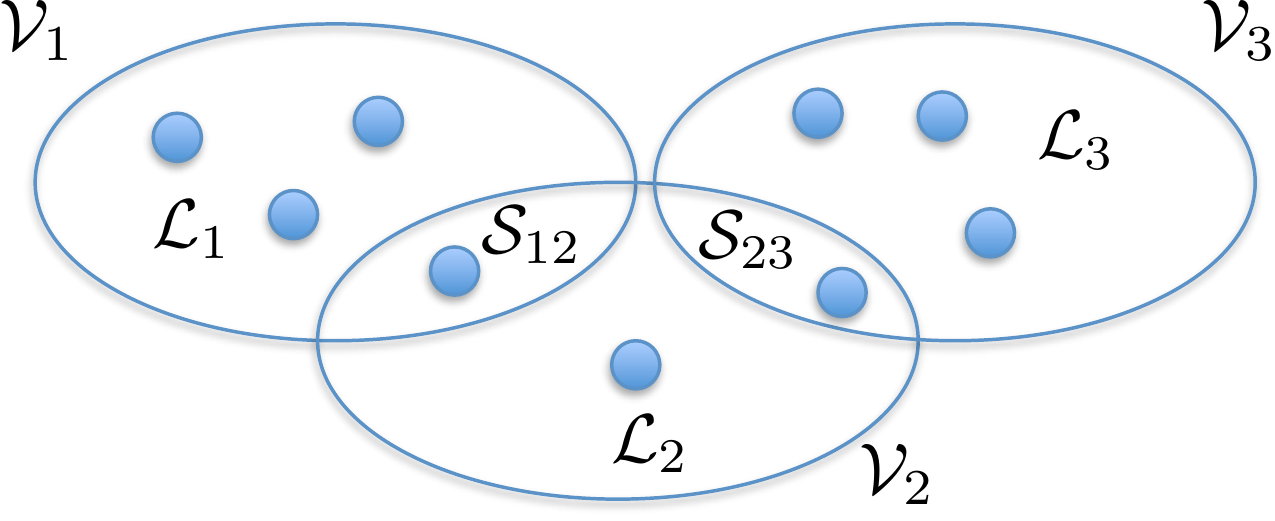}
\caption{Topology considered in the experiments in Fig.\ref{fig:pco1}-\ref{fig:sched1}-\ref{fig:histogram}}\label{fig:topology}
\end{figure}
\begin{figure}
\centering
\subfloat[PCO Evolution without delay]{
\includegraphics[width=0.8\linewidth]{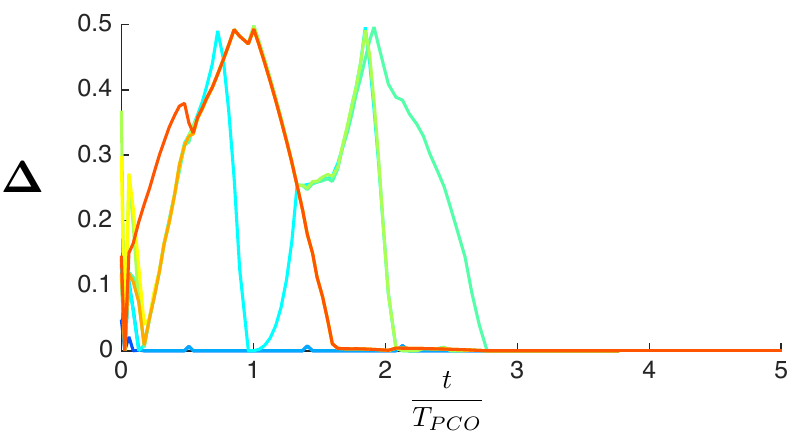}\label{fig:PCO_no_delay}}
\hfill
\subfloat[PCO Evolution with delay]{
\includegraphics[width=0.8\linewidth]{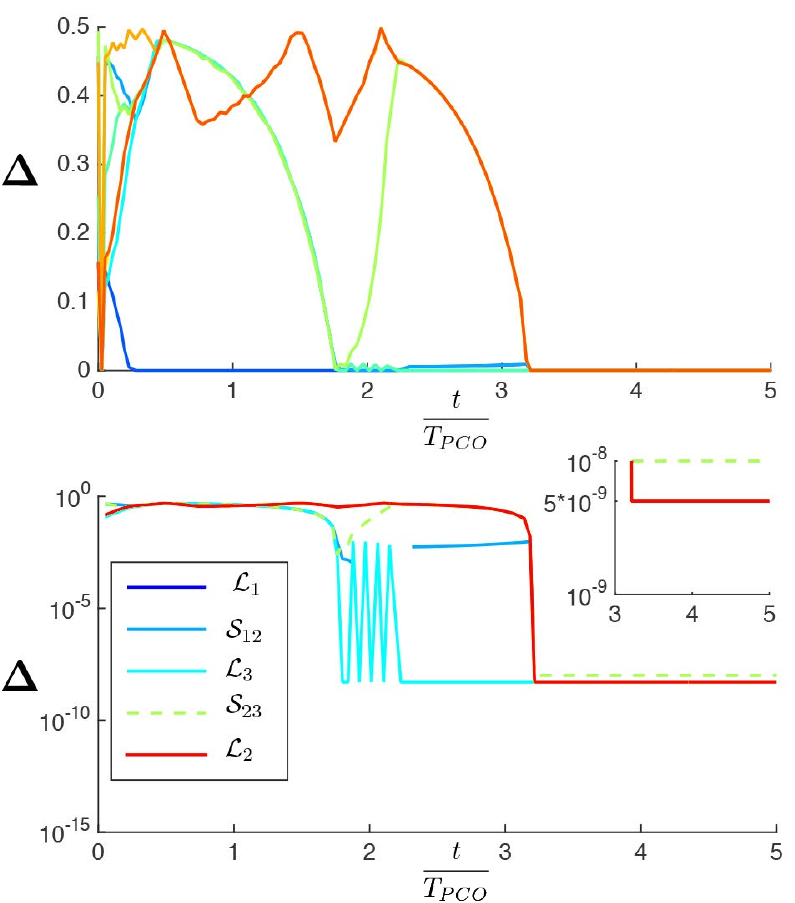}\label{fig:PCO_delay}}
\caption{The evolution of the synchronization with and without delays given the topology seen in (c), and the light blue node as arbitrary chosen reference. Parameters: $T=1s$, $\alpha=1e-2$, distance between each connected node $1m$, signal travel-speed $2e8m/s$, refractory period $1e-2$, uniform random initialization.}\label{fig:pco1}\vspace{-0.5cm}
\end{figure}
We can see that without delays in Fig.\ref{fig:PCO_no_delay} we achieve perfect synchronization, while with delay the nodes remain separated by their propagation delays (see Fig.\ref{fig:PCO_delay}, where we also plot the components in $
\log$-scale.).
In fact, observing the plot in $\log$-scale in Fig. \ref{fig:PCO_delay} we can see that there is a component of $\Delta$ equal to $2\tau$ for the node in $\mathcal{S}_{23}$, the components for the nodes in $\Lcal_2\cup\Lcal_3\cup\mathcal{S}_{12}$ are equal to $\tau$ and the components for the other nodes in $\Lcal_1$ are equal to $0$, from which we can conclude that the {\it head} node is the one node in $\mathcal{S}_{23}$ and all the differences are in perfect agreement with our analysis.
Furthermore, in light of Proposition \ref{fixed_points_theorem_delays} and equation \eqref{residual_sync_error}, we simulated the average residual synchronization error $\mathbb{E}\{\Delta_{\max}\}$ for the line and the star networks with a variable number of nodes.
The probabilities ${p}_h$ for $h=1,2,\dots,N$ are not known, however for any given topology it is possible to bound the residual synchronization error considering the best and the worst case.
In Fig. \ref{line_uniform_tau_deterministic} we show the synchronization accuracy averaged over random initial conditions for {\bf line-networks} with an increasing number of nodes but a constant end to end delay between the nodes at the two network edges. 
The worst case is represented by $\tau_{\max}$ (which is the only possible case for $N=2$), while the best case for a generic $N>2$ is $\tau_{\max}/2$ with the {\it head} node being at the middle of the line.
From Fig. \ref{line_uniform_tau_deterministic} it is possible to notice the saturation of $\mathbb{E}\{\Delta_{\max}\}\approx\frac{3}{4}\tau_{\max}$, half way between the two extremes.

In Fig. \ref{star_uniform_tau_deterministic}, assuming that the delay is proportional to the relative distance, we plot the results for a {\bf star topology} where the nodes are uniformly distributed over a disc with equal radius, leading to equal delay $\tau_{\max}$, except two nodes that are kept fixed at the center and at the edge of the disc. In this case, as expected, increasing the number of nodes degrades the performances. The initial case with $N=2$ is the best case with $\mathbb{E}\{\Delta_{\max}\}=\tau_{\max}$. For this topology the worst case is represented by $2\tau_{\max}$. We notice an oscillation first (due to  $\mathbf{p}_h$) and then, again, a saturation to $\mathbb{E}\{\Delta_{\max}\}\approx\frac{4}{3}\tau_{\max}$.
\begin{figure}
\centering
\subfloat[Line network]{
\hspace{-.2cm}\includegraphics[scale=0.24]{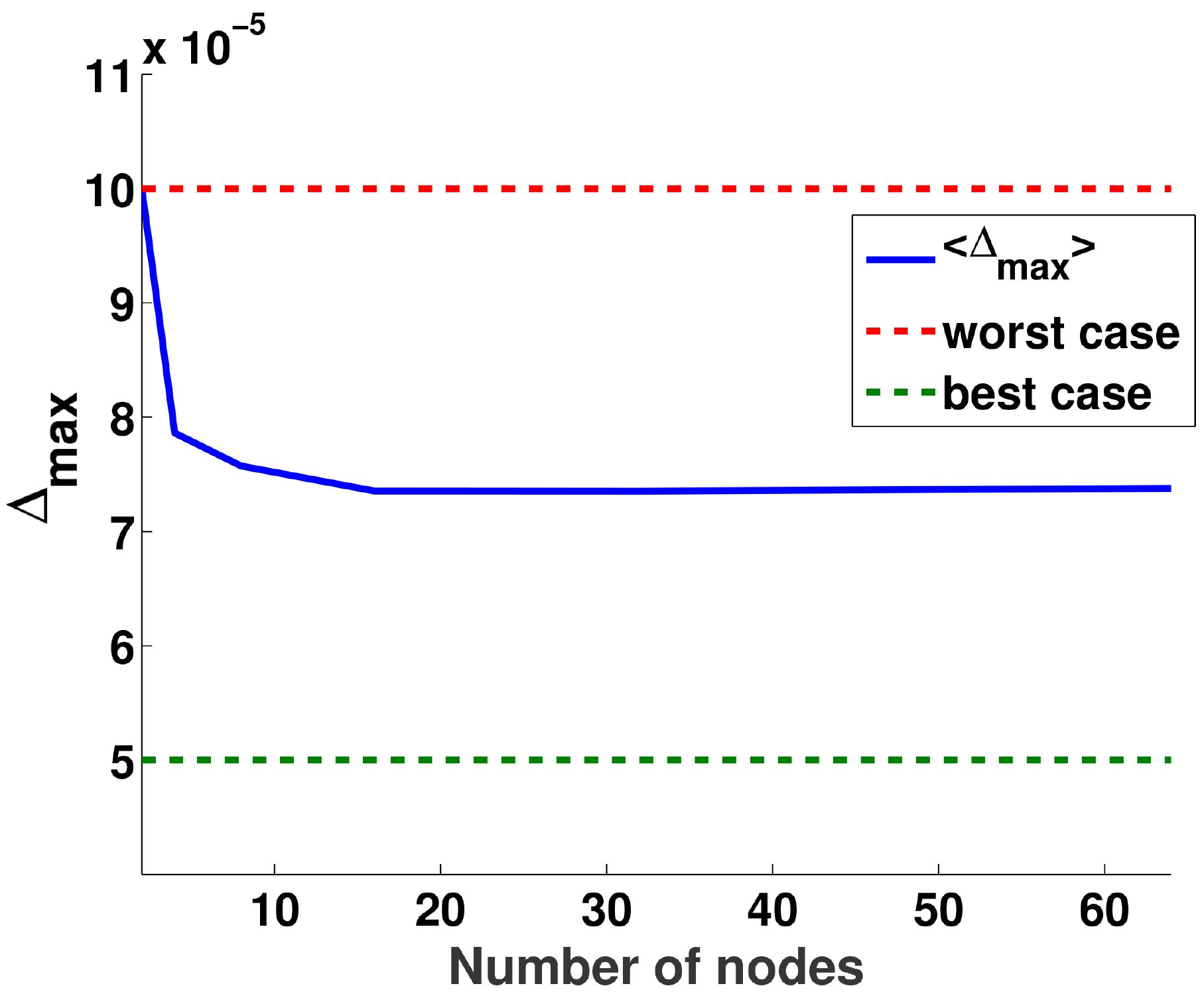}
\label{line_uniform_tau_deterministic}}~\hspace{-.3cm}
\subfloat[Star network]{
\includegraphics[scale=0.24]{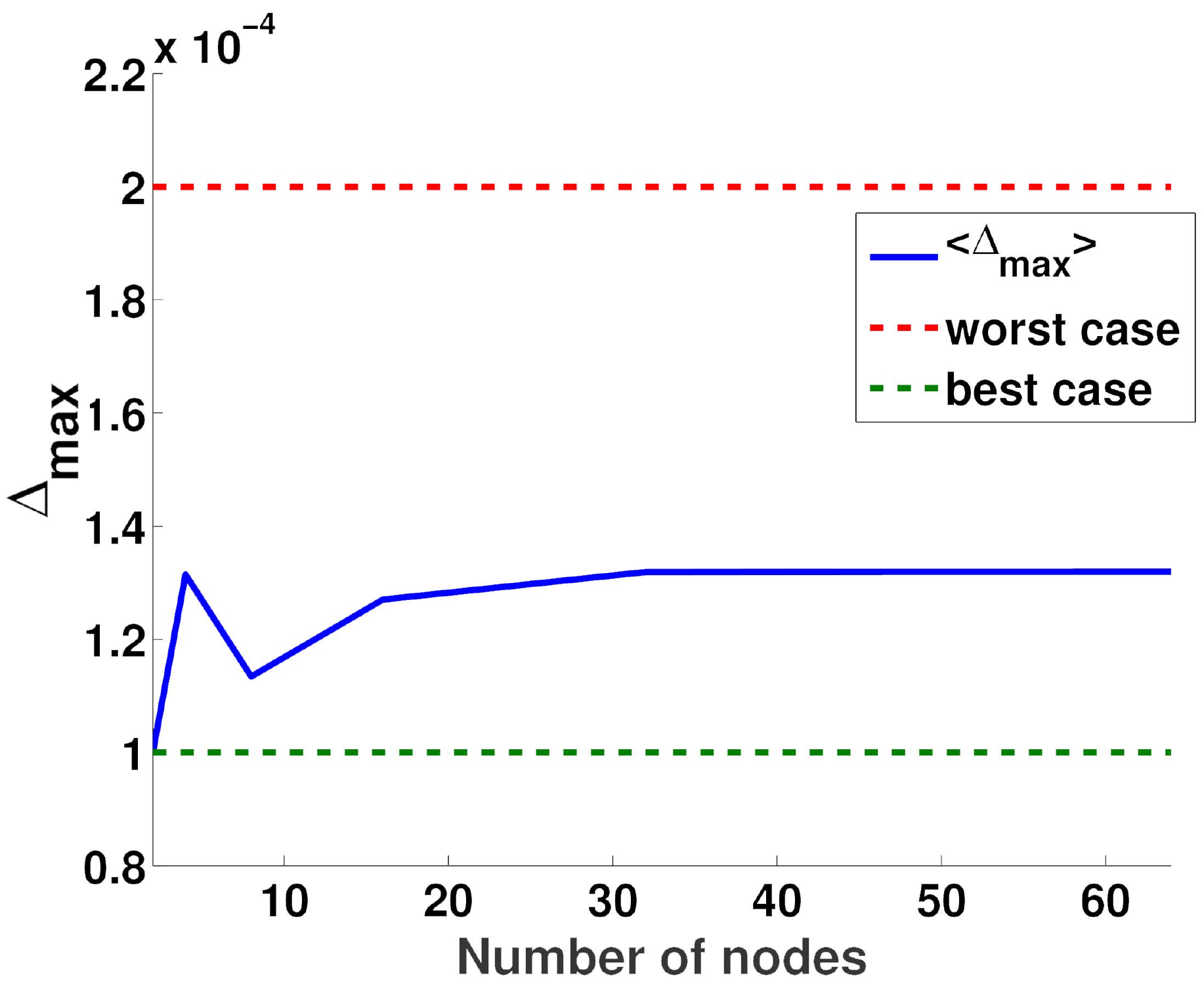}
\label{star_uniform_tau_deterministic}}
\caption{Average maximum displacement vs. network size.}
\label{fig:animals}
\end{figure}

\subsection{Scheduling convergence}
In Fig. \ref{fig:lambda_2} we show the accuracy of the approximate eigenvalues derived in Proposition \ref{convergence_rate_lemma} for the single clique update matrix $\mathbf{M}^c$. The circles correspond to our approximations in \eqref{lambda_equation} and \eqref{lambda_k} (see Appendix \ref{convergence_rate_proof}) and the crosses correspond to the numerical computed values. As expected, the accuracy of our estimate of the second largest eigenvalue grows with $\mathcal{V}_c$.
\begin{figure}
\begin{center}
\includegraphics[scale=0.3]{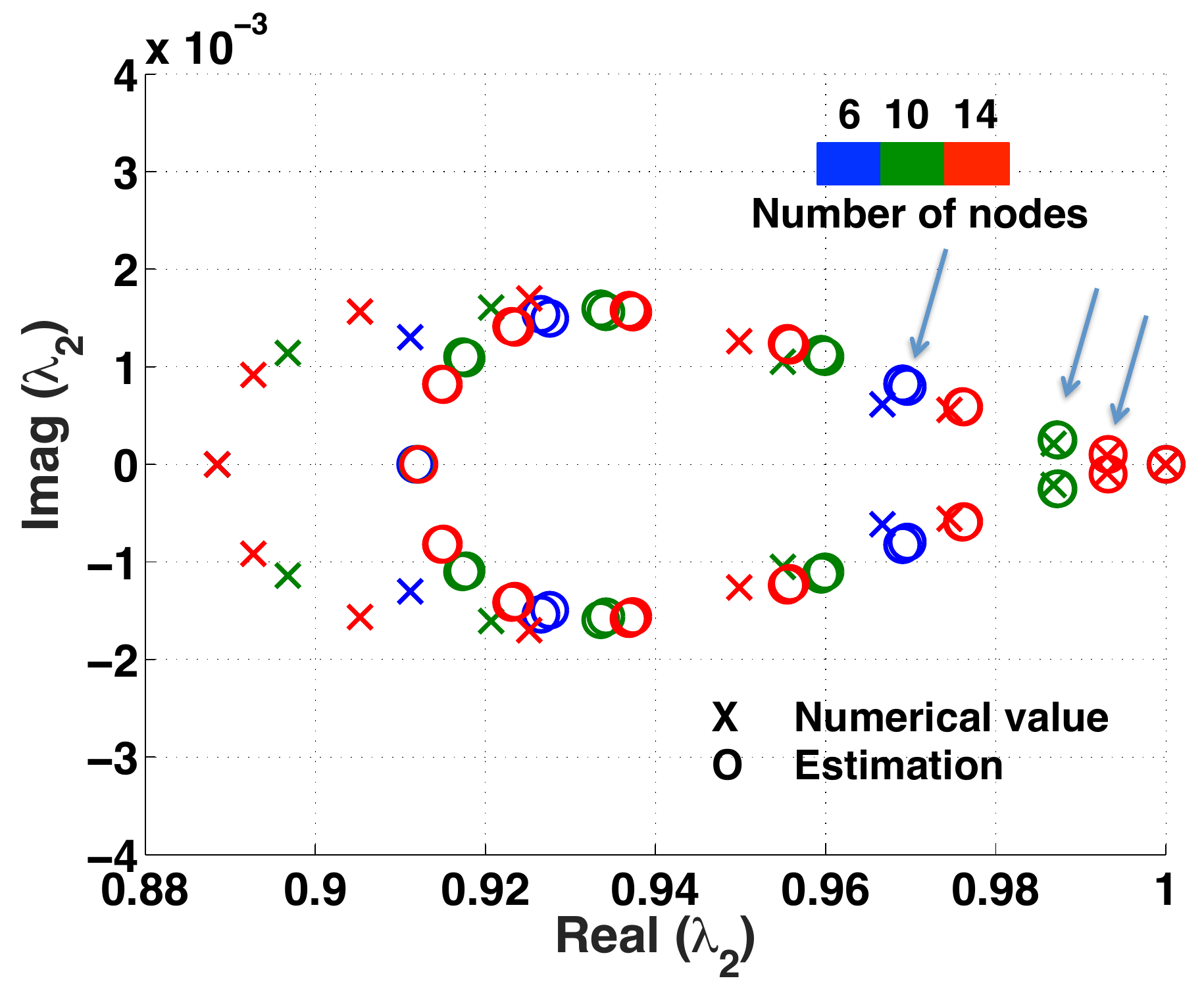}
\end{center}
\caption{Numerical evaluation of the eigenvalues of the matrix $\mathbf{M}^c$ compared with the approximation in \eqref{z_k}.}\label{fig:lambda_2}
\end{figure}
In Fig. \ref{fig:sched2} we present the attainable TDMA scheduling by a two clique topology with $|\Lcal_1|=5$, $|\Scal_{12}|=2$, $|\Lcal_2|=2$ ($D_i=D=4$~$\forall i\in \Vcal$ and $\delta=1$). The local nodes in both $\Lcal_1$ and $\Lcal_2$ occupy consecutive portions of the frame, therefore Assumption \ref{ass:consecutive} is satisfied and in light of Theorem \ref{2_clique_PFS} we have convergence to the unique fixed point that satisfies the {\it global proportional fairness} criterion. In the plot, the start and end timers of each node are shown with the same color, solid lines represent the evolution of the timers and dashed lines represent the predicted fixed point from our analysis in \ref{subsec:fixed_points}. 
\begin{figure}
\subfloat[${\cal V}_1$]{
\includegraphics[scale=0.32]{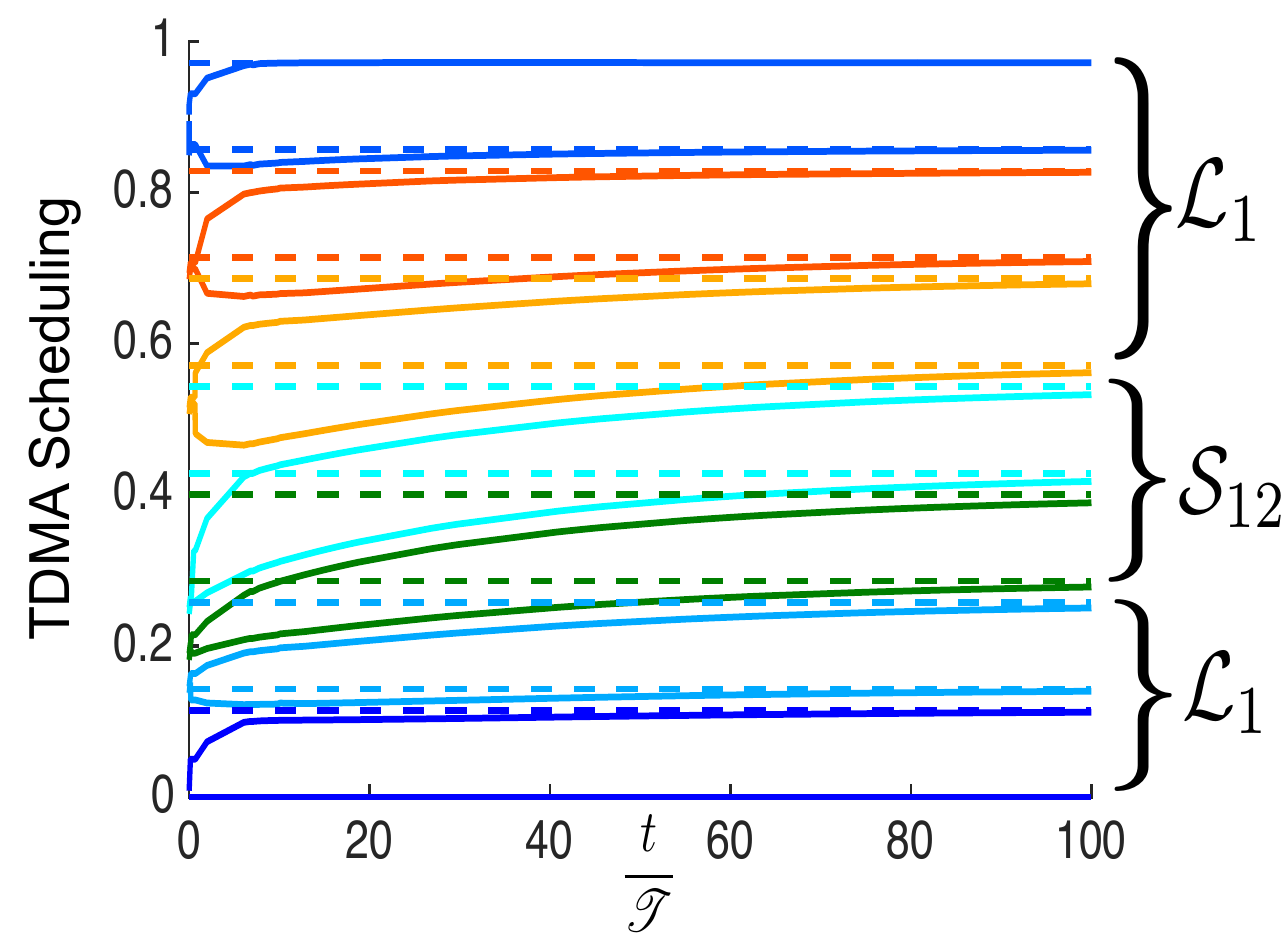}}
\subfloat[${\cal V}_2$]{\includegraphics[scale=0.32]{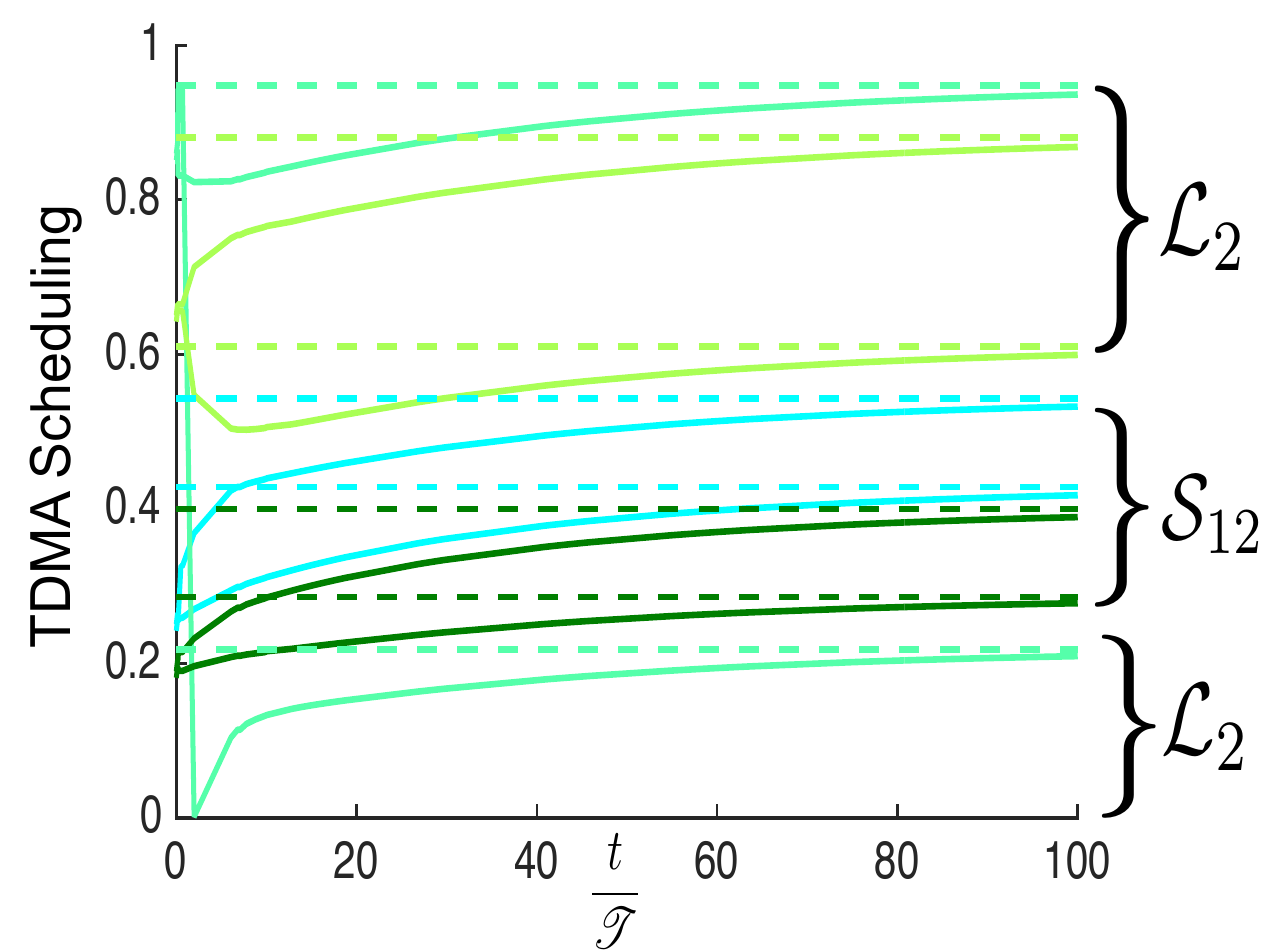}\label{fig:schedb2}}
\caption{TDMA scheduling for a two cliques topology}\label{fig:sched2}
\end{figure}
In Fig. \ref{fig:sched1} we present the case discussed in Appendix \ref{app:proof_set_fixed_points_multiclique} where we have a set of possible fixed points, i.e. a set of attainable schedules. While each node in ${\cal V}_1$ and  in ${\cal V}_3$ reaches its one and only possible schedule, the guard-space between the two shared nodes in ${\cal V}_2$ allows a range of fixed points
 for the local node in  ${\cal V}_2$. The range is limited on both sides such that the local node in ${\cal V}_2$ is never the predecessor or successor of any of the two shared nodes. In Fig. \ref{fig:schedb} we show the range of possible start beacons $\Phi$ with two dashed lines and the range of end beacons $\Psi$ with a dash-dotted line of the same color in the simulation. In Fig. \ref{fig:histogram} we plot the histogram of different shares obtained by the node in $\Lcal_2$ obtained by MonteCarlo simulations. We can see the range is the one predicted by our equations in Appendix \ref{app:proof_set_fixed_points_multiclique}. A pleasant result is that in larger number of cases ($\sim 43\%$) the local node gets the maximum share possible, i.e. {\it global proportional fairness} is often obtained, even though the conditions in Assumption \ref{ass:demand} are not met.
\vspace{-.3cm}
\begin{figure}
\subfloat[${\cal V}_1$]{
\includegraphics[scale=0.3]{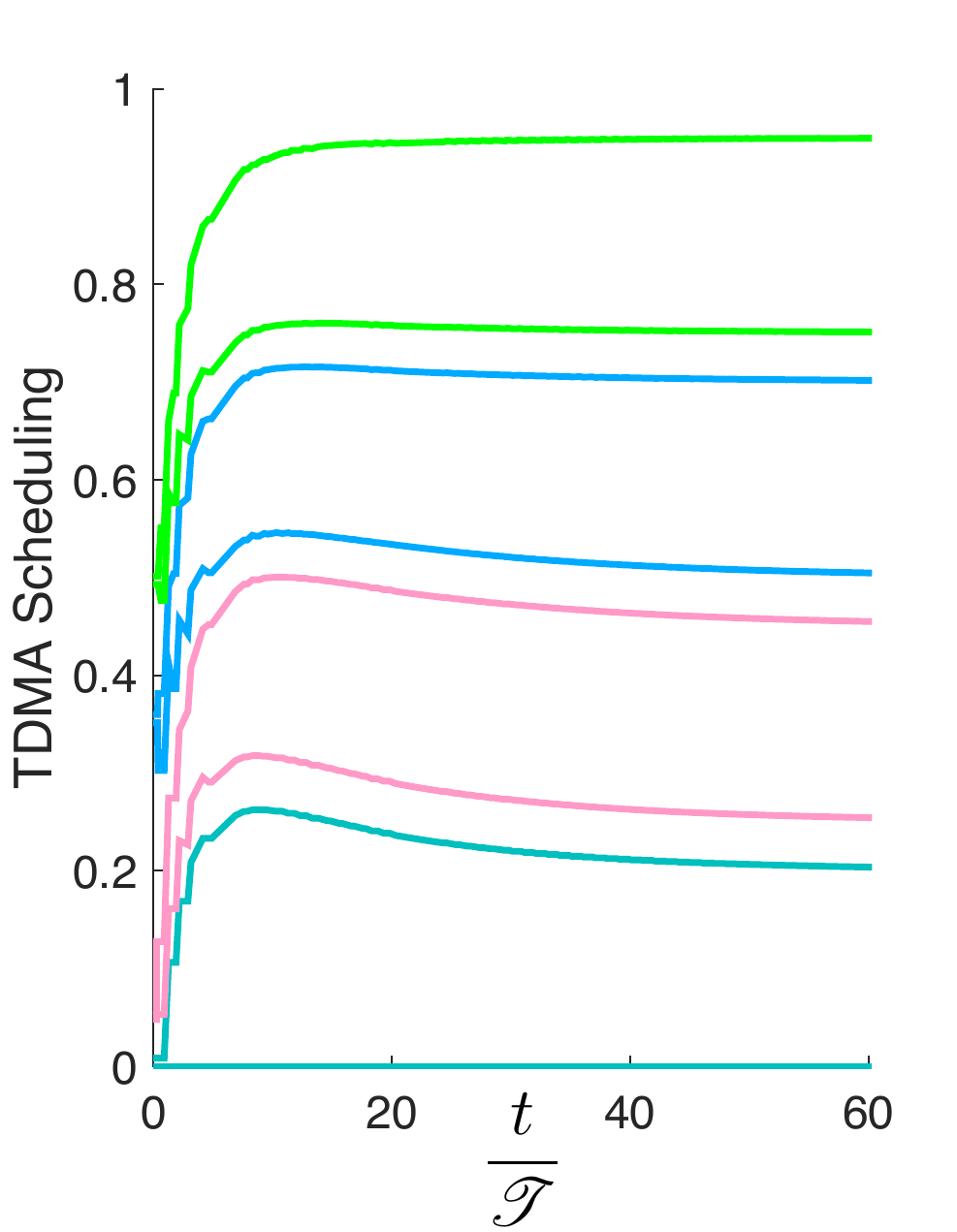}}
\subfloat[${\cal V}_2$]{\includegraphics[scale=0.3]{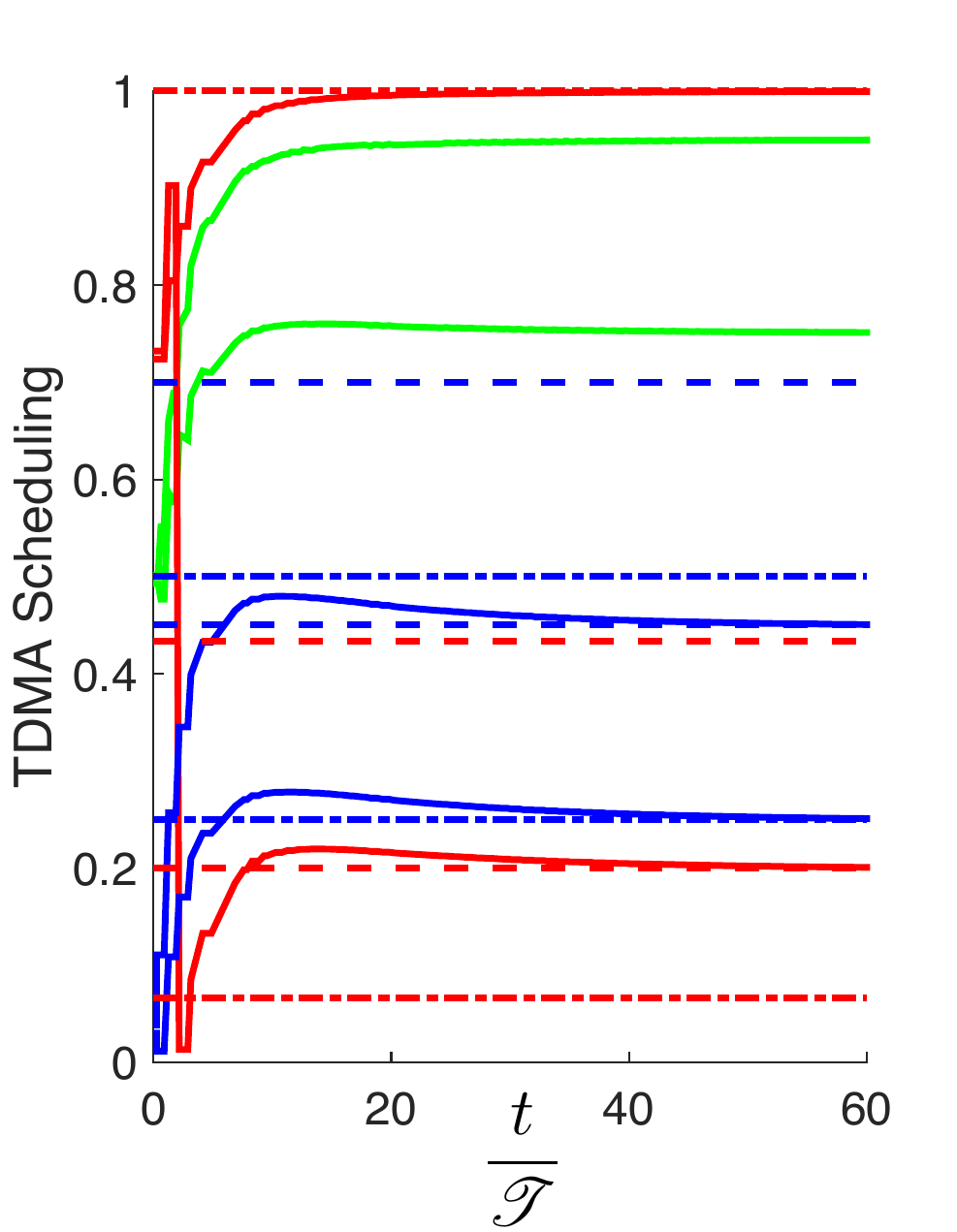}\label{fig:schedb}}
\subfloat[${\cal V}_3$]{
\includegraphics[scale=0.3]{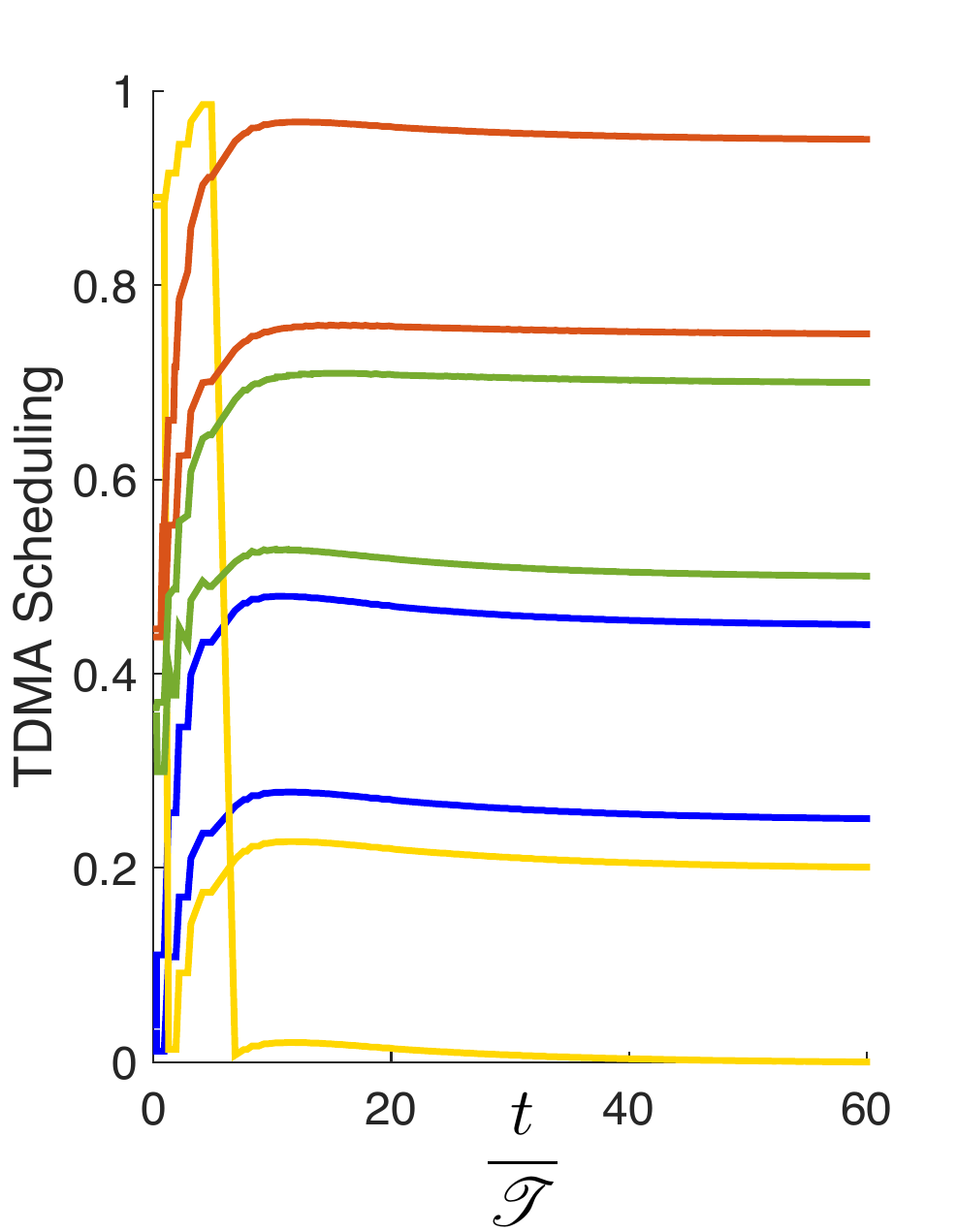}}
\caption{TDMA scheduling of the topology in Fig. \ref{fig:topology}. Note that the plots a-b-c are circular, thus `1' is adjacent to `0', and thus the `jump' of the yellow node in ${\cal V}_3$.}\label{fig:sched1}
\end{figure}
\begin{figure}
\centering
\includegraphics[scale=0.3]{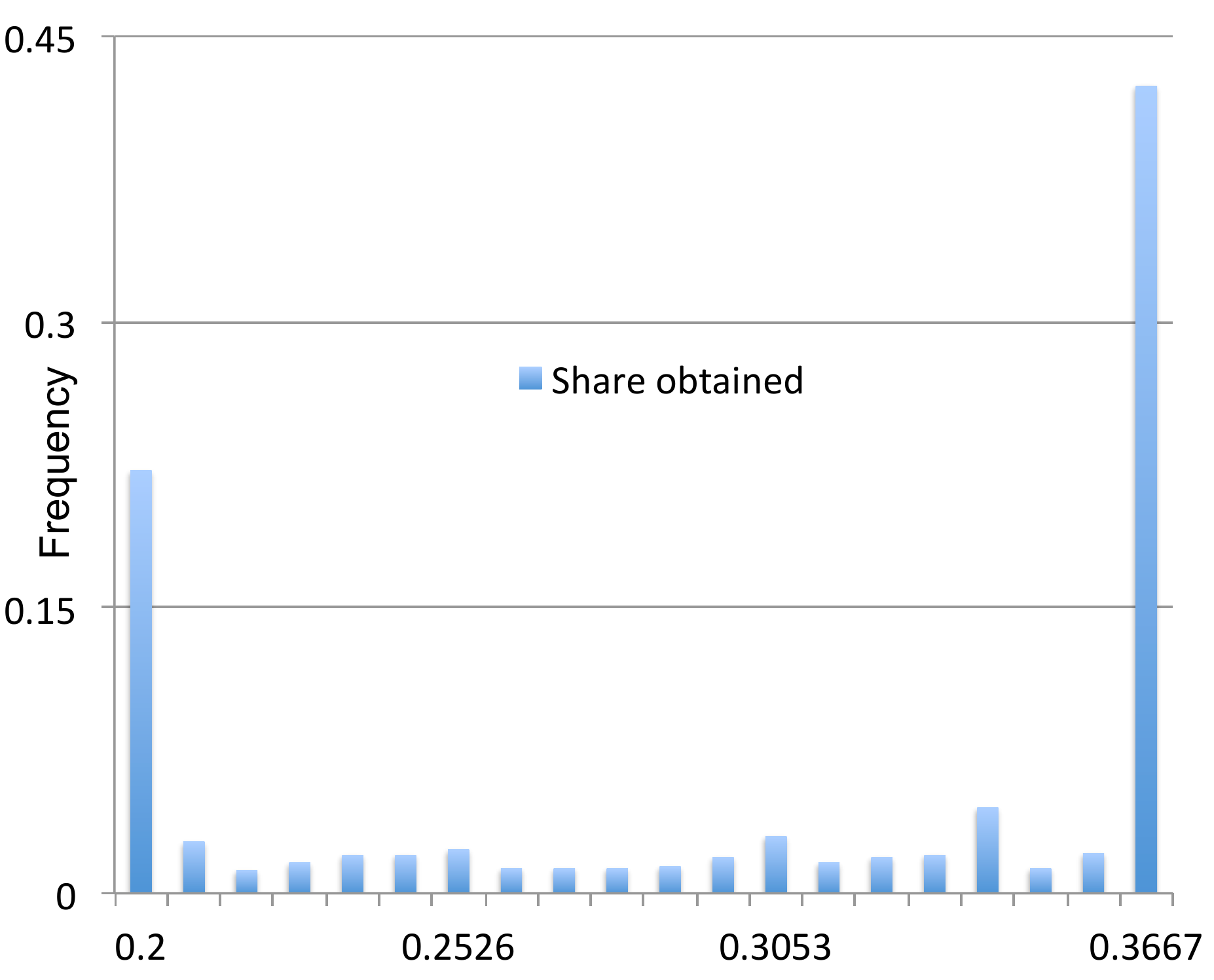}
\caption{Transmission share obtained by the local node in $\Lcal_2$ at convergence, in the range predicted in Appendix \ref{app:proof_set_fixed_points_multiclique}.}\label{fig:histogram}
\end{figure}
\section{Conclusions}
In this work, further results regarding the convergence of synchronization and scheduling in locally connected networks were derived. In particular, for PCO synchronization, the convergence was shown to occur almost surely for a network of $3$ nodes and was characterized in terms of final accuracy for larger networks. For PCO scheduling, the convergence is shown for networks with mild conditions on the overlapping set of maximal cliques. The findings were sustained with the numerical results, taking into account both accuracy and convergence. 
This work advances the theory on the convergence of PCOs in locally connected networks.
\vspace{-.2cm}
\begin{appendix}
\vspace{-.1cm}
\subsection{Proof of Proposition \ref{PCO_no_delays_fixed_point}}\label{proof_PCO_no_delays_fixed_point}
We first show $\bDelta(t^*)=0\Rightarrow \forall t>t^*~~\bDelta(t)=\bDelta(t^*)$ which indicates the synchronous state is a fixed point. 
The argument is straightforward, since all the firings events are received instantaneously by nodes that are also firing, i.e. their phase is equal to $0\mod{1}$ and it is immediate to see no node whill change its phase, i.e. $\min\{(1+\alpha)0\mod{1},1\}=0\mod{1}$ therefore no change in $\bDelta$ will occur from this point on.
Then we prove $\forall t>t^*~~\bDelta(t)=\bDelta(t^*)\Rightarrow \bDelta(t^*)=0$ by contradiction.
Let us assume $\Delta_{ij}(t^*)\neq 0$ for some $i,j$. Without loss of generality we can assume $e_{ij}=1$, in fact if $\Delta_{ij}\neq 0$ and $e_{i,j}\neq 0$ then, by looking at the edges $k\ell$ over the path ${\cal P}_{ij}$ we must find a nonzero phase difference $\Delta_{k\ell}\neq 0$, $e_{k\ell}=1$.
Then, without loss of generality let us consider the firing of node $i$ heard by node $j$. Node $j$ will update its phase and $\Delta_{ij}$ will either increase or decrease, unless the phase of node $j$ is $0$ when node $i$ is firing, which would contradict the hypothesis that $\Delta_{ij}\neq 0$.  
We can consider the isolated event ``node $j$ hears node $i$ firing and updates its phase'', since each other event that occurs simultaneously will not change the phase of node $i$ (which is equal to $0\mod{1}$) and will only potentially move even forward node $j$, further increasing or decreasing $\Delta_{ij}$. 
\subsection{Proof of Proposition \ref{PCO_conv_3_nodes}}\label{proof_PCO_conv_3_nodes}
In order to prove the almost sure convergence of a $|\mathcal{V}|=3$ nodes network to $\bDelta={\bf 0}$, let us label the center node as $1$.
It is easier to focus on the evolution of the variables
\be
\boldsymbol{\Xi}\triangleq\{\Xi_{12},\Xi_{13}\},~~~\Xi_{ij}=(\Phi_i-\Phi_j)\!\!\!\mod{1} 
\ee
and, since by definition \eqref{eq:delta_1_definition} we have $\Delta_{ij}=\min\{\Xi_{ij},\Xi_{ji}\}$, if we have convergence to $\mathbf{0}$ (synchronization) for $\boldsymbol{\Xi}$ we also have it for $\bDelta$. We focus on the line network with nodes $\{1,2,3\}$ and edges $\{(1,2),(1,3)\}$ since the case of a fully connected network is covered by \cite{strogatz}. Since the evolution of $\bXi(t)$ occurs in jumps that are triggered by the firing events, we can define $\boldsymbol{\Xi}[k]\triangleq \bXi(t_{f[k]})$, where $f[k]$ is the index of the node that is generating the $k$-th firing and $t_{f[k]}$ is the time for which the $k$-th firing occurs and focus on the evolution of $\bXi[k],\bXi[k+1],\dots$.

We highlight that for the case of $3$ nodes (or any tree network for that matter) it is possible to consider each firing event separately, since the only case for which a node is affected by two simultaneous firing events is when node $1$ can hear node $2$ and node $3$.
To handle this case we first introduce
\begin{lemma}\label{edge_sync_lemma}
{\it
Suppose $\Phi_2(t)=\Phi_3(t)$ and, equivalently, $\Xi_{12}(t)=\Xi_{13}(t)$. Then, $\Xi_{12}(t')=\Xi_{13}(t')$, for all $t'\geq t$, regardless of the sequence of firing events occuring after $t$.}
\end{lemma}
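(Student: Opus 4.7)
\subsection*{Proof proposal for Lemma \ref{edge_sync_lemma}}

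The plan is to exploit the structural symmetry of the line network $\{(1,2),(1,3)\}$ between nodes $2$ and $3$: since neither is a neighbor of the other, their only coupling to the rest of the network is through node $1$, and that coupling treats them identically. I will argue that this symmetry is preserved by each possible event, so $\Phi_2=\Phi_3$ is an invariant.

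First, observe that between firing events, every phase increases at the same unit rate, so differences $\Phi_2(t)-\Phi_3(t)$ are preserved in the free-flow regime. It therefore suffices to verify preservation of the equality $\Phi_2(t)=\Phi_3(t)$ across every firing event. I will split into three cases according to which node fires. \emph{Case (a)}: node $1$ fires. Since $e_{12}=e_{13}=1$, both nodes $2$ and $3$ receive the firing and apply the update \eqref{eq:update_local_connectivity}. Because they start this instant with the same phase $\Phi_2(t)=\Phi_3(t)$, and the update is a deterministic function of the current phase and $\alpha$, we get
\begin{equation*}
\Phi_2(t^+)=\min\{(1+\alpha)\Phi_2(t),1\}=\min\{(1+\alpha)\Phi_3(t),1\}=\Phi_3(t^+),
\end{equation*}
so equality is preserved. \emph{Case (b)}: node $2$ fires. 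This requires $\Phi_2(t)=1$, hence $\Phi_3(t)=1$ as well, so node $3$ fires simultaneously. After firing, both phases reset (wrap modulo $1$) to $0$, so $\Phi_2(t^+)=0=\Phi_3(t^+)$. The only subtlety is what happens to node $1$: it hears both firings at the same instant, and here I will invoke the absorption property highlighted in the introduction, which treats multiple interfering pulses as a single event, giving $\Phi_1(t^+)=\min\{(1+\alpha)\Phi_1(t),1\}$. Crucially, this update does not touch $\Phi_2$ or $\Phi_3$, so their equality at $0$ persists. \emph{Case (c)}: node $3$ fires is identical to (b) by the $2\leftrightarrow 3$ symmetry.

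Combining the three cases, whatever firing sequence occurs after $t$, at each firing instant the invariant $\Phi_2=\Phi_3$ is preserved, and the invariant is also preserved in the free-flow intervals between firings; hence $\Xi_{12}(t')=\Xi_{13}(t')$ for all $t'\geq t$. The main obstacle I anticipate is the simultaneous-firing event in Case (b)/(c), which requires explicit appeal to the absorption convention to avoid the seeming ambiguity of having node $1$ processing two pulses at once; but this is exactly the modeling choice already made by the paper, so the argument goes through cleanly without additional assumptions.
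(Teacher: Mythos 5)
Your proof is correct and follows essentially the same argument as the paper: both phases update identically when node $1$ fires, and when node $2$ (or $3$) fires the other is also at the firing point and, not being neighbors, they do not affect each other. The extra detail about node $1$ absorbing the two simultaneous pulses is consistent with the paper's model but immaterial to the invariant, as you note.
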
 
\begin{proof}
The statement follows from the fact that at the firing events of node $1$, the two nodes update simultaneously and change their phase by the same amount, while when one of the two fires, the other one is also at the firing point and they do not affect each other.
\end{proof}
Lemma \ref{edge_sync_lemma} also implies that when node $2$ and $3$ are synchronized, we have equivalently a fully connected two-node network whose convergence occurs for all initial values of the nodes' phases, except for a set of measure zero, as shown in \cite{strogatz}. Hence, the network converges almost surely to the fixed point $\bDelta={\bf 0}$ in this case.
We can then proceed in our proof treating each firing event separately, and in particular consider two cases: the case where the firing order is maintained and the case where overtaking of the firing order may occur among nodes. We initially assume all nodes have different phases.
\underline{Case 1:} Suppose that the firing order does not change after some $k_0$ (the $k_0$-th firing event) and that the nodes are labeled in the order of their firing after this point, with node $1$ being the node firing in the middle. That is, suppose that, for some $k_0$, $\Xi_{12}[k]<\Xi_{13}[k]$, for all $k\geq k_0$. If $f[k]=1$ and no absorption occurs then, we have $\Xi_{13}[k+1]-\Xi_{12}[k+1]=(1+\alpha)(\Xi_{13}[k]-\Xi_{12}[k])>\Xi_{13}[k]-\Xi_{12}[k]$, that is, the phase difference between nodes $3$ and $2$ increases.
If $f[k]=2$ or $3$ and still no absorption occurs we can see that the phase difference between nodes $3$ and $2$ remains the same (i.e., $\Xi_{13}[k+1]-\Xi_{12}[k+1]=\Xi_{13}[k]-\Xi_{12}[k]$). This implies that, if no absorption occurred after each node has fired, the phase difference between nodes $3$ and $2$ must strictly increase, that is, $0<\Xi_{13}[k]-\Xi_{12}[k]<\Xi_{13}[k+3]-\Xi_{12}[k+3]$. Since $\Xi_{13}[k']-\Xi_{12}[k']<1$, for all $k'$, an absorption must eventually occur in one of the cycles. If absorption occurs between nodes $2$ and $3$, then we are done since, by Lemma \ref{edge_sync_lemma}, we have $\Xi_{12}[k]=\Xi_{13}[k]$ from that point on. If node $3$ absorbs node $1$, then the next one to fire is node $2$ and that will trigger node $1$ to overtake the position of node $3$, which violates our assumption in Case 1. Finally, if node $1$ absorbs only node $2$ at some firing event $k''$ (that is, if $\Xi_{12}[k'']=0$ and $\Xi_{13}[k'']>0$), then the firing of node $3$, which comes immediately after will cause either an absorption of both nodes $1$ and $2$ (which occurs if $\Xi_{13}[k'']\in[\frac{1}{1+\alpha},1)$) or an update of $\Xi_{12}[k''+1]=\alpha \Xi_{13}[k'']$, if $\Xi_{13}[k'']\in(0,\frac{1}{1+\alpha})$. In the former case, the nodes become synchronized and we are done; in the latter case, it must be true that $\Xi_{12}[k''+1]<\frac{\alpha}{1+\alpha}$ and, thus, node $1$ will again absorb node $2$ when it fires. Therefore, even though the phases of nodes $2$ and $1$ may temporarily deviate from each other, they will always become absorbed again once node $1$ fires. In this case, we  again have an equivalent two-node fully connected network (formed by node $3$ and the combination of nodes $1$ and $2$) and, thus, the convergence to the fixed point $\bDelta={\bf 0}$ again follows from \cite{strogatz}. \underline{Case 2:} Suppose that a node may overtake the position of another node in the firing order. This can occur only when the firing of node $2$ triggers node $1$ to increase its phase beyond the phase of node $3$ (causing node $1$ to fire again before node $3$ fires). This is because, when node $3$ fires (i.e., $f[k]=3$), either node $1$ (and, maybe also node $2$) is absorbed (in which case the firing order is considered to be maintained) or no node is absorbed and the following update occurs: $\Xi_{13}[k+1]=(1+\alpha)\Xi_{13}[k]>\Xi_{12}[k]+\alpha \Xi_{13}[k]=\Xi_{12}[k+1]$, which also implies that the firing order remains the same; and, when node $1$ fires, we again have $\Xi_{13}[k+1]\geq \Xi_{12}[k+1]$ and, thus, the firing order is again maintained. Now, suppose that the firing of node $2$ causes node $1$ to overtake the position of node $3$ (that is, $\Xi_{13}[k]\geq \Xi_{12}[k]$ but $\Xi_{13}[k+1]<\Xi_{12}[k+1]$).
This can occur only when $\Xi_{12}[k]\in(0,\frac{1}{1+\alpha})$; otherwise, the firing of node $2$ would have absorbed both nodes $1$ and $3$ (in which case the convergence to the fixed point is immediately achieved). However, if overtaking occurs due to the firing of node $2$, then this means that $\Xi_{13}[k+1]=\Xi_{13}[k]+\alpha\Xi_{12}[k]\!\mod{1}<\Xi_{12}[k+1]=(1+\alpha)\Xi_{12}[k]$. Since $\Xi_{13}[k]\geq \Xi_{12}[k]$, this is possible only when $\Xi_{13}[k]+\alpha\Xi_{12}[k]\geq 1$.
Therefore, $\Xi_{13}[k+1]=\Xi_{13}[k]+\alpha\Xi_{12}[k]\mod{1}=\Xi_{13}[k]+\alpha\Xi_{12}[k]-1<\frac{\alpha}{1+\alpha}$ since $\Xi_{13}[k]< 1$ and $\Xi_{12}[k]\in(0,\frac{1}{1+\alpha})$. In this case, the firing of node $1$, which comes immediately after, will again absorb node $3$, causing the two nodes to be absorbed from that point on. Even though the firing of node $2$ may trigger node $1$ to temporarily overtake node $3$, they will always become absorbed again once node $1$ fires. Consequently, we again have an equivalent two-node fully connected network (formed by node $2$ and the combination of nodes $1$ and $3$); convergence to the fixed point $\bDelta={\bf 0}$ follows from \cite{strogatz}.
In light of this extensive treatment, we can then consider the case where a node is added to a synchronized network. At the time this additional node fires, the nodes connected to it will jump in front of the others but, since no additional firing events will occur before the neighbors of the new node node fire, all the absorption with the other nodes will be restored at the end of the firing round and so we have again an equivalent two-node fully connected network (formed by the added node and the previous synchronized network) and convergence to synchronization will occur almost surely as previously discussed.
\vspace{-.3cm}
\subsection{Proof of Proposition \ref{fixed_points_theorem_delays}}\label{proof_theorem_delays}
We first prove $\forall t>t^*, \bDelta(t)=\bDelta(t^*)\Rightarrow \bDelta(t^*)\in\Fcal $ by contradiction, i.e. by showing that any point $\bDelta$ for which $\Delta_{ij}>\tau_{ij}$ for some $ij$ s.t. $e_{ij}=1$, cannot be a fixed point.
We can initially assume, without loss of generality that $(\Phi_i-\Phi_j\!\!\mod{1})<(\Phi_j-\Phi_1\!\!\mod{1})$  and look at what happens when node $i$ fires.
At the time node $i$ fires, say $t_i>t^*$, since $\Delta_{ij}(t_i)>\tau_{ij}$ and $(\Phi_i-\Phi_j\!\!\mod{1})<(\Phi_j-\Phi_1\!\!\mod{1})$ we have that $\frac{1}{2}<\Phi_j<1-\tau_{ij}$.
Let us initially assume nothing happens (no firing events) before node $j$ hears the firing of node $i$ at time $t_i+\tau_{ij}$.
The phase of node $j$ at time $t_i+\tau_{ij}$ will be $\frac{1}{2}+\tau_{ij}<\Phi_j(t_i+\tau_{ij})<1$ and therefore outside of the refractory period, as long as $\rho<\frac{1}{2}+\min_{ij}\tau_{ij}$. This implies that node $j$ will jump forward and the difference $\Delta_{ij}$ will decrease, which means it is not a fixed point. 
If some firing events happen before the update, i.e at time $t$ where $t_i<t<t_i+\tau_{ij}$ then node $i$ will not move since its phase will be $\Phi_i(t)<\tau_{ij},\forall t_i<t<t_i+\tau_{ij}$ and therefore it is inside the refractory period $\rho>2\max_{ij}\tau_{ij}$. Node $j$ instead, can either move even forward by hearing these additional events (and $\Delta_{ij}$ would further decrease, therefore keeping our argument by contradiction valid) or not move (and therefore there is no problem in neglecting such events), and this concludes our proof. 
The  key aspect of the proof is that nodes can only jump forward after receiving a firing event. 
We then prove $\bDelta(t^*)\in\Fcal \Rightarrow \forall t>t^*, \bDelta(t)=\bDelta(t^*)$ which indicates that all the points in $\Fcal$ are fixed points. Let us consider a general node $i$ and show it will note update its phase. In order to update its phase, node $i$ needs to receive one or more firing events by its neighbours when its phase is outside the refractory period.
Since $\Delta_{ij}\leq\tau_{ij}$ $\forall j, e_{ij}=1$ there are then $2$ cases.\\
\underline{Case 1:} A set $J$ of node $i$'s neighbours fire at time $t'$ when $0\leq\Phi_{i}(t')\leq\min_{j\in J}\tau_{ij}$. 
All these firings will be heard by node $i$ when $\min_{j\in J}\tau_{ij}\leq\Phi_i\leq\min_{j\in J}\tau_{ij}+\max_{j\in J}\tau_{ij}\leq 2\max_{ij}\tau_{ij}\leq\rho$, therefore no update will occur.\\
\underline{Case 2:} A set $J$ of node $i$'s neighbours fire at time $t'$ when $1-\min_{j\in J}\tau_{ij}\leq\Phi_{i}(t')\leq 1$. 
All these firings will be heard when $(1-\min_{j\in J}\tau_{ij}+\min_{j\in J}\tau_{ij})\mod{1}\leq\Phi_i\leq\max_{j\in J}\tau_{ij}\rightarrow ~0\leq\Phi_i\leq\max_{j\in J}\tau_{ij}<\rho$, therefore no update will occur.
If no node updates its phase, then $\bDelta(t)$ remains constant over time and this proves that all $\bDelta\in\Fcal$ are fixed points. 
\vspace{-.2cm}
\subsection{Proof of Proposition \ref{convergence_rate_lemma}}
\label{convergence_rate_proof}
We know $\textbf{M}^{c}$ is a stochastic matrix and then all of its eigenvalues are inside the unit circle except for one.
After a few tedious but straightforward manipulation the $2n$-degree characteristic equation of $\textbf{M}^{c}$ is ($n=|{\mathcal{V}}_c|$):
\begin{align}
&\lambda^{2n}-\lambda^n-(\beta-1)^2(\lambda^n-1)+\beta^2\mu(2\lambda^n-\lambda^{n-1}-\lambda)\nonumber \\
&-\beta \mu(\lambda^{2n-1}+\lambda^{n+1}-\lambda^{n-1}-\lambda)=0\label{lambda_equation}.
\end{align}
This specific form highlights that $\lambda=1$ is solution and, also, that for $\mu=0$ $\lambda^n=1$.
For $\mu>0$ small, the $n-1$ roots of \eqref{lambda_equation} inside the unit circle can be approximated as:
\begin{equation}\label{lambda_k}
\lambda(k)=(1-\varepsilon)e^{j(\frac{2\pi k}{n}-\vartheta)}
\end{equation}
with $k=1,2,....,n-1$ and here $j=\sqrt{-1}$ indicates the imaginary unit. We are interested in the second largest eigenvalue of the system, thus in the highest of these $n-1$ perturbed roots.
For $\mu$ small enough we assume that $\varepsilon,\vartheta\ll 1$.
Then it is possible to substitute \eqref{lambda_k} in \eqref{lambda_equation} and consider:
\begin{equation*}
\left((1-\varepsilon)e^{-j\vartheta}\right)^n\approx (1-n\varepsilon-nj\vartheta)
\end{equation*}
We are then able to solve \eqref{lambda_equation} as a first order equation in $z=\varepsilon+j\vartheta$ and then consider the real part of the solution ($\varepsilon$) and the imaginary part ($\vartheta$) to find the eigenvalues from \eqref{lambda_k}.
The solutions will have the following form:
\begin{footnotesize}
\begin{equation}\label{z_k}
z^*(k)=\dfrac{1}{n}\dfrac{\left(1-\cos{\frac{2\pi k}{n}}\right)}{1-\frac{1}{2}\exp\{-j\frac{2\pi k}{n}\}-\frac{1}{n}\sin\frac{2\pi k}{n}+\frac{1}{\beta\mu}(1-\frac{\beta}{2}-\mu\cos\frac{2\pi k}{n})}
\end{equation}
\end{footnotesize}
It is possible to show that the real part of $z^*(k)$ is concave down with respect to $k$, and thus from \eqref{lambda_k} the second largest eigenvalue is reached for $k=1$ or $k=n-1$.
A simple comparison leads to choose $k=n-1$ irrespective of the value for $\beta$ and $\mu$.
At this point for $n$ large enough, we can use the Taylor series for the trigonometric functions in \eqref{z_k} and approximate
$|\lambda(n-1)|\approx 1-\frac{2\beta \mu\pi^2}{n^3}
$.
Thus to find the second highest eigenvalue of $\textbf{M}^{c}$ we take:
\begin{equation}
|\lambda_2^{c}|\approx\left(1-\dfrac{2\beta \mu\pi^2}{|\mathcal{V}_c|^3}\right)^{|\mathcal{V}_c|}\approx 1-\dfrac{2\beta \mu\pi^2}{|\mathcal{V}_c|^2}
\end{equation}
\vspace{-.2cm}
\subsection{Proof of Theorem \ref{2_clique_PFS}}\label{proof_PFS_convergence}
\subsubsection*{One shared node}
We will study the evolution of $\bUpsilon_1(t),\bUpsilon_2(t)$, as defined in \eqref{eq:system_state_vector_PFS}.
We will show that there is a unique fixed point irrespective of the initial configuration.
If we have only one node shared between the two cliques we can define, without loss of generality, the two system vectors such that $\pi_{|\Vcal_1|}^1=\pi_{|\Vcal_2|}^2=v'$ with $v'$ being the shared node. 
Except for the update of node $v'$, the updates of the other nodes impact variables only in their clique, i.e. in only one of the system vectors.
We can then consider for $c=1,2$ the following matrix  
\be\label{eq:tilde_M_c}
\tilde{\mathbf{M}}^c\triangleq\prod\limits_{k=1}^{|\Vcal_c|-1}\boldsymbol{M}_{\pi_k^c}=\left[
    \begin{array}{cc}
    \tilde{\mathbf{M}}^c_{(2|\Vcal_c| -1)\times(2|\Vcal_c| -1)}&   \mathbf{0} \\
    \begin{array}{cccc} 0&\cdots&0\end{array} & 1 \\
  \end{array}\right]
\ee 
Note that each $\boldsymbol{M}_{\pi_k^c}$ is a left stochastic matrix and so is the product $\tilde{\mathbf{M}}^c$. Because of the structure of $\tilde{\mathbf{M}}^c$,  $\tilde{\mathbf{M}}^c_{(2|\Vcal_c| -1)\times(2|\Vcal_c| -1)}$ is also left stochastic and it is primitive because it contains all positive elements.
Thus the Perron-Frobenius Theorem ensures that $\tilde{\mathbf{M}}^c_{(2|\Vcal_c| -1)\times(2|\Vcal_c| -1)}$  has exactly one eigenvalue equal to 1 and $2|\Vcal_c|-2$ eigenvalues inside the unit circle. Arguing as in \cite{round_robin_pagliari} we can show that the matrix $\tilde{\mathbf{M}}^c$, has the following two eigenvectors for the eigenvalue equal to one:
\begin{align}
\tilde{\bUpsilon}^\star_{c,1}&=\dfrac{\tilde{\gamma}^{c}}{\tilde{\boldsymbol{D}}^{c}}(\delta,D_{\pi_1^c},\delta,D_{\pi_2^c},\dots,\delta,D_{\pi_{|\Vcal_c|-1}^c},\delta,0)^T \label{eq:tilde_Upsilon_star_c_1}\\
\tilde{\bUpsilon}^{\star}_{c,2}&=\;\;\;\;(0,0,0,0,\dots,0,0,0,1)^T\label{eq:tilde_Upsilon_star_c_2}
\end{align}
with $\tilde{\boldsymbol{D}}^c=\sum_{k=1}^{|\Vcal_c|-1}D_{\pi_k^c}$ and $\tilde{\gamma}^{c}=\dfrac{\tilde{\boldsymbol{D}}^c}{\tilde{\boldsymbol{D}}^c+|\Vcal_c|\delta}$.

The complexity introduced by the presence of the shared node $i$ is that the two nodes 
$\Pre(i,t)$, $\Suc(i,t)$ defined in \eqref{eq:Pre_def}-\eqref{eq:Suc_def} might change over time and belong to different cliques.
We introduce the following two sets
\begin{align}
\overline{\Pre}(i,t)&\triangleq\bigcup_{c\in\mathcal{C}_i}\pre(i,c)\setminus\Pre(i,t)\\
\overline{\Suc}(i,t)&\triangleq\bigcup_{c\in\mathcal{C}_i}\suc(i,c)\setminus\Suc(i,t)
\end{align}
In the case of two cliques, the sets $\overline{\Pre}(v',t),\overline{\Suc}(v',t)$ contain only one node and according to our notation, if for example, $\Pre(v',t)=\pi_{{|\Vcal_1|}-1}^1$ then $\overline{\Pre}(v',t)=\{\pi_{{|\Vcal_2|}-1}^2\}$ and so on.
The update of node $v'$ might cause change in $5$ variables (see Fig.\ref{fig:shared_node_update}) that for short notation we indicate with 
\begin{align}
\Theta_s(t)&\triangleq\Psi_{v'}(t)-\Phi_{\Suc({v'})}(t)\pmod{1}\label{eq:Theta_s_def}\\
\Gamma_{v'}(t)&\triangleq\Phi_{v'}(t)-\Psi_{v'}(t)\pmod{1}\\
\Theta_p(t)&\triangleq\Psi_{\Pre({v'})}(t)-\Phi_{v'}(t)\pmod{1}\label{eq:Theta_p_def}\\
\Theta_{ns}(t)&\triangleq\Psi_{v'}(t)-\Phi_{\overline{\Suc}({v'})}(t)\pmod{1}\\
\Theta_{np}(t)&\triangleq\Psi_{\overline{\Pre}({v'})}(t)-\Phi_{v'}(t)\pmod{1}
\end{align}
\begin{figure}
\includegraphics[scale=0.35]{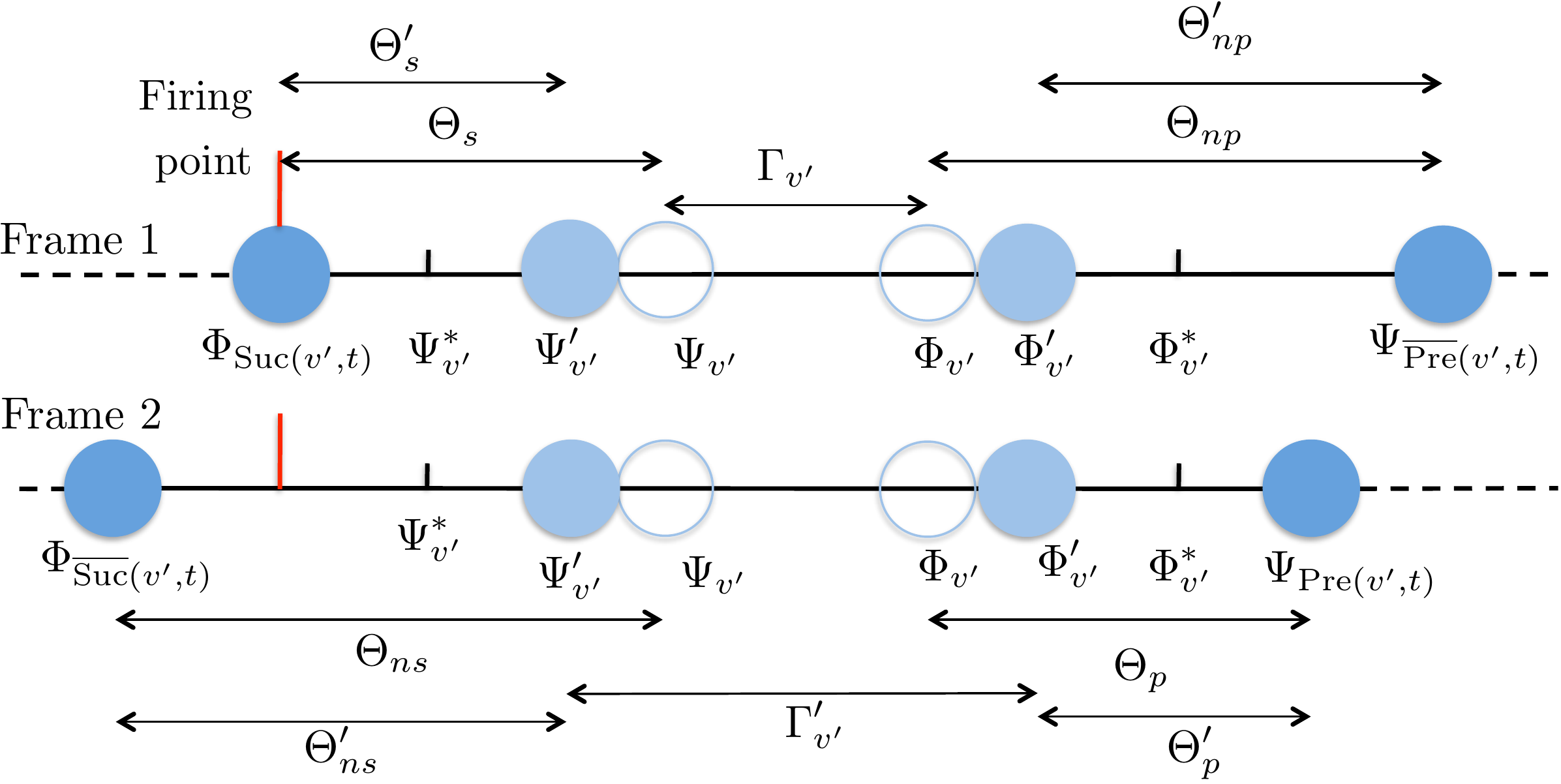}
\caption{Update of the shared nodes and representation of the variables $\Theta_s,\Gamma,\Theta_p,\Theta_{ns},\Theta_{np}$.}\label{fig:shared_node_update}
\end{figure}
Ignoring the dependence on time and using the apex to indicate the updated quantities we can write 
\begin{equation}
\left[\Theta_s',\Gamma',\Theta_p',\Theta_{ns}',\Theta_{np}'\right]^T=\hat{\boldsymbol{M}}_{{v'}}\left[\Theta_s,\Gamma,\Theta_p,\Theta_{ns},\Theta_{np}\right]^T
\end{equation}
where 
\begin{equation}
\hat{\boldsymbol{M}}_{v'}=\left[\begin{array}{ccccc}&&&0&0\\
&\Large{\mathbf{U}_{v'}}&&0&0\\
&&&0&0\\
-\beta\frac{D+\delta}{D+2\delta}&\beta\frac{\delta}{D+2\delta}&\beta\frac{\delta}{D+2\delta}&1&0\\
\beta\frac{\delta}{D+2\delta}&\beta\frac{\delta}{D+2\delta}&-\beta\frac{D+\delta}{D+2\delta}&0&1
\end{array}\right]
\end{equation}
where the first three rows are given by the submatrix $\mathbf{U}_{v'}$ defined in \eqref{eq:U_i_def}
and the last two rows are obtained from 
\begin{align}
\Theta_{ns}'&=\Theta_{ns}-\Theta_s+\Theta_s'\\
\Theta_{np}'&=\Theta_{np}-\Theta_p+\Theta_p'
\end{align}
From the structure of the matrix $\hat{\mathbf{M}}_{v'}$, we can conclude it has $3$ eigenvalues equal to $1$ and $2$ strictly smaller than $1$. In fact, the two right columns are two eigenvectors with eigenvalue $1$ and the top-right block $\mathbf{U}_{v'}$ is positive stochastic, therefore gives a unique eigenvalue equal to $1$ and others $2$ smaller than $1$.
It is then possible to verify the third eigenvector associated with eigenvalue $1$ has to satisfy the following constraint 
\begin{equation}\label{eq:constraint_shared_node}
\Theta_s^*=\Theta_p^*=\frac{\delta}{D_{v'}}\Gamma_{v'}^*
\end{equation}
The fixed point for $\bUpsilon_1(t),\bUpsilon_2(t)$ needs to be such with respect to both the linear map $\bUpsilon_c'=\tilde{\mathbf{M}}_c\bUpsilon_c$ (i.e the updates of all the local nodes) and the update of the shared node just described.
To be eigenvectors in \eqref{eq:tilde_Upsilon_star_c_1}-\eqref{eq:tilde_Upsilon_star_c_2} and meet the constraint $||\bUpsilon_c||_1=1$ we have 
\be 
\bUpsilon_c^\star=(1-\lambda_c)\tilde{\bUpsilon}^\star_{c,1}+\lambda_c\tilde{\bUpsilon}^\star_{c,2}~~\mbox{for}~~c=1,2
\ee 
with $\lambda_c\in (0,1)$.
However, since $\Gamma_{\pi_{|\Vcal_1|}^1}=\Gamma_{\pi_{|\Vcal_2|}^2}=\Gamma_i$ we have $\lambda_1=\lambda_2=\lambda$ and it is possible to verify there is only one value for $\lambda$ that satisfies the constraint in \eqref{eq:constraint_shared_node}, which correspond to the update of the shared node, according to the definition of $\Theta_s(t)$ and $\Theta_p(t)$ in \eqref{eq:Theta_s_def}-\eqref{eq:Theta_p_def}.
This value is given by
$\lambda=\min\limits_{c=1,2}\left\{\dfrac{\gamma^c}{\boldsymbol{D}^c}\right\}D_i$ 
which also tells us that both $\Pre({v'},t)$ and $\Suc({v'},t)$ at the steady state belong to the densest cluster, i.e. the cluster $c'=\argmin\limits_{c=1,2}\left\{\dfrac{\gamma^c}{\boldsymbol{D}^c}\right\}$.
\subsubsection*{Two or more shared nodes}
If we have $|\mathcal{S}_{12}|>1$ shared nodes among two cliques, we need to differentiate between
the case where shared nodes occupy or not occupy consecutive portions of the frame.
For the first case there is a straightforward extension from the previous argument.
Let us define the two system vectors $(\bUpsilon_1(t),\bUpsilon_2(t))$ such that the variables associated to the shared nodes occupy the last positions in both vectors.
Then for $c=1,2$ adjust the definition of $\tilde{\mathbf{M}}^c, \tilde{\boldsymbol{D}}^c, \tilde{\gamma}^c$ making the index $k$ ranging from 1 to $|\Vcal_c|-|\mathcal{S}_{12}|$ instead of $|\Vcal_c|-1$.
With a similar argument as before, we conclude that the fixed points for the two system vectors, taking into account the update of local nodes, lie in the space spanned by: 
\begin{align}\label{basis}
\tilde{\bUpsilon}^\star_{c,1}&=\dfrac{\tilde{\gamma}^{c}}{\tilde{\boldsymbol{D}}^{c}}(\delta,D_{\pi_1^c},\dots,\delta,D_{\pi_{|\Vcal_c|-|\mathcal{S}_{12}|}^c},\delta,0,\dots,0)^T\\
\tilde{\bUpsilon}^\star_{c,j}&=\mathbf{e}_{2(|\Vcal_c|-|\Scal_{12}|)+j}~~ \mbox{for}~~j=2,\dots,2|\Scal_{12}|
\end{align}
If we now consider the consecutive updates of the shared nodes we find the following additional for the fixed point where the node $\Theta_s$ is defined with respect to the last (in order of firing) of the shared nodes and $\Theta_p$ is defined with respect to the first of the shared nodes:
\begin{align}
\Theta_s^{\star}&=\Theta_{\pi_{|\Vcal_c|-|\Scal_c|+2}^c}^{\star}\label{eq:ms2c_constr_1}\\   
\Theta_p^{\star}&=\Theta_{\pi_{|\Vcal_c|}^c}^c\label{eq:ms2c_constr_2}\\
\Gamma_i^{\star}&=\frac{D_i}{\delta}\Theta^*_{\pi_{i+1}^c}~~\mbox{for}~i=|\Vcal_c|-|\Scal_c|+1,\dots,|\Vcal_c|\label{eq:ms2c_constr_3}
\end{align}
The fixed point for each of the system vectors ($c=1,2$) needs to have the following form 
\be 
\bUpsilon_c^\star=\left(1-\sum_{j=2}^{2|\mathcal{S}_{12}|}\lambda_{c,j}\right)\tilde{\bUpsilon}^\star_{c,1}+\sum_{j=2}^{2|\mathcal{S}_{12}|}\lambda_{c,j}\tilde{\bUpsilon}^\star_{c,j}
\ee
and once again we have $\lambda_{cj}=\lambda_j$ for $c=1,2$.
By imposing the constraints in \eqref{eq:ms2c_constr_1}-\eqref{eq:ms2c_constr_2}-\eqref{eq:ms2c_constr_3} we find there is a unique solution that respects the constraints and is consistent with the definition of the predecessor and successor nodes (relative to the firing order), i.e. 
\be
\lambda_j=\begin{cases}&\min\limits_{c=1,2}\left\{\dfrac{\gamma^c}{\boldsymbol{D}^c}\right\}\delta\\&\mbox{for}~~j=(2m+1),m=1,\dots,|\Scal_{12}|-1\\
&\min\limits_{c=1,2}\left\{\dfrac{\gamma^c}{\boldsymbol{D}^c}\right\}D_{\pi_{|\Vcal_c|-|\Scal_{12}|+m}^c}\\&\mbox{for}~~j=2m,m=1,\dots,|\Scal_{12}|
\end{cases}
\ee

For the case 2), the notation is significantly more complicated but the proof can follow the 
same conceptual steps.
Let us start by considering two shared nodes that are not consecutive in the firing order in at least one of the two cliques.
In each clique $c$ we will have two subsets of ${\cal L}_c$, namely ${\cal L}_{c,1}, {\cal L}_{c,2}$ such that the nodes in  each of the two subsets are all consecutive in the firing order.
Then we can define two different matrices 
\be 
\tilde{\mathbf{M}}^{c,j}\triangleq\prod\limits_{k:~\pi_k^c\in\Lcal_{c,j}}\boldsymbol{M}_{\pi_k^c}~~~
\ee
for $j=1,2$.
and also extend the definitions for $\tilde{\boldsymbol{D}}^{c,j},\tilde{\gamma}^{c,j}$, accordingly for $j=1,2$.
Reasoning as before we can see the fixed point $\bUpsilon_c^\star$ needs to mantain the proportionality between the nodes in $\Lcal_{c,1}$ and the ones in $\Lcal_{c,2}$ separately.
Furthermore, the constraint in \eqref{eq:constraint_shared_node} continues to hold and applies to both shared nodes.
Let us assume we are at the fixed point, and there is a certain ``distance'' (portion of the frame) between the two non-consecutive shared nodes, which is assigned in the two cliques to $\Lcal_{c,1}$ and $\Lcal_{c,2}$ respectively.
If we indicate by $i_1$ and $i_2$ the two shared nodes (i.e., the  counterclockwise order of nodes is $i_1,\Lcal_{c,1},i_2,\Lcal_{c,2}$) we find that at the fixed point, i.e., $\forall t>t^*$:
\begin{align}
&\Suc(i_1,t),\Pre(i_2,t)\in\Lcal_{1,1}\leftrightarrow\dfrac{\tilde{\gamma}^{1,1}}{\tilde{\boldsymbol{D}}^{1,1}}<\dfrac{\tilde{\gamma}^{2,1}}{\tilde{\boldsymbol{D}}^{2,1}}\label{eq:two_non_cons_con_1}\\
&\Suc(i_2,t),\Pre(i_1,t)\in\Lcal_{1,2}\leftrightarrow\dfrac{\tilde{\gamma}^{1,2}}{\tilde{\boldsymbol{D}}^{1,2}}<\dfrac{\tilde{\gamma}^{2,2}}{\tilde{\boldsymbol{D}}^{2,2}}\label{eq:two_non_cons_con_2}
\end{align}
In fact, if this is a fixed point, the distance between the two shared nodes is fixed and the other local nodes in each clique share that schedule proportionally, according to their demands.  This implies the guard-spaces in each of the subframes are smaller if the overall demand of that subset of nodes is bigger, giving \eqref{eq:two_non_cons_con_1}-\eqref{eq:two_non_cons_con_2}.
But from this condition, we have a unique fixed point given by the condition in \eqref{eq:constraint_shared_node} that forces the two guard-spaces for any of the shared nodes to be equal and therefore we have enough constraints on the coefficients $\lambda$ to write the two system vectors and find the schedule attainable.
By introducing for $j=1,2$ the cluster 
$c_j^*=\argmin\limits_{c=1,2}\frac{\tilde{\gamma}^{c,j}}{\tilde{\boldsymbol{D}}^{c,j}}$
we have that the scheduling attainable by the unique fixed point is such that ($c,j=1,2$)
\begin{equation}
\Gamma_{i}=\begin{cases}\frac{D_v}{D_{i_1}+D_{i_2}+\frac{\tilde{\boldsymbol{D}}^{c_1^*,1}}{\tilde{\gamma}^{c_1^*,1}}+\frac{\tilde{\boldsymbol{D}}^{c_2^*,2}}{\tilde{\gamma}^{c_2^*,2}}},~&\mbox{for}~~i={i_1,i_2}\\
\frac{\tilde{\gamma}^{c,j}}{\tilde{\boldsymbol{D}}^{c,j}}\frac{\tilde{\boldsymbol{D}}^{c_j^*,j}}{{\tilde{\gamma}^{c_j^*,j}}\left(D_{i_1}+D_{i_2}+\frac{\tilde{\boldsymbol{D}}^{c_1^*,1}}{\tilde{\gamma}^{c_1^*,1}}+\frac{\tilde{\boldsymbol{D}}^{c_2^*,2}}{\tilde{\gamma}^{c_2^*,2}}\right)}D_v~&\mbox{for}~~i\in\Lcal_{c,j}
\end{cases}
\end{equation}
and the guardspaces before the first node in $\Lcal_{c,j}$ ($c,j=1,2$), after the last one and in between are all equal to 
\be 
\delta_{\Lcal_{c,j}}\triangleq\frac{\tilde{\gamma}^{c,j}}{\tilde{\boldsymbol{D}}^{c,j}}\frac{\tilde{\boldsymbol{D}}^{c_j^*,j}}{{\tilde{\gamma}^{c_j^*,j}}\left(D_{i_1}+D_{i_2}+\frac{\tilde{\boldsymbol{D}}^{c_1^*,1}}{\tilde{\gamma}^{c_1^*,1}}+\frac{\tilde{\boldsymbol{D}}^{c_2^*,2}}{\tilde{\gamma}^{c_2^*,2}}\right)}\delta
\ee
For the extension to an arbitrary number of shared nodes $|S_{12}|$ one should consider the shared nodes $i_j$ in $S_{12}$ and the subsets of local nodes $\Lcal_{c,j}$ ($j=1,2,\dots,|S_{12}|$) and apply the same argument to each subset of consecutive nodes to find the unique fixed point.
\vspace{-.2cm}
\subsection{Proof of Proposition \ref{set_multiclique_PFS}}\label{app:proof_set_fixed_points_multiclique}
The proposition is proved by considering a generic subset of consecutive local nodes in any possible clique i.e., a generic $\Lcal_{c,j}$ (see notation introduced in the previous proof, Appendix \ref{proof_PFS_convergence}).
The matrix $\tilde{\mathbf{M}}^{c,j}$ will then have an eigenvector associated with eigenvalue $1$ where the proportional fairness is enforced between the nodes in $\Lcal_{c,j}$ and other several eigenvectors associated with eigenvalue $1$ where the variables $\Gamma_i,~i\in\Lcal_{c,j}$ are equal to 0.
Therefore, all the possible fixed points for the algorithm will respect the {\it partial proportional fairness criterion} as per Definition \ref{def:partial}.
To prove that for more than two cliques we can have in general a set of non-isolated fixed points let us consider the sample topology with $3$ clusters ($c=1,2,3$) and the following properties: $|\Vcal_1|=|\Vcal_3|=4$, $|\Vcal_2|=3$, $|\Scal_{12}|=|\Scal_{23}|=1$, $\Scal_{13}=\emptyset$, $\delta=1, D_v=D=4~~\forall v\in\Vcal$. Then we have that the following configuration (the order of nodes in $\bUpsilon_2^\star$ is $\Scal_{12},\Lcal_2,\Scal_{23}$)
\begin{align*}
\bUpsilon_1^\star&=\bUpsilon_3^\star=\left(\frac{1}{20},\frac{1}{5},\frac{1}{20},\frac{1}{5},\frac{1}{20},\frac{1}{5},\frac{1}{20},\frac{1}{5}\right)^T\\
\bUpsilon_2^\star&=\left(\theta,\frac{1}{5},\frac{1}{10}-\frac{1}{6}\theta,\frac{2}{5}-\frac{2}{3}\theta,\frac{1}{10}-\frac{1}{6}\theta,\frac{1}{5}\right)^T
\end{align*}
is a fixed point for any $\theta\in\left[\frac{1}{20},\frac{3}{10}\right]$, since for all these values $\Pre(i,t)$ and $\Suc(i,t)$ for $i\in\Scal_{13}\cup\Scal_{23}$ continue to remain in $\Vcal_1$ or $\Vcal_3$
and the space left for the only node in $\Lcal_2$ is proportionally distributed between the time schedule assigned to that node and the two guardspaces before and after the schedule.
\vspace{-.2cm}
\subsection{Proof of Proposition \ref{unique_multiclique_PFS}}\label{app:proof_unique_fixed_point_multiclique}
First, recalling the order of the partitions $\mathcal{A}_c$ introduced in \eqref{cluster_sorting}, $\mathcal{A}_1$ is the partition with the highest demand. One can apply Theorem \ref{2_clique_PFS} to each pair of cliques $({\cal V}_1,{\cal V}_c)$ with $1<c\leq |{\cal C}|$ and obtain a possible assignment. In fact, if there are nodes shared among one of these pairs that belong also to other cliques, they have to belong to one of these two partitions by Assumption \ref{ass:demand}. In this case, we can see there is a unique fixed point by having the partitions $\Acal_1,\Acal_c$ assigning the unique schedule obtainable by Theorem \ref{2_clique_PFS} and then apply the argument to each pair of cliques $({\cal V}_2,{\cal V}_c)$ with $2<c\leq |{\cal C}|$. The only additional case we have to consider is that nodes that belong to the pair $({\cal V}_2,{\cal V}_c)$ also belong to ${\cal V}_1$; but these nodes schedules have been already set by the pair $({\cal A}_1,{\cal A}_c)$. For the generic pair case ($\Acal_2,\Acal_c$) Theorem \ref{2_clique_PFS} applies directly if we consider the quantity ${\T}_2$ defined in \eqref{eq:portion_frame_available} in subsection \ref{subsec:fixed_points}. The procedure is then iterated for every pair of cliques until every conflict has been considered, and this proves the Proposition.
\vspace{-.2cm}
\subsection{Proof of Corollary \ref{corollary_line_star}}\label{corollary_PFS}
For a star network, the central node (say $1$) can hear all the rest (i.e. $v=2,\dots,n+1$). There will be $n$ cliques ${\cal V}_c=\{1,c+1\}$ with $c=1,\dots,n$. We simply have to enumerate nodes from $2$ to $n+1$ in decreasing demand order to have ${\cal A}_1=\{1,2\}$ and ${\cal A}_c=\{c+1\}$ for $c=2,\dots,n$. Then Proposition \ref{unique_multiclique_PFS} applies directly and there will be a unique fixed point that respects Definition \ref{def:global} in light of Theorem \ref{2_clique_PFS}.
For a line network, Assumption \ref{ass:demand} is not needed, since every node can belong to no more than two cliques, and then we can just apply Theorem \ref{2_clique_PFS} starting from the two cliques that contain the highest demand node, and then repeat a similar argument as in the proof of Proposition \ref{unique_multiclique_PFS} until we reached the edges of the line.
\end{appendix}
\bibliographystyle{IEEEtran}
\bibliography{Lorenzo_Bibliography}
\begin{IEEEbiography}
[{\includegraphics[width=1in,height=1.25in,clip,keepaspectratio]{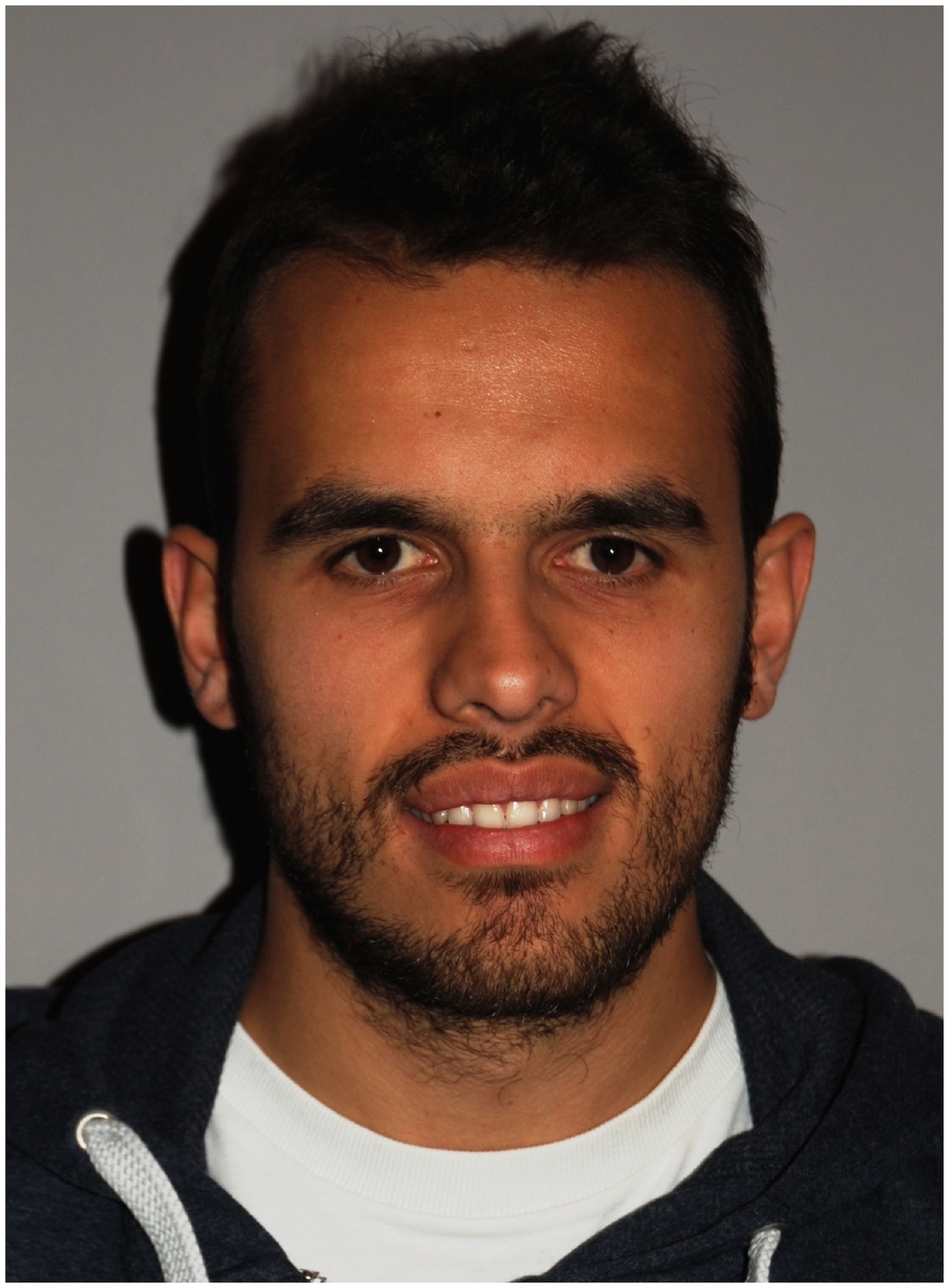}}]
{Lorenzo Ferrari} (S'14) is currently a PhD student in Electrical Engineering at Arizona State University.
Prior to that, he received his B.Sc. and M.Sc degree in Electrical Engineering from University of Modena, Italy in 2012 and 2014 respectively. His research interests lie in the broad area of wireless communications and signal processing. He has received the IEEE SmartGridComm 2014 Best Student Paper Award for the paper ``The Pulse Coupled Phasor Measurement Unit''.
\end{IEEEbiography}
\begin{IEEEbiography}
[{\includegraphics[width=1in,height=1.25in,clip,keepaspectratio]{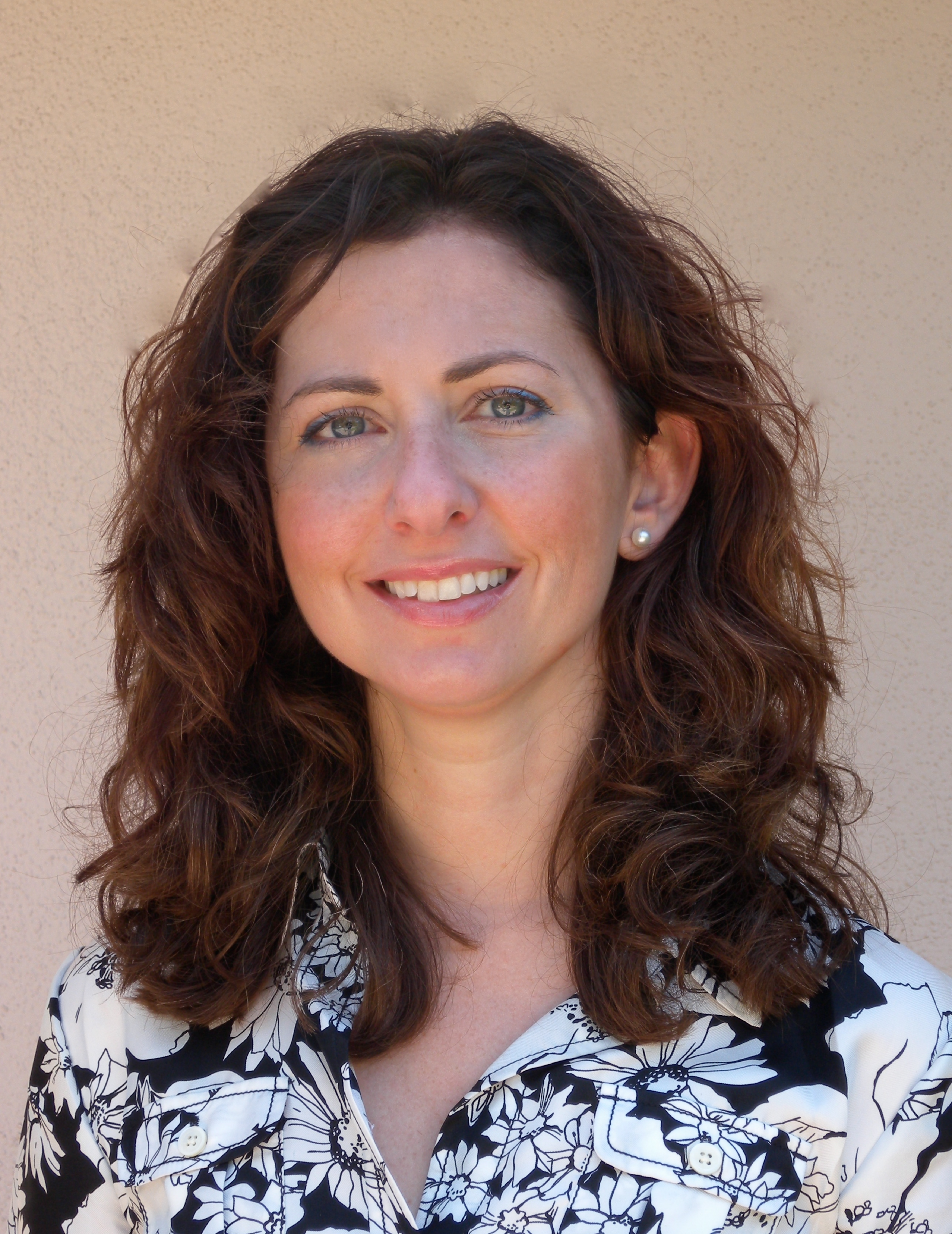}}]
{Anna Scaglione} (F'11) (M.Sc.'95, Ph.D. '99) is currently a professor in electrical and computer engineering at Arizona State University. She was Professor of Electrical Engineering previously at UC Davis (2010-2014), Associate Professor at UC Davis 2008-2010 and at Cornell (2006-2008), and Assistant Professor at Cornell (2001-2006) and at the University of New Mexico (2000-2001).
Dr. Scaglione's expertise is in the broad area of statistical signal processing for communication, electric power systems and networks. Her current research focuses on studying and enabling decentralized learning and signal processing in networks of sensors.
Dr. Scaglione was elected an IEEE fellow in 2011. She served as Associate Editor for the IEEE Transactions on Wireless Communications and on Signal Processing, as EiC of the IEEE Signal Processing letters. She was member of the Signal Processing Society Board of Governors from 2011 to 2014. She received the 2000 IEEE Signal Processing Transactions Best Paper Award and more recently was honored for the 2013, IEEE Donald G. Fink Prize Paper Award for the best review paper in that year in the IEEE publications, her work with her student earned  2013 IEEE Signal Processing Society Young Author Best Paper Award (Lin Li).
\end{IEEEbiography}
\begin{IEEEbiography}
[{\includegraphics[width=1in,height=1.25in,clip]{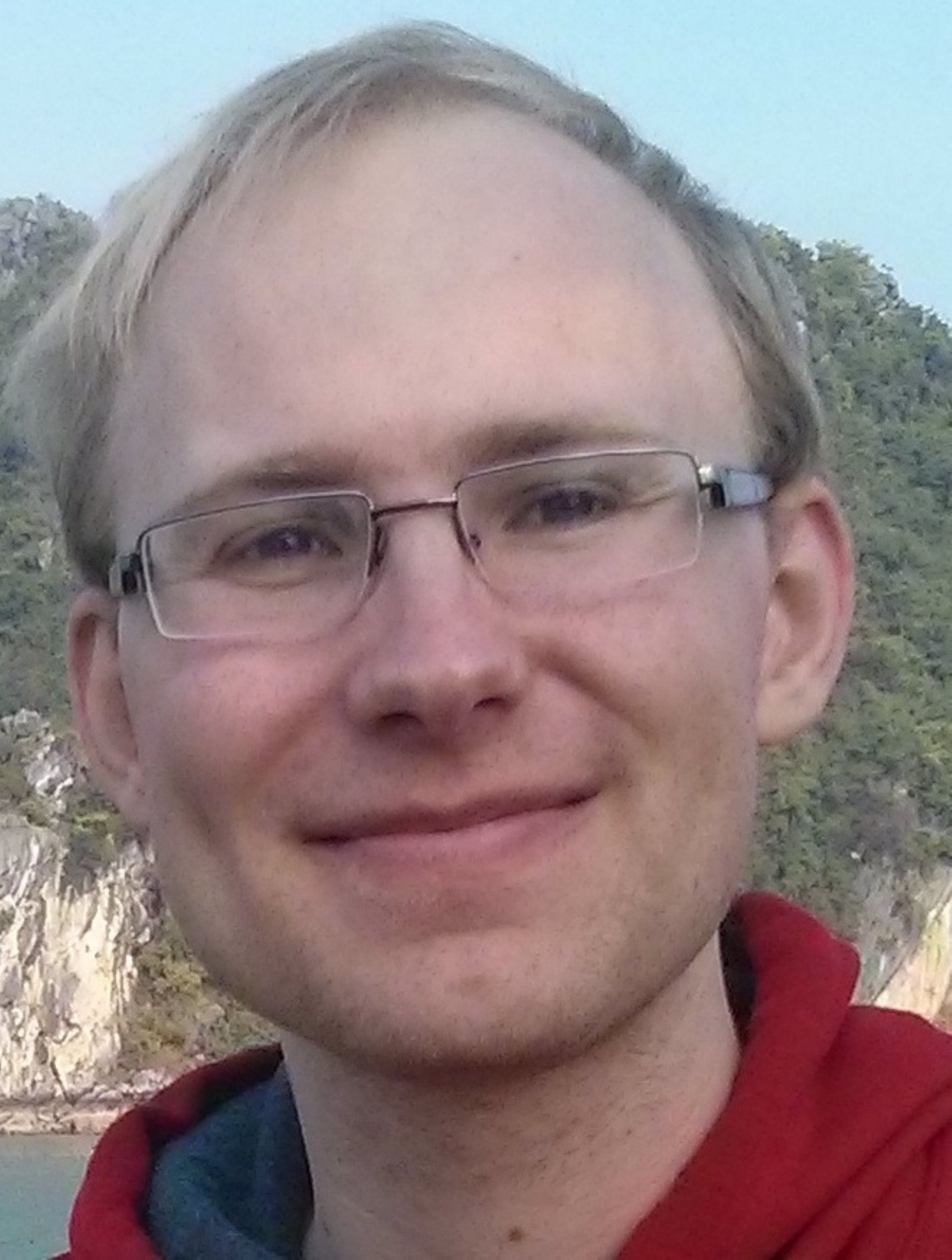}}]
{Reinhard Gentz} (S'15) is currently pursuing a PhD degree in Electrical Engineering at Arizona State University. Prior, he received his B.Sc and M.Sc degree in Electrical Engineering from the Karlsruhe Institute of Technology, Germany in 2010 and 2014 respectively. His research interest are in Embedded Devices, Security, communication technologies and distributed protocols. He has received the IEEE SmartGridComm 2014 Best Student Paper Award for the paper ``The Pulse Coupled Phasor Measurement Unit''.
\end{IEEEbiography}
\begin{IEEEbiography}
[{\includegraphics[width=1in,height=1.25in,clip]{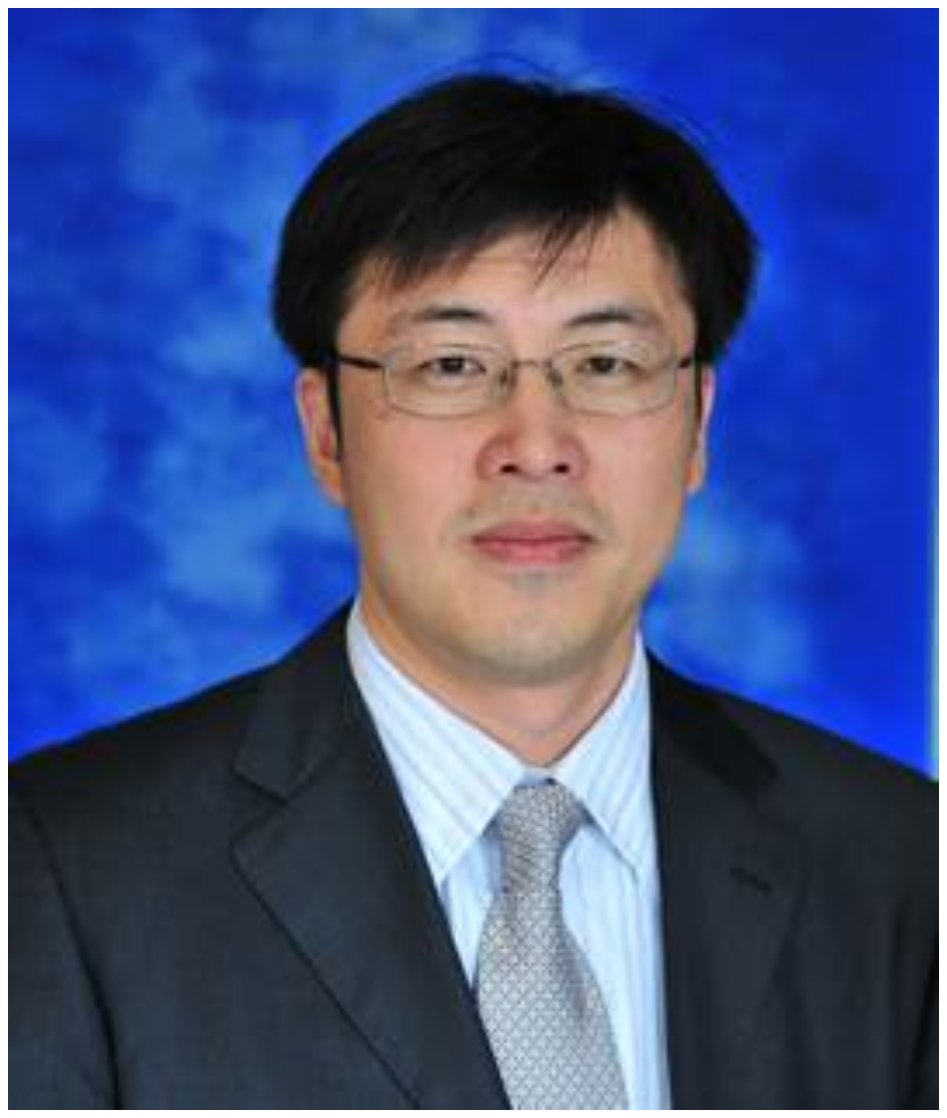}}]
{Y.-W. Peter Hong} (S'01--M'05--SM'13) received his B.S. degree from National Taiwan University, Taipei, Taiwan, in 1999, and his Ph.D. degree from Cornell University, Ithaca, NY, in 2005, both in electrical engineering. He joined the Institute of Communications Engineering and the Department of
Electrical Engineering at National Tsing Hua University, Hsinchu,
Taiwan, in Fall 2005, where he is now a Full Professor. His research interests include physical layer secrecy, cooperative communications, distributed signal processing for sensor networks, and cross-layer designs for wireless networks.
Dr. Hong received the IEEE ComSoc Asia-Pacific Outstanding Young Researcher Award in 2010, the Y. Z. Hsu Scientific Paper Award and the National Science Council (NSC) Wu Ta-You Memorial Award in 2011, and the Chinese Institute of Electrical Engineering (CIEE) Outstanding Young Electrical Engineer Award in 2012. His coauthored paper also received the Best Paper Award from the Asia-Pacific Signal and Information Processing Association Annual Summit and Conference (APSIPA ASC) in 2013. Dr. Hong served as an Associate Editor for IEEE Transactions on Signal Processing (2010-2014) and on Information Forensics and Security (2011-present).
\end{IEEEbiography}
\end{document}